\documentclass[journal,twocolumn]{IEEEtran}
\IEEEoverridecommandlockouts

\usepackage{ifthen}

%% 'long' or 'short' for version
\providecommand{\VersionLength}{long}
%% '1' or '2' for single- or double-column format (change also in the first line of this file!!)
\providecommand{\ColumnNum}{2}

\newcommand{\col}{\ifthenelse{\equal{\ColumnNum}{2}}}
\newcommand{\ver}{\ifthenelse{\equal{\VersionLength}{long}}}

\usepackage[tbtags]{amsmath} 
\usepackage{amssymb}  
\usepackage{verbatim}
\usepackage{amsxtra}
\usepackage{enumerate}
\usepackage{amsfonts}
\usepackage{color}
\usepackage{cite}
\usepackage{float}
\usepackage{epsfig, graphicx, psfrag, subfig}
\usepackage{units}
\usepackage[table]{xcolor}
\usepackage{calc}
\interdisplaylinepenalty=2500
\usepackage{amsfonts,amssymb}
\usepackage{epstopdf}
\usepackage[mathcal]{euscript}
\usepackage{multirow}
\usepackage{arydshln}
\usepackage{url}
\usepackage{caption}
\usepackage{cite}
\usepackage{mathtools}
\mathtoolsset{showonlyrefs=true}
\usepackage{pstool}

%%%%%%%%%%%%%%%%%%%%%%%%%%%%%%%%%%%%%%%%%%%%%%%%%%%%%%%%%%%%%%%%%%%%%%%%%%%%
%% Caligraphic lower case letters
\DeclareFontFamily{OT1}{pzc}{}
\DeclareFontShape{OT1}{pzc}{m}{it}{<-> s * [1.10] pzcmi7t}{}
\DeclareMathAlphabet{\mathpzc}{OT1}{pzc}{m}{it}
%%%%%%%%%%%%%%%%%%%%%%%%%%%%%%%%%%%%%%%%%%%%%%%%%%%%%%%%%%%%%%%%%%%%%%%%%%%%

% THEOREM ENVIRONMENTS

\newtheorem{thm}{Theorem}
\newtheorem{lemma}{Lemma}
\newtheorem{remark}{Remark}

\newtheorem{corol}{Corollary}

\newtheorem{property}{Property}
\newtheorem{defn}{Definition}
\newtheorem{exmpl}{Example}

\newtheorem{assert}{Assertion}

% REFERENCE ENVIRONMENTS
\providecommand{\thrmref}[1]{Theorem~\ref{#1}}
\providecommand{\thmref}[1]{Theorem~\ref{#1}}
\providecommand{\exref}[1]{Example~\ref{#1}}
\providecommand{\defnref}[1]{Definition~\ref{#1}}
\providecommand{\secref}[1]{Section~\ref{#1}}
\providecommand{\lemref}[1]{Lemma~\ref{#1}}

\providecommand{\propertyref}[1]{Property~\ref{#1}}
\providecommand{\remref}[1]{Remark~\ref{#1}}
\providecommand{\figref}[1]{Figure~\ref{#1}}
\providecommand{\colref}[1]{Corollary~\ref{#1}}
\providecommand{\appref}[1]{Appendix~\ref{#1}}

\providecommand{\assertref}[1]{Assertion~\ref{#1}}

\newcommand{\ie}{i.e.}

\DeclareMathOperator{\R}{Re}
\DeclareMathOperator{\I}{Im}

\newcommand{\bm}[1]{\mbox{\boldmath{$#1$}}}

\newcommand{\SNR}{\text{SNR}}

\newcommand{\eff}{{\text{eff}}}

\newcommand{\mI}{\mathcal{I}}

\newcommand{\mT}{\mathcal{T}}

\newcommand{\mA}{\mathcal{A}}

\newcommand{\mV}{\mathcal{V}}

%{\varepsilon}
\newcommand{\Comment}[1]{}
\newcommand{\old}[1]{}
\newcommand{\rem}[1]{}

\newcommand{\inner}[2]{\left\langle #1, #2 \right\rangle}

\newcommand{\tQ}{\tilde Q}

\newcommand{\tH}{\tilde H}

\newcommand{\ty}{\tilde y}
\providecommand{\tx}{\tilde x}

\newcommand{\tR}{\tilde R}

\newcommand{\tA}{\tilde A}

\newcommand{\by}{\bm y}

\newcommand{\bx}{{\bm x}}

\newcommand{\bv}{{\bm v}}
\newcommand{\bz}{{\bm z}}
\newcommand{\ba}{{\bm a}}
\newcommand{\bb}{{\bm b}}
\newcommand{\br}{{\bm r}}

\newcommand{\bmu}{{\bm \mu}}

\newcommand{\bsigma}{{\bm \sigma}}

\providecommand{\bx}{{\bf x}}
\providecommand{\by}{{\bf y}}
\providecommand{\bv}{{\bf v}}
\providecommand{\bz}{{\bf z}}

\providecommand{\bp}{{\bf p}}

\newcommand{\cN}{{\mathcal N}}

\newcommand{\Norm}[1]{\left\| #1 \right\|}

\providecommand{\comment}[1]{}

\providecommand{\norm}[1]{\Norm{#1}}

\newcommand{\beqn}[1]{\begin{eqnarray}\label{#1}}
\newcommand{\eeqn}{\end{eqnarray}}
\newcommand{\beq}[1]{\begin{align}\label{#1}}
\newcommand{\eeq}{\end{align}}

\newcommand{\tby}{\tilde{\by}}
\newcommand{\tbx}{\tilde{\bx}}

\newcommand{\tba}{\tilde{\ba}}
\newcommand{\tbb}{\tilde{\bb}}
\newcommand{\cC}{{\cal C}}

\providecommand{\qqquad}{\quad \qquad}
\providecommand{\qqqquad}{\qquad \qquad}

\providecommand{\submat}[3]{#1 \left\lfloor #2, \, #3 \right\rceil}

\providecommand{\blkmat}[2]{\left\lceil   {#1}  \right\rfloor_{\otimes #2}}

\newcommand{\blath}[1]{\mbox{#1-th}}
\newcommand{\Capacity}[1]{I(H_{#1},C_\bx)}
\newcommand{\CC}{C_\bx}
\newcommand{\CCC}{C_\bx}
\newcommand{\UUU}{\mathcal{U}}
\newcommand{\VVV}{\mathcal{V}}
\newcommand{\GGG}{\mathcal{G}}
\newcommand{\HHH}{\mathcal{H}}
\newcommand{\QQQ}{\mathcal{Q}}
\newcommand{\OOO}{\mathcal{O}}
\newcommand{\AAA}{\mathcal{A}}
\newcommand{\TTT}{\mathcal{T}}
\newcommand{\RRR}{\mathcal{R}}
\newcommand{\SSS}{\mathcal{S}}
\providecommand{\trace}[1]{\mathop{\mathrm{tr}}\left( #1 \right)}
\newcommand{\adj}{\mathop{\mathrm{adj}}}

\DeclareMathAlphabet{\mathbsf}{OT1}{cmss}{bx}{n}% bold sans serif
\DeclareMathAlphabet{\mathssf}{OT1}{cmss}{m}{sl}% slanted sans serif
\DeclareMathAlphabet{\mathcsf}{OT1}{cmss}{sbc}{n}% condensed sans serif

\DeclareSymbolFont{bsfletters}{OT1}{cmss}{bx}{n}
\DeclareSymbolFont{ssfletters}{OT1}{cmss}{m}{n}
\DeclareMathSymbol{\bsfGamma}{0}{bsfletters}{'000}
\DeclareMathSymbol{\ssfGamma}{0}{ssfletters}{'000}
\DeclareMathSymbol{\bsfDelta}{0}{bsfletters}{'001}
\DeclareMathSymbol{\ssfDelta}{0}{ssfletters}{'001}
\DeclareMathSymbol{\bsfTheta}{0}{bsfletters}{'002}
\DeclareMathSymbol{\ssfTheta}{0}{ssfletters}{'002}
\DeclareMathSymbol{\bsfLambda}{0}{bsfletters}{'003}
\DeclareMathSymbol{\ssfLambda}{0}{ssfletters}{'003}
\DeclareMathSymbol{\bsfXi}{0}{bsfletters}{'004}
\DeclareMathSymbol{\ssfXi}{0}{ssfletters}{'004}
\DeclareMathSymbol{\bsfPi}{0}{bsfletters}{'005}
\DeclareMathSymbol{\ssfPi}{0}{ssfletters}{'005}
\DeclareMathSymbol{\bsfSigma}{0}{bsfletters}{'006}
\DeclareMathSymbol{\ssfSigma}{0}{ssfletters}{'006}
\DeclareMathSymbol{\bsfUpsilon}{0}{bsfletters}{'007}
\DeclareMathSymbol{\ssfUpsilon}{0}{ssfletters}{'007}
\DeclareMathSymbol{\bsfPhi}{0}{bsfletters}{'010}
\DeclareMathSymbol{\ssfPhi}{0}{ssfletters}{'010}
\DeclareMathSymbol{\bsfPsi}{0}{bsfletters}{'011}
\DeclareMathSymbol{\ssfPsi}{0}{ssfletters}{'011}
\DeclareMathSymbol{\bsfOmega}{0}{bsfletters}{'012}
\DeclareMathSymbol{\ssfOmega}{0}{ssfletters}{'012}
\newcommand{\UUUU}[2]{{  \left(\UUU_{#1}^{\left({#2}\right)}\right)^\dagger    }}
\newcommand{\Ul}[2]{{  \left(U_{#1}^{\left({#2}\right)}\right)^\dagger    }}
\newcommand{\VVVV}[1]{{  \VVV^{\left({#1}\right)}    }}
\newcommand{\Vl}[1]{{  V^{\left({#1}\right)}    }}
\providecommand{\subm}[2]{\left\lfloor #1, \, #2 \right\rceil}
\providecommand{\submm}[2]{\left\lfloor #1:\, #2 \right\rceil}
\providecommand{\submatt}[3]{#1 \left\lfloor #2: \, #3 \right\rceil}

\providecommand{\extt}[2]{\mI_{#1}^{\left[{#2}\right]}}
\providecommand{\embb}[3]{I_{#1}\left[#2\,;#3\right]}
\providecommand{\subunion}[3]{\bigcup_{#1}\subm{#2}{#3}}
\providecommand{\subunionn}[3]{\bigcup_{#1}\submm{#2}{#3}}
\newcommand{\Cptp}{C}
\newcommand{\nnn}{\tilde{n}}

\usepackage[T1]{fontenc}
\usepackage{frcursive}
\usepackage{calligra}
\newcommand{\setfont}[2]{{\fontfamily{#1}\selectfont #2}}

\newcommand{\xxx}{\text{\setfont{frc}{x}}}
\newcommand{\yyy}{\text{\setfont{frc}{y}}}
\newcommand{\zzz}{\text{\setfont{frc}{z}}}
\newcommand{\vvv}{\text{\setfont{frc}{v}}}

\providecommand{\diff}[1]{\ensuremath{\operatorname{d}\!{#1}}}
\usepackage{chngcntr}
\counterwithin{paragraph}{section}

\allowdisplaybreaks

\begin{document}

\title{Joint Unitary Triangularization for Gaussian Multi-User MIMO Networks}
\ver{
\author{Anatoly Khina, Idan Livni, Ayal Hitron, and Uri Erez \\ \emph{Technical Report, Dept.\ of EE-Systems, Tel Aviv University, March 27, 2015}

    \thanks{
      The authors are with the Department of Electrical Engineering,
      Tel Aviv university, Israel, email: \mbox{\{anatolyk,idanlivn,ayal,uri\}@eng.tau.ac.il}
        Parts of this work were presented at the International Symposium on Information Theory (ISIT) 2011
        in Saint Petersburg, Russia,
        and at the International Symposium on Information Theory (ISIT) 2012 in
        Cambridge, MA, USA.

        The work of Anatoly Khina was supported in part by the Feder family award, the David and Paulina Trotsky Foundation, and by the Clore Israel Foundation.
        The work of Anatoly Khina and Ayal Hitron was supported in part by the Yitzhak and Chaya
        Weinstein Research Institute for Signal Processing.
        The work of Uri Erez was supported in part by the Israel Science Foundation
under Grant No. 1557/12.
    }
}
}{
\author{Anatoly Khina, Idan Livni, Ayal Hitron, and Uri Erez, \emph{Member, IEEE}

    \thanks{
Copyright (c) 2014 IEEE. Personal use of this material is permitted.  However, permission to use this material for any other purposes must be obtained from the IEEE by sending a request to pubs-permissions@ieee.org.
      The authors are with the Department of Electrical Engineering,
      Tel Aviv university, Israel, email: \mbox{\{anatolyk,idanlivn,ayal,uri\}@eng.tau.ac.il}
        Parts of this work were presented at the International Symposium on Information Theory (ISIT) 2011
        in Saint Petersburg, Russia,
        and at the International Symposium on Information Theory (ISIT) 2012 in
        Cambridge, MA, USA.

	The work of Anatoly Khina was supported in part by the Feder family award, the David and Paulina Trotsky Foundation, and by the Clore Israel Foundation.
	The work of Anatoly Khina and Ayal Hitron was supported in part by the Yitzhak and Chaya
	Weinstein Research Institute for Signal Processing.
	The work of Uri Erez was supported in part by the Israel Science Foundation
under Grant No. 1557/12.
    }
}

}

\maketitle

\begin{abstract}
The problem of transmitting a common message to multiple users over the Gaussian multiple-input multiple-output broadcast channel is considered,
where each user is equipped with an arbitrary number of antennas.
A closed-loop scenario is assumed, for which a practical capacity-approaching scheme is developed.
By applying judiciously chosen unitary operations at the transmit and receive nodes, the channel matrices are triangularized so that the resulting matrices have equal diagonals, up to a possible multiplicative scalar factor. This, along with the utilization of successive interference cancellation, reduces the coding and decoding tasks to those of coding and decoding over the single-antenna additive white Gaussian noise channel. Over the resulting effective channel, any off-the-shelf code may be used.
For the two-user case, it was recently shown that such joint unitary triangularization is always possible.
In this paper, it is shown that for more than two users, it is necessary to carry out the unitary linear processing jointly over multiple channel uses, i.e., space--time processing is employed.
It is further shown that exact triangularization, where all resulting diagonals are equal, is still not always possible,
and appropriate conditions for the existence of such are established for certain cases.
When exact triangularization is not possible, an asymptotic construction is proposed,
that achieves the desired property of equal diagonals up to edge effects that can be made arbitrarily small, at the price of processing a sufficiently large number of channel uses together.

\begin{IEEEkeywords}
Matrix decompositions, space--time modulation, common-message broadcast, physical-layer multicast, Gaussian MIMO, successive interference cancellation.
\end{IEEEkeywords}

\end{abstract}

\section{Introduction}
\label{s:intro}
\PARstart{A} {\MakeUppercase recurring} theme in digital communications is the use of a standard ``off-the-shelf'' coding module in
combination with appropriate linear pre/post processing which is tailored to the specific channel model. Such
methods are appealing due to their low complexity of implementation as well as conceptually, since the tasks 
of coding and modulation are effectively decoupled.

The simplest example of the decoupling approach is provided by the singular-value decomposition (SVD) in communication 
for single-user (SU) Gaussian multiple-input multiple-output (MIMO) channels.
In this case, the MIMO channel is transformed into diagonal form, corresponding to parallel scalar channels.
If one allows pre- or post-interference cancellation, a much broader class of decompositions may be employed.
For SU MIMO communication, this includes the widely used schemes based on the QR decomposition, namely, V-BLAST/GDFE \cite{Wolniansky_V-BLAST,CioffiForneyGDFE}.
Further applicable decompositions that allow to approach capacity via decoupling, 
include the geometric-mean decomposition (GMD) \cite{GMD,QRS,UnityTriangularization} for the SU case, 
and its generalization~--- block diagonal GMD \cite{BlockDiagonalGMD}~--- for private-message broadcast (BC).

In the present work, we aim to extend the decoupling approach beyond the single-user Gaussian MIMO channel, to the more general problem of common-message BC.
That is, we consider a scenario in which a transmitter, equipped with multiple antennas, wishes to send simultaneously the same (``common'') message to multiple users, each of which equipped with (any number of) multiple antennas.

The capacity of this scenario, referred to as common-message BC (or ``physical-layer multicast''), is well known, and is given by the compound channel capacity  \cite{Dobrushin59,BlackwellBreimanThomasian59,Wolfowitz60}.
Unfortunately, whereas for the problem of transmitting private messages over the Gaussian MIMO BC channel, capacity can be achieved via decoupling (in conjunction with dirty-paper coding; see, e.g., \cite{Tejera05,BlockDiagonalGMD}), 
practical schemes that attain an analogous result for the common-message counterpart are not hitherto known.

Beyond being important in its own right, common-message BC  
serves as the basis for various communication settings,
since many communication scenarios can be transformed into an equivalent MIMO common-message BC setting.
\ver{
This is the case for rateless coding over SISO and MIMO Gaussian channels \cite{JET:RatelessITW2011} (see also \secref{ss:multi_user_rateless}), permuted channels \cite{JET:Permuted_ISIT2012} (see also \secref{ss:multi_user_perms}), half- and full-duplex SISO and MIMO relaying \cite{JET:RatelessITW2011,JET:FullDuplexRelayITW2012}, two-way MIMO relaying \cite{JET:TwoWayRelayISIT11,JET:TwoWayRelay:ISITA2014} and many others.
}{
This is the case for rateless coding over SISO and MIMO Gaussian channels \cite{JET:RatelessITW2011,ayal_thesis},
permuted channels \cite{JET:Permuted_ISIT2012} (see also \secref{ss:multi_user_perms}), half- and full-duplex SISO and MIMO relaying \cite{JET:RatelessITW2011,JET:FullDuplexRelayITW2012}, two-way MIMO relaying \cite{JET:TwoWayRelayISIT11,JET:TwoWayRelay:ISITA2014} and many others.
}

Extension of the decoupling approach, which is at the heart of single-user scalar systems, to the multiple-user MIMO common-message BC problem requires, however, overcoming a major hurdle: 
Not only is simultaneous diagonalization impossible, even the existence of appropriate joint triangularization for two users was not known to be possible until recently \cite{STUD:SP}.

Hence, different practical approaches have been proposed over the years for the problem of conveying a common message over Gaussian MIMO broadcast channels.
However, none of these approaches is capacity achieving in general, even for simple cases.
To illustrate this, we consider a simple three-user example.

\begin{exmpl}[Degrees-of-freedom mismatch] \label{DOF-example}
 Consider the following three-user channel:\footnote{Throughout this paper, vectors are denoted by boldface lower case letters, and matrices are denoted by upper case letters. Logarithms are taken to base $2$ and rates are given in bits.}
\begin{align}
    \by_k = H_k \bx + \bz_k \,, \qquad k=1,2,3 \,,
\end{align}
where $\bz_k$ is an additive white Gaussian noise (AWGN), specifically we assume to be circularly-symmetric Gaussian noise  with unit power for each element $\cC \cN(0,I)$, $\bx$ is the channel vector subject to an average power constraint $P$,
$H_k$ are the complex-valued channel matrices
\begin{align*}
    H_1 = \left(
            \begin{array}{cc}
              \alpha_1 & 0 \\
            \end{array}
          \right) \col{}{\,},
    \col{\:}{\quad}
    H_2 = \left(
            \begin{array}{cc}
              0 & \alpha_1 \\
            \end{array}
          \right) \col{}{\,},
    \col{\:}{\quad}
    H_3 = \left(
            \begin{array}{cc}
              \alpha_2 & 0 \\
              0 & \alpha_2 \\
            \end{array}
          \right)
    \col{}{\,},
\end{align*}
and $\alpha_1$ and $\alpha_2$ are chosen such that the WI capacities of all three channels are equal, viz.
\begin{align*}
  \Cptp^{\text{common}}=\Cptp_{\text{WI}}
  \triangleq
  \log(1 + |\alpha_1|^2 P/2)
  =
  2 \log(1 + |\alpha_2|^2 P/2) \,.
\end{align*}
This example models a three-user ``degrees-of-freedom-mismatch'' scenario, in which the first two users are equipped with a single antenna each (i.e., they have only one degree of freedom), whereas the third user is equipped with two antennas (i.e., has two degrees of freedom).

\end{exmpl}

Of course, from a purely information-theoretic viewpoint, a random i.i.d. Gaussian codebook over time and space is simultaneously good (i.e., capacity achieving) for all three users in the example. However when considering \emph{practical} codes, the situation is very different. 

To the best of our knowledge, known \emph{practical} schemes are limited to the smallest number of degrees of freedom (``multiplexing gain'') of the different users,
or incorporate time- or frequency-sharing, which again lose degrees of freedom.
Thus, these schemes achieve only a fraction of the available degrees of freedom.
Alternatively, maximal degrees-of-freedom open-loop techniques may be used (e.g., in the case of two transmit antennas as in the example, golden code modulation \cite{GoldenCodes:YaoWornell,GoldenCodes:BelfioreViterbo,GoldenCodes:Oggier,GoldenCodes:Kumar}). However, these are far from capacity-achieving at low to moderate transmission rates. 

By using single-stream communication, in the high SNR regime, the third user
is able to achieve only half of its individual capacity.
On the other hand, transmitting two streams across the two transmit antennas,
results in a loss of half of the capacity of users $1$ and $2$.
Another approach considered in the literature for this problem is that of using a ``pure open-loop'' approach,
namely Alamouti modulation \cite{Alamouti}~--- for the two-transmit antenna case, and orthogonal space--time block coding (OSTBC)~\cite{TarokhJafarkhaniCalderbank_STBC}~--- for more.
The performance of these schemes does not depend on the number of receivers.
However, this universality comes at the price of a substantial rate loss for MIMO channels having several receive antennas, as these schemes use only a single stream, thus failing to achieve the multiplexing gain
offered by the MIMO channel of user $3$ in the example.\footnote{Moreover, for more than two transmit antennas, the OSTBC of \cite{TarokhJafarkhaniCalderbank_STBC} attain strictly less than one degree of freedom.}
Also note that time/frequency sharing incur a great loss in performance (up to half of the capacity in this case).
Other techniques that can be applied for this scenario \cite{Gohary03,LopezPhD,TavildarViswanath_UniversalCodes} are also suboptimal in general.

The aim of the present work is to develop a practical capacity-achieving scheme for the Gaussian MIMO common-message broadcast MIMO setting via decoupling, 
allowing to utilize a ``black box'' approach to coding. 
Namely, this approach allows constructing a capacity-achieving scheme that utilizes only ``off-the-shelf'' encoders and decoders designed for scalar AWGN channels, together with simple signal processing tools.

We construct a capacity-approaching scheme that applies judiciously chosen unitary operations to the time-extended channel matrices at the transmitter and the receivers in conjunction with successive interference cancellation.
In contrast to the open-loop OSTBC structures, that strive for an ``orthogonal design'' structure, i.e., to diagonalize the channel matrices (see, e.g., \cite{TarokhJafarkhaniCalderbank_STBC}),
the space--time structure presented in this work results in triangular matrices, similar to those of V-BLAST/GDFE, but having \emph{equal diagonals}.
This gives rise to effective parallel scalar additive white Gaussian noise (AWGN) channels,
over which standard codes can be used to approach capacity.
Thus, the proposed scheme can be thought of as an ``interpolation'' between the open-loop OSTBC and the closed-loop SU V-BLAST/SVD ones.

The results of this paper generalize those of \cite{STUD:SP}, in which the case of only two users was considered,
for which it suffices to apply unitary transformations directly to the channel matrices.
For more users, on the other hand, we show that jointly processing multiple channel uses is necessary. That is, the unitary transformations are applied to  time-extended channel matrices.

The rest of the paper is organized as follows.
In \secref{s:notation} we present the notations that are used throughout the paper.
In \secref{s:channel_model} we define the Gaussian MIMO common-message BC channel model.
In \secref{s:single_user} we recall known schemes for the single-user case, relying on various forms of unitary matrix decompositions. In \secref{s:multi_user} we suggest a generalization of the SU schemes to the multi-user scenario, based on newly developed matrix decompositions  and derive necessary and sufficient conditions for the existence of such decompositions in some scenarios. Then, in \secref{s:space_time}, we generalize the multi-user scheme by employing space--time coding and discuss the existence of ``perfect'' decompositions needed for such a construction.
In \secref{ss:space_time_nearly} we utilize the space--time structure in order to develop a practical scheme, which is nearly optimal and asymptotically achieves the capacity for any number of users, even when ``perfect decompositions'' are not possible. Finally, in \secref{s:extentions} we present some extensions of the results and conclude in \secref{s:discussion}.

\section{Notation}
\label{s:notation}
The following notation will be used throughout the paper:

\begin{itemize}
 \item Channel matrix of dimension $n_r \times n_t$: $H$, where $n_r$ and $n_t$ stand for the number of antennas at the receiver and at the transmitter, respectively.
\item Channel gain: $\alpha$.
\item Augmented channel matrix: $\tilde{H}$, see \defnref{def:augmented} in \secref{ss:p2p_scheme}.
\item Channel canonical matrix: $G$, see \defnref{def:canonical} in \secref{ss:p2p_scheme}.
\item General square complex matrix of dimensions $n \times n$: $A$.
\item Hermitian square matrix: $S$.
\item Upper triangular matrix with diagonal $\br$: $R$.
\item Upper triangular matrix with a constant diagonal: $T$.
\item Real-valued diagonal matrix: $D$.
\item Complex-valued matrices whose columns are orthonormal (which are unitary, in case these matrices are square): $U$,$V$,$Q$.
\item The Identity matrix: $I$.
\item Capital script letters denote time-extended matrices: $\HHH,\AAA,\SSS,\RRR,\TTT,\UUU,\VVV,\QQQ,\GGG$, see \secref{ss:space_time_intro}.
\item Number of users: $K$.
\item Number of time extensions: $N$.
\item Vectors are denoted by boldface lower case letters. For example, $\bx$ denotes the transmitted vector, $\by$~--- the received vector, and $\bz$~--- the noise vector.
\item Time-extended vectors are denoted by script lower case letters. For example,
$\xxx,\yyy$ and $\zzz$ denote extended transmit, received and noise vectors, respectively.
\item Indices: $j,k,l,m,p,q$.
\item Channel capacity: $C$.
\item All logarithms are taken to base $2$. All rates are given in bits per two dimensions (complex channel use).
\item Average power constraint: $P$.
\item Covariance matrix of the vector $\bx$: $C_{\bx}$.
\item Singular values and generalized singular values: $\bsigma$, $\bmu$.
\item Real and imaginary parts of a complex number: $\R \{\cdot\}$, $\I \{\cdot\}$.
\item Expected value of a random variable: $\mathbb{E}(\cdot)$.
\item Vector $\ell_2$ norm: $\left\| \cdot \right\|$.
\item Determinant of a matrix: $\det(\cdot)$.
\item Trace of a matrix: $\trace{\cdot}$.
\item Adjugate (the transpose of the cofactor) matrix: $\adj(\cdot)$.
\end{itemize}

\section{Common-Message Broadcast Channel Model}
\label{s:channel_model}
The $K$-user Gaussian MIMO broadcast channel consists of one transmit and $K$ receive nodes, where each received signal is related to the transmitted signal through a MIMO link:\footnote{For ease of notation, in the case $K=1$ we denote the single channel matrix $H_1$ by $H$.}
\begin{align} \label{MIMO-BC}
  \by_k = H_k \bx + \bz_k \,, \qquad k = 1,\dots,K \,,
\end{align}

\noindent where $\bx$ is the channel input of dimensions $n_t \times 1$, and is subject to an average power constraint $P$;\footnote{Alternatively, one can consider any other input covariance constraint, e.g., individual power constraints, and covariance matrix constraints.
Given any covariance matrix, the approach described in the sequel may be applied to approach 
\eqref{MIMO_MI}.}
$\by_k$ is the channel output vector of receiver $k$ ($k=1,\dots,K$) of dimensions
$n_r^{(k)} \times 1$; $H_k$ is the channel matrix to user $k$ of dimensions $n_r^{(k)} \times n_t$; and $\bz_k$ is an additive circularly-symmetric Gaussian noise vector of dimensions $n_r^{(k)} \times 1$, where, without loss of generality, we assume that the noise elements are mutually independent and identically distributed with unit power.

The aim of the transmitter is to send the same (common) message to all the receivers.
The capacity of this scenario is well known to equal the (worst-case) capacity of the
compound channel \cite{Dobrushin59,BlackwellBreimanThomasian59,Wolfowitz60}, with the compound parameter being the channel matrix index:
\begin{align}
\label{eq:MIMO_BC_capacity}
    C \left( \left\{ H_k \right\}_{k=1}^K, P \right) = \max_{C_\bx} \min_{k=1,\dots,K} I(H_k,C_\bx) \,,
\end{align}

\noindent where $I(H_k,C_\bx)$ is the mutual information between the channel input $\bx$ and the channel output
$\by_i$, obtained by taking $\bx$ to be Gaussian with covariance matrix $C_\bx$:
\begin{align} \label{MIMO_MI}
  I(H,C_\bx) \triangleq \log \det \left( I + H \CCC H^\dagger \right) \,,
\end{align}

\noindent and the maximization is carried over all admissible input covariance matrices $C_\bx$, satisfying the power constraint
$\trace{ C_{\bx} } \leq P$.

For $K=1$ (SU), the capacity \eqref{eq:MIMO_BC_capacity} can be achieved via the decoupling approach in several ways, each corresponding to a different matrix decomposition.

\section{Single-User Scheme via Matrix Triangularization: Known Results}
\label{s:single_user}
In this section we briefly recall some important matrix decompositions, and
the associated SU communication schemes.
In \secref{ss:decomposition} we recall the generalized triangular decomposition (GTD),
and some of its important special cases which include the SVD, QR, and GMD. A geometrical interpretation of these decompositions is provided in \secref{ss:GeometricInterpretationGTD}.
In \secref{ss:p2p_scheme}, we describe how the GTD can be used in order to construct a practical capacity-achieving communication scheme for the SU Gaussian MIMO communication problem.

\subsection{Generalized Triangular Decomposition}
\label{ss:decomposition}

We only consider the decomposition of \emph{square} invertible matrices throughout this work. As we show in the sequel, this does not impose any restriction on the communication problems addressed.

The next theorem uses the following definition:

\begin{defn}[Multiplicative Majorization (See \cite{PalomarJiang})]
\label{def:major}
    Let $\bx$ and $\by$ be two $n$-dimensional vectors of positive elements.
    Denote by $\tbx$ and $\tby$ the vectors composed of the entries of $\bx$ and $\by$, respectively, ordered non-increasingly.
    We say that $\bx$ majorizes $\by$ ($\bx \succeq \by$) if they have equal products:
\vspace{-.35\baselineskip}
    \[
        \prod_{j=1}^n  x_j  =  \prod_{j=1}^n y_j  \,,
    \]
    and their (ordered) elements satisfy,
    for any $1 \leq l < n$,
\vspace{-.35\baselineskip}
    \[
        \prod_{j=1}^l  \tx_j  \geq  \prod_{j=1}^l  \ty_j  \,.
    \]
\end{defn}

\begin{thm}[Generalized Triangular Decomposition]
\label{thm:gtd}
Let $A$ be an invertible matrix of dimensions $n \times n$ and $\br$ be an $n$-dimensional vector of positive
 elements.
A GTD of the matrix $A$ is given by:
    \begin{align} \label{eq:gtd}
        A &= U R V^\dagger \,,
    \end{align}
where $U$, $V$ are unitary matrices, and $R$ is an upper triangular matrix
with a prescribed set of diagonal values
$\br$, where $r_j = R_{jj}$. This decomposition
exists if and only if the vector $\br$
is majorized by the singular-values vector of $A$:
\begin{align}
\label{eq:majorization_gtd}
    \bsigma(A) \succeq \br \,.
\end{align}
\end{thm}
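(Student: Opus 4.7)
The proof splits into necessity and sufficiency; necessity is a short unitary-invariance argument, while sufficiency requires an inductive construction that is the technically delicate part.

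For \emph{necessity}, suppose $A = U R V^\dagger$ with the stated properties. Unitary factors preserve singular values, so $\bsigma(A) = \bsigma(R)$, and it suffices to show that the singular values of any upper-triangular $R$ majorize its positive diagonal $\br$. The product equality $\prod_j \sigma_j(R) = |\det R| = \prod_j r_j$ is immediate. For the partial products, fix $k$ and let $S \subseteq \{1,\ldots,n\}$ index the $k$ largest diagonal entries $\tilde r_1 \geq \cdots \geq \tilde r_k$. The principal submatrix $R_S$ is still upper triangular, with $|\det R_S| = \prod_{j=1}^k \tilde r_j$, and singular-value interlacing for principal submatrices yields $\sigma_j(R_S) \leq \sigma_j(R)$ for $j=1,\ldots,k$, hence
\begin{align}
\prod_{j=1}^k \tilde r_j \;=\; |\det R_S| \;=\; \prod_{j=1}^k \sigma_j(R_S) \;\leq\; \prod_{j=1}^k \sigma_j(R),
\end{align}
which is exactly the required majorization.

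For \emph{sufficiency}, I proceed by induction on $n$; the case $n=1$ is immediate. For the inductive step, I begin from the SVD $A = U_0 \Sigma V_0^\dagger$ with ordered $\sigma_1 \geq \cdots \geq \sigma_n$ and extract one of the target diagonal values, say $r_1$ (after suitable relabeling), into the $(1,1)$ position via a two-singular-value mixing. Majorization gives $\sigma_n \leq r_1 \leq \sigma_1$, so there exists a consecutive pair with $\sigma_{q+1} \leq r_1 \leq \sigma_q$ (choose $q$ maximal with $\sigma_q \geq r_1$). Letting $\bv_q, \bv_{q+1}$ denote the corresponding right singular vectors of $A$, I set $\bv = \alpha \bv_q + \beta \bv_{q+1}$ with $|\alpha|^2 + |\beta|^2 = 1$ and $|\alpha|^2 \sigma_q^2 + |\beta|^2 \sigma_{q+1}^2 = r_1^2$ (solvable by the interval condition), and $\bm{u} = A\bv / r_1$. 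Extending $\bm{u}, \bv$ to unitaries $U_1, V_1$, orthogonality of the other columns of $U_1$ to $\bm u$ forces
\begin{align}
U_1^\dagger A V_1 \;=\; \begin{pmatrix} r_1 & \bm{w}^\dagger \\ \bm 0 & A' \end{pmatrix},
\end{align}
and a direct computation in the SVD basis shows that the $(n-1) \times (n-1)$ block $A'$ has singular values $\{\sigma_j : j \neq q, q+1\} \cup \{\sigma_q \sigma_{q+1} / r_1\}$.

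The main obstacle, as expected, is verifying that these new singular values majorize $(r_2, \ldots, r_n)$ so that the inductive hypothesis applies to $A'$. This is a short combinatorial lemma on majorization under the elementary move $(\sigma_q, \sigma_{q+1}) \mapsto (r_1, \sigma_q \sigma_{q+1}/r_1)$ followed by removal of $r_1$ from the target list: the total products match by construction, while each partial-product inequality for the remaining spectrum against $(r_2,\ldots,r_n)$ follows by comparing with the corresponding inequality for the original $\bsigma \succeq \br$ and using the bracketing $\sigma_{q+1} \leq r_1 \leq \sigma_q$ (which keeps $\sigma_q\sigma_{q+1}/r_1$ in the same slot of the ordered list). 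Granted this lemma, the inductive hypothesis furnishes $A' = U' R' V'^\dagger$ with diagonal $(r_2, \ldots, r_n)$, and reassembling
\begin{align}
A \;=\; U_1 \begin{pmatrix} 1 & \bm 0 \\ \bm 0 & U' \end{pmatrix} \begin{pmatrix} r_1 & \bm{w}^\dagger V' \\ \bm 0 & R' \end{pmatrix} \begin{pmatrix} 1 & \bm 0 \\ \bm 0 & V'^\dagger \end{pmatrix} V_1^\dagger
\end{align}
yields the desired GTD of $A$.
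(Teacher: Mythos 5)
Your proposal is correct, but it takes a genuinely different route from the paper: the paper does not prove \thmref{thm:gtd} at all, attributing necessity to Weyl \cite{WeylCondition} and sufficiency to Horn \cite{WeylConditionInverse_ByHorn}, and pointing to the explicit constructions of \cite{QRS-GTD,GTD}; the only argument the paper itself supplies is the geometric $2\times 2$ real-case illustration of the extremality of the SVD diagonal in \secref{ss:GeometricInterpretationGTD}. Your necessity argument---pass to the principal submatrix $R_S$ indexed by the $k$ largest diagonal entries, use $\left|\det R_S\right| = \prod_{j=1}^k \sigma_j(R_S)$ and $\sigma_j(R_S) \leq \sigma_j(R)$---is a clean self-contained substitute for invoking Weyl's eigenvalue/singular-value inequalities, and your sufficiency induction is essentially the two-singular-value mixing that underlies the constructive GMD/GTD algorithms of \cite{GMD,GTD}, recast in a coordinate-free way. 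The steps you leave as ``short lemmas'' do check out: the deflation lemma holds for extracting an \emph{arbitrary} target value $r_1$, because $\sigma_q\sigma_{q+1}/r_1 \in [\sigma_{q+1},\sigma_q]$ keeps the leftover in slot $q$, so each partial product of the new spectrum equals the original partial product through index $k$ or $k+1$ divided by $r_1$ as appropriate, and the comparison with $(r_2,\ldots,r_n)$ follows from the original majorization together with monotonicity of the target list (the cases where the partial-product index falls between the extraction position and $q$ need that extra monotonicity observation, which your sketch glosses over but which is immediate); the singular values of $A'$ do not depend on how $\bm u,\bv$ are completed to unitaries, and choosing the completion adapted to the SVD basis makes $A'$ literally diagonal with entries $\{\sigma_j : j\neq q,q+1\}\cup\{\sigma_q\sigma_{q+1}/r_1\}$; and the degenerate case $r_1=\sigma_n$ (no bracketing pair) is a direct one-step deflation. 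One small wording point: since the deflation lemma works for any extracted value, you can extract the literal first prescribed entry $r_1$ and recurse on $(r_2,\ldots,r_n)$ in order, so the ``suitable relabeling'' is unnecessary and, if actually performed, would only deliver the prescribed diagonal up to permutation rather than in the stated order $R_{jj}=r_j$.
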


In other words, the singular values are an extremal case for the diagonal of all possible unitary triangularizations.

The necessity of the majorization condition was proven by Weyl \cite{WeylCondition},
and the sufficiency of this condition~--- by Horn \cite{WeylConditionInverse_ByHorn}.
Explicit constructions of the decomposition
were introduced in  \cite{QRS-GTD} and \cite{GTD}.

We now recall three important special cases of the GTD.
\subsubsection{SVD (See, .e.g., \cite{GolubVanLoan3rdEd})}
An important special case of the GTD is the SVD, in which
the resulting matrix $R$ in \eqref{eq:gtd} is a \emph{diagonal} matrix, such that the diagonal elements
of $R$ are equal to the singular values of the original matrix $A$.

\subsubsection{QR Decomposition (See, .e.g., \cite{GolubVanLoan3rdEd})}
Another important special case of the GTD is the QR decomposition, in which the matrix
$V$ in \eqref{eq:gtd} equals to the identity matrix and hence does not depend on the matrix $A$.
This decomposition can be constructed by performing Gram-Schmidt orthonormalization on the (ordered) columns of the
matrix $A$.

\subsubsection{GMD (See \cite{UnityTriangularization,QRS,GMD})}
A GMD of a square complex invertible matrix $A$ is given by:
    \begin{align} \label{eq:gmd}
        A &= U T V^\dagger \,,
    \end{align}
where $U$, $V$ are unitary matrices, and $T$ is an upper triangular matrix
such that all its diagonal values equal
to the geometric mean of the singular values of $A$, which is real and positive.

Note that this decomposition always exists if $A$ is invertible (since the vector of singular values of $A$ necessarily majorizes
the vector of diagonal elements of $T$), but is not unique.

\subsection{Geometric Interpretation of the GTD}
\label{ss:GeometricInterpretationGTD}

We give a geometric interpretation of the GTD of \thmref{thm:gtd},
for the special case of $2 \times 2$ real matrices.
A similar geometric interpretation can be devised for the general case.

In the real case, unitary matrices reduce to (real) orthogonal ones.
In the $2 \times 2$ case, these orthogonal matrices are merely
rotation matrices.\footnote{In general, reflection matrices need to be considered in conjunction with the rotation matrices. However, reflection matrices are not needed for
the construction of GTD, as will become clear in the sequel.}
Thus, the matrices $U$ and $V$ of \thmref{thm:gtd} are rotation matrices, namely,
\begin{align}
  V&=\left( \begin{array}{cc}
	    \cos\theta_r & -\sin\theta_r \\
	    \sin\theta_r &  \cos\theta_r
	  \end{array}
    \right)
\label{eq:gtd_interprtation:V}
  \\
  U&=\left( \begin{array}{cc}
	    \cos\theta_\ell & -\sin\theta_\ell \\
	    \sin\theta_\ell &  \cos\theta_\ell
	  \end{array}
    \right)\,,
\end{align}
where $\theta_r$ and $\theta_\ell$ are the rotation angles.

Denote the columns of the matrix to be decomposed, $A$, by $\ba$ and $\bb$:
\begin{align}
    &A \triangleq
    \begin{pmatrix}
        \ba & \bb
    \end{pmatrix}
    \triangleq 
    \left( \begin{array}{cc}
          a_x & b_x \\
       a_y & b_y
         \end{array}
    \right)
\end{align}
and assume, without loss of generality, $\det(A)=1$.

By multiplying $A$ by $V$ on the right, we obtain
\col{
\begin{align}
  &
\!\!\!\!
  AV
  =
  \begin{bmatrix}
      a_x \cos\theta_r + b_x \sin\theta_r & -a_x\sin\theta_r + b_x \cos\theta_r \\
      a_y \cos\theta_r + b_y \sin\theta_r & -a_y\sin\theta_r + b_y \cos\theta_r
  \end{bmatrix}
  \\
  &
\!\!\!\!
=
  \begin{bmatrix}
      a_x \cos\theta_r+b_x\sin\theta_r &
      a_x \cos(\theta_r+\frac{\pi}{2}) + b_x \sin(\theta_r+\frac{\pi}{2}) \\
      a_y \cos\theta_r+b_y\sin\theta_r &
      a_y \cos(\theta_r+\frac{\pi}{2}) + b_y \sin(\theta_r+\frac{\pi}{2})
  \end{bmatrix}
\label{eq:GTD_interp:AV}
\end{align}
}
{
\begin{align}
  AV
  &=
  \begin{bmatrix}
      a_x \cos\theta_r + b_x \sin\theta_r & -a_x\sin\theta_r + b_x \cos\theta_r \\
      a_y \cos\theta_r + b_y \sin\theta_r & -a_y\sin\theta_r + b_y \cos\theta_r
  \end{bmatrix}
\\ &=
  \begin{bmatrix}
      a_x \cos\theta_r+b_x\sin\theta_r &
      a_x \cos(\theta_r+\frac{\pi}{2}) + b_x \sin(\theta_r+\frac{\pi}{2}) \\
      a_y \cos\theta_r+b_y\sin\theta_r &
      a_y \cos(\theta_r+\frac{\pi}{2}) + b_y \sin(\theta_r+\frac{\pi}{2})
  \end{bmatrix}
\label{eq:GTD_interp:AV}
\end{align}
}

By varying the rotation angle $\theta_r$,
it is readily verified that the resulting column vectors in \eqref{eq:GTD_interp:AV},
move along an ellipse, centered at the origin.
This is illustrated in \figref{fig:right_rotation}, for a specific choice of $A$, 
where we define $\tA \triangleq AV$ and its columns~--- by $\tba$ and $\tbb$.
\col{
\begin{figure}[t]
    \centering
    \epsfig{file = 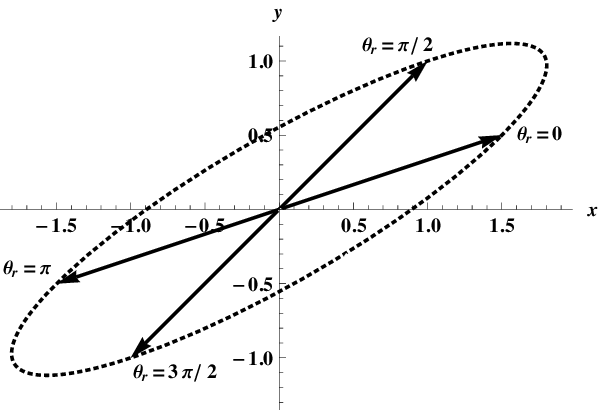, width = \columnwidth}
    \caption{All possible column vectors of $AV$, where $V$ is a rotation matrix, and
	     $\ba = (3/2, 1/2)^\dagger$ and $\bb = (1,1)^\dagger$. The arrows correspond to $\tba$ at different angles.}
    \label{fig:right_rotation}
\end{figure}
}
{
\begin{figure}[h]
    \centering
    \epsfig{file = fig1.eps, scale = 0.9}
    \caption{All possible column vectors of $AV$, where $V$ is a rotation matrix, and
	     $\ba = (3/2, 1/2)^\dagger$ and $\bb = (1,1)^\dagger$. The arrows corresponding to $\tba$ at the angles specified.}
    \label{fig:right_rotation}
\end{figure}
}

After applying $V$ on the right, we multiply the resulting matrix $\tA$
by a rotation matrix $U^\dagger$ on the left.
The latter operation rotates the column vectors $\tba$ and $\tbb$, by an angle $(-\theta_\ell)$
(the minus is due to the transposition of $U$ prior to multiplication).
The angle $\theta_\ell$ is chosen such that $U^\dagger \tba$
is aligned with the $x$-axis.
This is illustrated for a specific choice of $\tba$ and $\tbb$ in \figref{fig:qr}.

\col{
\begin{figure}[thp]
    \centering
    \epsfig{file = 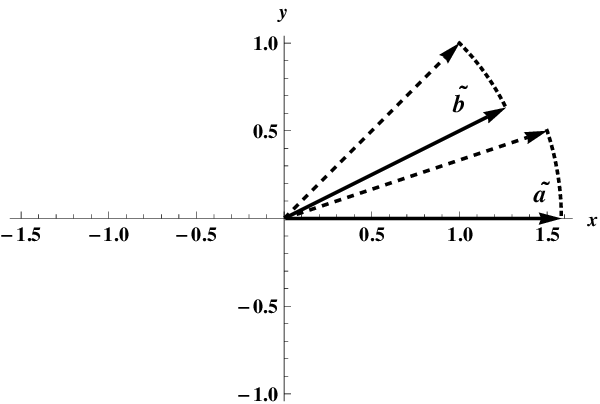, width = \columnwidth}
    \caption{Rotation by $U^\dagger$ of $\tba$ and $\tbb$, resulting from $U^\dagger \tA$ (or alternatively of $\ba$ and $\bb$ in the QR decomposition case), until $U^\dagger \tba$
	     is aligned with the $x$-axis, for $\tba = (3/2, 1/2)^\dagger$ and $\tbb = (1,1)^\dagger$.}
    \label{fig:qr}
\end{figure}
}
{
\begin{figure}[h]
    \centering
    \epsfig{file = fig2.eps, scale = 0.9}
    \caption{Rotation by $U^\dagger$ of $\tba$ and $\tbb$ $U^\dagger \tA$ (or alternatively of $\ba$ and $\bb$ in the QR decomposition case), until $U^\dagger \tba$
	     is aligned with the $x$-axis, for $\tba = (3/2, 1/2)^\dagger$ and $\tbb = (1,1)^\dagger$.}
    \label{fig:qr}
\end{figure}
}

\begin{remark}
\label{rem:unitary_operation_effects}
  Since the orthogonal matrix $V$ is applied on the right,
  the norms of the \emph{rows} of $A$ are not affected.
  Nevertheless, the \emph{columns} of $AV$ have different norms, in general,
  from those of the columns of $A$, as can be seen from \eqref{eq:GTD_interp:AV}.
  The multiplication on the left by $U^\dagger$, on the other hand,
  does not change the norms of the columns.
  As for the angle between the column vectors~--- multiplication by a unitary matrix $V$ on the right changes the relative angle between the two vectors,
  unlike a unitary operation applied on the left, which only rotates the two vectors
  together, but does not change the relative angle between the two.
\end{remark}

Since the norms of the columns are not affected by unitary operations applied on the left,
the possible values on the diagonal of the resulting triangular matrix
in the GTD, are fully determined by the norms (``lengths'') of the column vectors
resulting after applying $V$ on the right, which in turn,
vary together on an ellipse.

We next interpret geometrically the special cases of SVD, QR and GMD (for the real $2 \times 2$ case).

\subsubsection{SVD}
\label{sss:geometric_interp:SVD}

In this decomposition, the resulting columns, at the end of the process,
must be orthogonal.
This is established by choosing $\theta_r$ such that the relative angle between the resulting vectors, after the multiplication by $V$, is $\nicefrac{\pi}{2}$. As we show below, this is always possible.
Afterwards, the two vectors are rotated together via the left-multiplication by $U^\dagger$,
until they lie parallel to the axes.
This process is demonstrated in \figref{fig:svd}.
\col{
\begin{figure}[t]
\centering
\subfloat[Right rotation by $\theta_r = 3.865$, for which the vectors are orthogonal.]{\label{subfig:svd_v}
   \centering
   \epsfig{file = 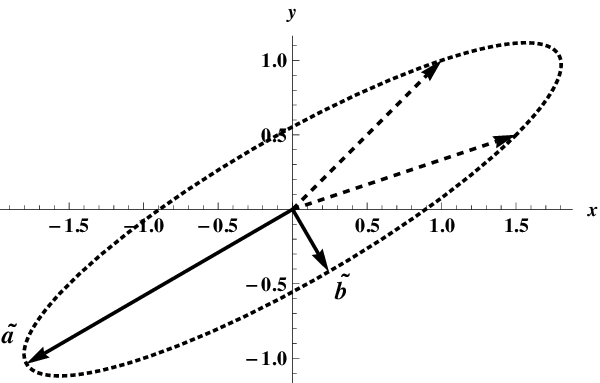, width = \columnwidth}
}
\\ 
\subfloat[Left rot.\ by $\theta_\ell=3.667$, for which the vectors are aligned with the axes.]{\label{subfig:svd_u}
   \centering
   \epsfig{file = 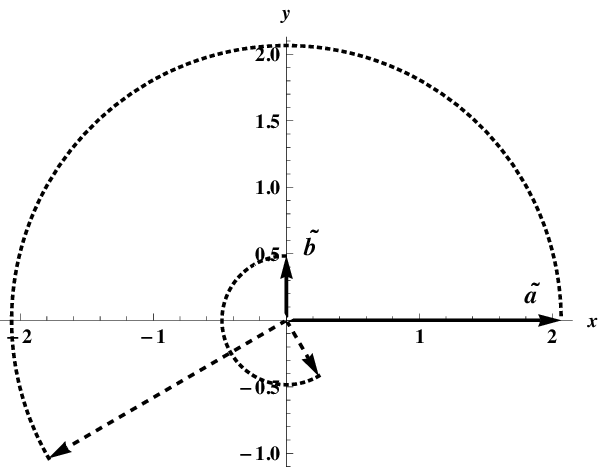, width = \columnwidth}
}
\caption{SVD for $\ba = (3/2,1/2)^\dagger$ and $\bb = (1,1)^\dagger$.}
\label{fig:svd}
\end{figure}
}
{
\begin{figure}[ht]
\centering
\subfloat[Right rotation by $\theta_r = 3.865$, for which the vectors are orthogonal.]{\label{subfig:svd_v}
   \centering
   \epsfig{file = fig3a.eps, scale=0.8}
}
\qquad
\subfloat[Left rot.\ by $\theta_\ell=3.667$, for which the vectors are aligned with the axes.]{\label{subfig:svd_u}
   \centering
   \epsfig{file = fig3b.eps, scale=0.8}
}
\caption{SVD for $\ba = (3/2,1/2)^\dagger$ and $\bb = (1,1)^\dagger$.}
\label{fig:svd}
\end{figure}
}
Moreover, the resulting orthogonal vectors correspond also to the longest and shortest (``extreme'') possible diagonal values achievable via the GTD.
This can also be seen in \figref{fig:svd} and is formally stated in the following lemma.
Note that this is a special ($2 \times 2$) case of the majorization property \eqref{eq:majorization_gtd} of the GTD. Here, we provide a geometric proof.

\begin{proof}
  The norm of $\tba$ after applying a rotation matrix $V$ on the right is
\col{
  \begin{align}
      \norm{\tba}^2 &= \left( a_x \cos \theta_r + b_x \sin \theta_r \right)^2
      + \left( a_y \cos \theta_r + b_y \sin \theta_r \right)^2
   \\ &= \frac{1}{2} \left( a_x^2 + a_y^2 + b_x^2 + b_y^2 \right)
   \\ &\quad + \frac{1}{2} \left( a_x^2 + a_y^2 - b_x^2 - b_y^2 \right) \cos 2\theta_r
   \\ &\quad + \left( a_x b_x + a_y b_y \right) \sin 2\theta_r \,.
  \end{align}
}
{
 \begin{align}
      \norm{\tba}^2 &= \left( a_x \cos \theta_r + b_x \sin \theta_r \right)^2
      + \left( a_y \cos \theta_r + b_y \sin \theta_r \right)^2
   \\ &= \frac{1}{2} \left( a_x^2 + a_y^2 + b_x^2 + b_y^2 \right)
   + \frac{1}{2} \left( a_x^2 + a_y^2 - b_x^2 - b_y^2 \right) \cos 2\theta_r
   + \left( a_x b_x + a_y b_y \right) \sin 2\theta_r \,.
  \end{align}
}
  Similarly, the norm of $\tbb$ is given by
\col{
  \begin{align}
      \norm{\tbb}^2
      &= \frac{1}{2} \left( a_x^2 + a_y^2 + b_x^2 + b_y^2 \right)
   \\ &\quad - \frac{1}{2} \left( a_x^2 + a_y^2 - b_x^2 - b_y^2 \right) \cos 2\theta_r
   \\ &\quad - \left( a_x b_x + a_y b_y \right) \sin 2\theta_r \,.
  \end{align}
}
{
\begin{align}
      \norm{\tbb}^2
      &= \frac{1}{2} \left( a_x^2 + a_y^2 + b_x^2 + b_y^2 \right)
   - \frac{1}{2} \left( a_x^2 + a_y^2 - b_x^2 - b_y^2 \right) \cos 2\theta_r
   - \left( a_x b_x + a_y b_y \right) \sin 2\theta_r \,.
  \end{align}
}
  The extreme values of $\norm{\tba}^2$ and $\norm{\tbb}^2$ are achieved at $\theta_r$ satisfying:
\col{
  \begin{subequations}
   \label{eq:GTD_interp:derivative}
   \noeqref{eq:GTD_interp:derivative:derivatives,eq:GTD_interp:derivative:first_half,eq:GTD_interp:derivative:second_half,eq:GTD_interp:derivative:zero}
  \begin{align}
        - \frac{\diff{\left(\norm{\tbb}^2 \right)}}{\diff{\theta_r}}
       &= \frac{\diff{\left(\norm{\tba}^2 \right)}}{\diff{\theta_r}}
\label{eq:GTD_interp:derivative:derivatives}
    \\ &= - \left( a_x^2 + a_y^2 - b_x^2 - b_y^2 \right) \sin 2\theta_r
\label{eq:GTD_interp:derivative:first_half}
   \\* &\quad + 2 \left( a_x b_x + a_y b_y \right) \cos 2\theta_r
\label{eq:GTD_interp:derivative:second_half}
   \\* &= 0 \,.
\label{eq:GTD_interp:derivative:zero}
  \end{align}
  \end{subequations}
}
{
 \begin{align}
        - \frac{\diff{\left(\norm{\tbb}^2 \right)}}{\diff{\theta_r}}
       &= \frac{\diff{\left(\norm{\tba}^2 \right)}}{\diff{\theta_r}}
    \\ &= - \left( a_x^2 + a_y^2 - b_x^2 - b_y^2 \right) \sin 2\theta_r
    + 2 \left( a_x b_x + a_y b_y \right) \cos 2\theta_r
   \\ &= 0 \,.
   \label{eq:GTD_interp:derivative}
  \end{align}
}
  On the other hand, the vectors $\tba$ and $\tbb$ are orthogonal for
  $\theta_r$ values satisfying
\col{
  \begin{subequations}
   \label{eq:GTD_interp:orthogonal}
    \noeqref{eq:GTD_interp:orthogonal:derivatives,eq:GTD_interp:orthogonal:second_half,eq:GTD_interp:orthogonal:zero}
  \begin{align}
      \inner{\ba}{\bb} &=
      -\frac{1}{2} \left( a_x^2 + a_y^2 - b_x^2 - b_y^2 \right) \sin 2\theta
\label{eq:GTD_interp:orthogonal:derivatives}
   \\ &+ \left( a_x b_x + a_y b_y \right) \cos 2\theta
\label{eq:GTD_interp:orthogonal:second_half}
   \\ &=0 \,.
\label{eq:GTD_interp:orthogonal:zero}
  \end{align}
  \end{subequations}
}
{
\begin{align}
      \inner{\ba}{\bb} &=
      -\frac{1}{2} \left( a_x^2 + a_y^2 - b_x^2 - b_y^2 \right) \sin 2\theta
   + \left( a_x b_x + a_y b_y \right) \cos 2\theta
   =0 \,.
   \label{eq:GTD_interp:orthogonal}
  \end{align}
}
  Observing that the requirements of \eqref{eq:GTD_interp:derivative} and \eqref{eq:GTD_interp:orthogonal}
  are the same, and that the second derivatives of $\norm{\tba}^2$ and $\norm{\tbb}^2$ are opposite,
  we conclude the desired result.
\end{proof}

\subsubsection{QR Decomposition}
\label{sss:geometric_interp:QR}

In this decomposition no right rotation V is applied, i.e.,
$V=I$ or equivalently $\theta_r = 0$.
Thus, a left rotation is applied to the columns of $A$,
until the first column vector is aligned with the $x$-axis.
This suggests that the first diagonal element is equal to the norm of the
first column of $A$ (prior to rotation);
the second diagonal element can be computed from the determinant and the first
diagonal vector, or alternatively by computing the norm of the orthogonal
component of the second column vector to the first one. See \figref{fig:qr}.

\subsubsection{GMD}
\label{sss:geometric_interp:GMD}

In this decomposition the angle $\theta_r$ is chosen such that the length (norm) of $\tilde{\ba}$
is equal to 1, or equivalently we seek for an angle $\theta_r$
for which the ellipse intersects with the unit circle.
Since both the ellipse and the unit circle
(which corresponds to the $2 \times 2$ identity matrix)
have determinants equal to 1 (i.e., have the same area) and both are centered at the origin, they must intersect at exactly 4 points,
unless the ellipse is itself the unit circle (in which case there is an infinite number of intersection points).
The operation on the left rotates the two vectors until the first is aligned with the $x$-axis.
Moreover, since unitary operations preserve volume (absolute value of the determinant),
the second diagonal element must be 1 as well. That is, the projection of the
second vector on the $y$-axis is equal to 1.
The remaining element may be found, e.g., via the Frobenius norm, which is again invariant under rotations on both sides, and its sign may be easily determined as well.
This is demonstrated in \figref{fig:GMD}.
\col{
\begin{figure}[t]
  \centering
  \subfloat[Right rotation by $\theta_r = 1.843$, for which the first vector has unit norm.]{\label{subfig:gmd_v}
    \centering
    \epsfig{file = 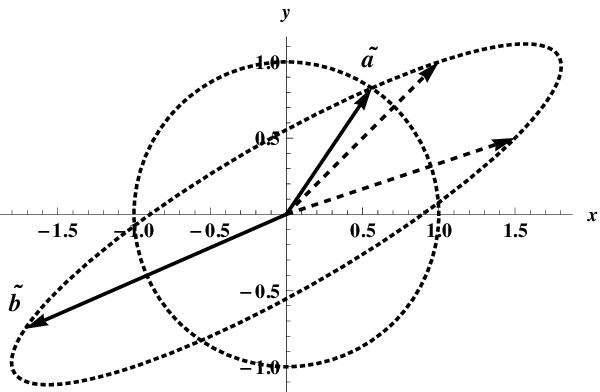, width = \columnwidth}
  }
\\
  \subfloat[Left rotation  by $\theta_\ell=0.977$, for which the first vector is aligned with the \textbf{\emph{x}} axis.]{\label{subfig:gmd_u}
    \centering
    \epsfig{file = 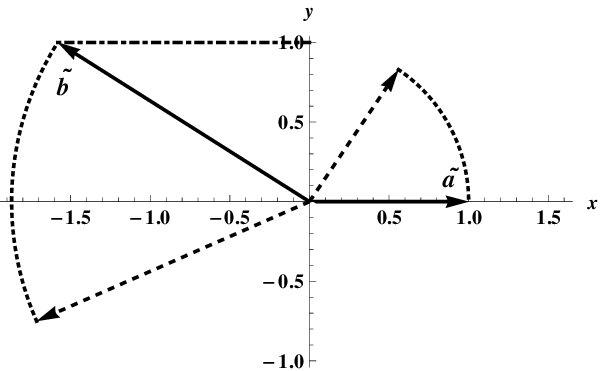, width = \columnwidth}
  }
  \caption{GMD for $\ba = (3/2,1/2)^\dagger$ and $\bb = (1,1)^\dagger$.}
\label{fig:GMD}
\end{figure}
}
{
\begin{figure}[h]
  \centering
  \subfloat[Right rotation by $\theta_r = 1.843$, for which the first vector has unit norm.]{\label{subfig:gmd_v}
    \centering
    \epsfig{file = fig4a.eps, scale=0.8}
  }
\qquad
  \subfloat[Left rotation  by $\theta_\ell=0.977$, for which the first vector is aligned with the \textbf{\emph{x}} axis.]{\label{subfig:gmd_u}
    \centering
    \epsfig{file = fig4b.eps, scale=0.8}
  }
  \caption{GMD for $\ba = (1.5,0.5)^\dagger$ and $\bb = (1,1)^\dagger$.}
\label{fig:GMD}
\end{figure}
}

\subsection{SU MIMO transmission via Matrix Triangularization}
\label{ss:p2p_scheme}

We now review the capacity-approaching communication schemes that utilize the above matrix decompositions.
For the SU case (i.e., $K=1$ in \eqref{MIMO-BC}), a practical communication scheme can be obtained by applying the SVD to the channel matrix $H$:
\begin{align}
      H = U D V^\dagger \,.
\end{align}
By applying the pre-processing matrix $V$ at the transmitter and the post-processing matrix $U^\dagger$
at the receiver, the resulting effective channel matrix becomes \emph{diagonal}, and therefore the capacity can be achieved using off-the-shelf codes, designed for \emph{scalar} SU AWGN channels. The rates of those codes are determined by the SNRs of the independent scalar sub-channels, namely, by the diagonal elements of the diagonal matrix $D$
(after allocating power to the resulting sub-channels, via water-pouring).

We now review a more general scheme, applicable to any GTD rather than the special case of SVD.
This scheme is based upon the derivation
of the MMSE variant of Vertical Bell-Laboratories Space--Time coding (V-BLAST), see, e.g., \cite{HassibiVBLAST,UCD,QRS}.

\begin{defn}[Augmented Matrix]
\label{def:augmented}
Define the following \emph{augmented matrix}:\footnote{$\CCC^{1/2}$ is any matrix $B$ satisfying $B B^\dagger = \CCC$, and can be found, e.g., via the Cholesky decomposition.}
\begin{align}
\label{eq:augmented_matrix_def}
	\tH \triangleq
	\left(\begin{array}{c}
		  H \CCC^{1/2} \\ I_{n_t}
	      \end{array}
	\right)\,,
\end{align}
where $I_{n_t}$ is the $n_t \times n_t$ identity matrix.
Next, the matrix $\tH$ is transformed into a square matrix, by means of the QR decomposition.
\end{defn}

\begin{defn}[Channel Canonical Matrix]
\label{def:canonical}
Let $\tH$ be the augmented matrix \eqref{eq:augmented_matrix_def}, and let
\begin{align} \label{eq:G_matrix}
 \tH = Q G \,,
\end{align}
where $Q$ is an $(n_r+n_t) \times n_t$ matrix with orthonormal columns and $G$ is an $n_t \times n_t$ upper triangular matrix with real-valued positive diagonal elements.
The matrix $G$ will be referred to as a \emph{channel canonical matrix}, reminiscent of the system canonical response defined in
\cite{CDFE-PartI} for LTI scalar systems.
\end{defn}

Now the matrix G is decomposed according to the GTD:
\begin{align}
\label{eq:Ggtd}
    G &= U R V^\dagger \,,
\end{align}
where $R$ is upper triangular whose diagonal values are equal to the prescribed diagonal elements $r_1,\ldots,r_{n_t}$ (which satisfy the multiplicative majorization condition of \defnref{def:major}),
and $r_j^2-1$ are the effective signal-to-noise ratios of the scalar sub-channels.

\begin{remark}
\label{rem:greater_than_1_diag}
    Due to the presence of the identity matrix $I_{n_t}$ in \eqref{eq:augmented_matrix_def}, 
    it follows that the the diagonal elements of $G$ and $R$ are necessarily greater or equal to $1$, 
    and their determinants are greater than 1.\footnote{Assuming a ``canonical QR decomposition'' is used, i.e., 
    the one that results in positive diagonal entries in the triangular matrix.}
\end{remark}

The  transmission scheme is as follows:
\begin{enumerate}
\item
  Construct $n_t$ codewords, each from a codebook matched to a scalar AWGN channel of signal-to-noise ratio (SNR) $r_j^2-1$. That is, up to a rate of $\log r_j^2$.

\item
  In each channel use, an $n_t$-length vector $\tbx$ is formed using one sample from each codebook. The transmitted vector $\bx$ is then obtained using the following linear precoder:
  \begin{align} \label{eq:bx_tilde_to_bx}
	  \bx = \CCC^{1/2} V \tbx \,.
  \end{align}
\item
  The receiver calculates
  \begin{align} \label{eq:scheme_receiver}
	  \tby = U^\dagger \tQ^\dagger \by \,,
  \end{align}
  where $\tQ$ consists of the first $n_t$ rows of $Q$.

\item
  Finally, the codebooks are decoded using successive interference cancellation, starting from the \blath{$n_t$}  codeword and ending with the first one: The \blath{$n_t$} codeword is decoded first, using the \blath{$n_t$} element of $\tby$,
  treating the other codewords as AWGN. The  effect of the \blath{$n_t$} element of $\tbx$ is then subtracted out from the remaining elements of $\tby$. Next, the \blath{$(n_t-1)$} codeword is decoded, using the \blath{$(n_t-1)$} element of $\tby$ --- and so forth.
\end{enumerate}
The proof of optimality of this scheme, i.e., that it is capacity achieving, appears in~\cite[Lemma~III.3]{UCD}.

Note that each element of $\tbx$ should be understood to correspond to a symbol of a codebook of length $L$. Thus, the index time is suppressed. Similarly, the successive interference cancellation process of recovering the codebooks from $\tby$ should be understood, again, to correspond to a symbol of a codebook of length $L$.
Our analysis is not affected by the exact value of $L$, but rather only by the gap to capacity of the base code. Hence, in order to approach capacity, $L$ needs to be large. Throughout this paper, we assume capacity-achieving scalar (base) codes; any loss in these codes,
would translate in a straightforward manner to a loss in the overall scheme.

\begin{remark} \label{rem:gmd_same_codebook}
   If we take $V=I$ in \eqref{eq:Ggtd}, namely use the QR decomposition, we obtain a transmission scheme that requires no precoding at the transmitter. Since the QR decomposition is unique, we have no freedom in choosing the diagonal values $r_j$.
  Alternatively, the matrices $U$ and $V$ can be chosen according to the SVD. In this case, the resulting matrix $R$ in \eqref{eq:Ggtd} is \emph{diagonal}, and therefore the channel is transformed into parallel independent scalar sub-channels and there is no need to perform successive interference cancellation. As in the case of the QR decomposition, the SVD is unique, and there is no freedom in choosing the diagonal values $r_j$ (which, in this case, are the singular values of the matrix $G$).
  Finally,
  If the matrices $U$ and $V$ are chosen according to the GMD \eqref{eq:gmd}, then all the values $r_j$ are equal, meaning that all the codebooks in the scheme have the same rate.
  Moreover, in this case the \emph{same}  scalar codebook can be used over all the sub-channels.\footnote{In practice, the codebooks should not be identical, though they can, for example, be derived from a common base codebook via scrambling.}
  This special case is known as the uniform channel decomposition (UCD) \cite{UCD}.
\end{remark}

\begin{remark}[Decoding Order] \label{rem:order}
 In step 4 of the scheme, one could decode the codebooks in a different order. This corresponds to replacing the QR decomposition \eqref{eq:Ggtd} with Gram-Schmidt orthonormalization in a different order, e.g., QL decomposition. Alternatively, this could be represented in the notations of this section by retaining the QR decomposition, but performing it on a column-permuted matrix $G \Pi$, where $\Pi$ is some permutation matrix. This, in general, would alter the rate allocation between the different sub-streams.
\end{remark}

\section{Multi-User Scheme via Matrix Triangularization}
\label{s:multi_user}
The goal of this section is to generalize the point-to-point communication scheme, presented in
\secref{ss:p2p_scheme}, to the $K$-user BC channel defined in \secref{s:channel_model}.
This is a generalization of the two-user case ($K=2$) that was considered in \cite{STUD:SP}.

We start in \secref{ss:multi_user_intro} by defining some forms of joint decomposition
of $K$ matrices.
Namely, we define the $K$-user geometric mean decomposition ($K$-GMD)
and the $K$-user joint equi-diagonal triangularization ($K$-JET).
A communication scheme for the $K$-user common-message BC setting, based on these decompositions, is described in \secref{ss:multi_user_scheme}. Unfortunately, these decompositions do not always exist; In \secref{ss:multi_user_perfect_2x2} we provide necessary and sufficient conditions for the existence of
these decompositions, for a certain special case.

\subsection{$K$-JET and $K$-GMD}
\label{ss:multi_user_intro}
We now present the definitions of $K$-GMD and $K$-JET~---
decompositions of $K$ square matrices of the same dimensions and having the same determinant.

\begin{defn}[$K$-JET]
\label{defn:K-JET}
    Let  $A_1,\ldots,A_K$ be $K$ invertible complex matrices of dimensions $n \times n$,
    with equal determinants.
    A $K$-JET of these matrices is a decomposition
    \begin{align} \label{eq:K-JET}
      A_k &= U_k R_k V^\dagger \,, \qquad k=1,\ldots,K \,,
    \end{align}
    where $U_1,\ldots,U_K,V$ are $n \times n$ unitary matrices,  and $R_1,\ldots,R_K$ are upper triangular
     $n \times n$ matrices with \emph{the same} real, positive diagonal values, namely,
    \begin{align} \label{eq:R_equal_diagonals}
        \left[ R_1 \right]_{jj} = \cdots = \left[ R_K \right]_{jj} \,,\qquad j=1,\ldots,n \,.
    \end{align}
\end{defn}

\begin{remark}
     For $K=2$, $2$-JET will be simply referred to as JET.
     JET of two matrices was introduced in \cite{STUD:SP},
     where it was proved to always exist (for any two matrices $A_1$ and $A_2$ with equal determinants).
\end{remark}

\begin{remark}
\label{rem:JET:equal_det}
  The $K$-JET of \defnref{defn:K-JET} easily extends to matrices with non-equal determinants as follows.
  Define the normalized matrices
  \begin{align}
  \label{eq:rem:JET:equal_det:scaled_matrices}
      \tA_k \triangleq \left| \det(A_k) \right|^{-1/n} A_k \,.
  \end{align}
  These scaled matrices have unit determinants.\footnote{Up to a scalar phase which can be absorbed in the left-unitary matrices~$\{ U_k \}$.}
  Applying $K$-JET to the scaled matrices $\{ \tA_k \}$, 
  results in triangular matrices $\{\tR_k\}$ with equal diagonals, and a set of unitary matrices $\{U_k\}$ and $V$.
  This, in turn, suggests the following joint decomposition of the matrices $\{A_k\}$:
  \begin{align}
      A_k &= U_k R_k V^\dagger \,, \qquad k = 1, \ldots, K \,,
  \end{align}
  where 
  \begin{align}
	R_k &\triangleq \left| \det(A_k) \right|^{1/n} \tR_k \,.
  \end{align}
  Thus, $K$-JET applied to matrices having non-equal determinants, gives rise to triangular matrices having \emph{proportional} diagonals (instead of the equal diagonals, in the equal-determinant case).
  This is illustrated in the following example.
\end{remark}

\begin{exmpl}
\label{ex:non_equal_dets}
    Consider the following two matrices having 
    \emph{non-equal} determinants:
    \begin{align}
    A_1 &=
    \begin{pmatrix}
        2 & 1 \\
        0 & 8
    \end{pmatrix}
    = 4
    \underbrace{
    \begin{pmatrix}
        0.5 & 0.25 \\
        0 & 2
    \end{pmatrix}
    }_{\tA_1}
    \,,&
    \det \left( A_1 \right) &= 16 \,,\\
    A_2 &=
    \begin{pmatrix}
        5 & -2 \\
        0 & 5
    \end{pmatrix}
    = 5
    \underbrace{
    \begin{pmatrix}
        1 & -0.4 \\
        0 & 1
    \end{pmatrix}
    }_{\tA_2}
    \,,&
    \det \left( A_2 \right) &= 25 \,.
    \end{align}
By applying JET to ${\tA_1}$ and ${\tA_2}$, we obtain the following triangular matrices:\ver{}{\footnote{See \cite{JET:TechReport:SeveralUsers2013} for the corresponding matrices $U_1$, $U_2$ and $V$.}}
     \begin{align}
        \tR_1  &\approx
	\left( \begin{array}{cc}
	  1.20 & -1.48 \\
	  0 &  0.84
        \end{array} \right) 
	\Rightarrow
	R_1 = 4
	\left( \begin{array}{cc}
	  1.20 & -1.48 \\
	  0 &  0.84
        \end{array} \right) 
        \\
	\tR_2 &\approx
	\left( \begin{array}{cc}
	  1.20 & -0.17 \\
	  0 &  0.84
        \end{array} \right) 
	\Rightarrow
	R_2 = 5
	\left( \begin{array}{cc}
	  1.20 & -0.17 \\
	  0 &  0.84
        \end{array} \right) 
	\,,
    \end{align}
\ver{by applying the unitary matrices
    \begin{align}
        U_1  &\approx
        \left( \begin{array}{cc}
          -0.22 & -0.98 \\
          0.98 &  -0.22
        \end{array} \right) 
    \,, \quad
        U_2  \approx
        \left( \begin{array}{cc}
          -0.87 & -0.49 \\
          0.49 &  -0.87
        \end{array} \right) 
    \\ V  &\approx
        \left( \begin{array}{cc}
          -0.81 & -0.58 \\
          0.58 &  -0.81
        \end{array} \right) 
        \,.
    \end{align}
}{}

    Hence, the original matrices $A_1$ and $A_2$ can be simultaneously triangularized as follows
     \begin{align}
        A_1  &\approx 4 
	U_1 
	\left( \begin{array}{cc}
	  1.20 & -1.48 \\
	  0 &  0.84
        \end{array} \right) 
	V^\dagger
	\approx
	U_1
	\begin{pmatrix}
	  4.79 &  -5.91
	 \\ 0  &  3.34
	\end{pmatrix}
	V^\dagger
	\,,
        \\
	A_2 &\approx 5
	U_2
	\left( \begin{array}{cc}
	  1.20 & -0.17 \\
	  0 &  0.84
        \end{array} \right) 
	V^\dagger
	\approx U_2
	\begin{pmatrix}
	5.99 &  -0.85
	 \\ 0  &  4.18
	\end{pmatrix}
	V^\dagger
	\,.
    \end{align}
\end{exmpl}

\begin{defn}[$K$-GMD]
\label{defn:K-GMD}
    The $K$-GMD is a special case of the $K$-JET where the entries on the diagonal are constant, namely
    \begin{align}
         \left[ R_k \right]_{jj} = \sqrt[n]{\det A_k} \,,&& \begin{array}{l}
                                                           	k=1,\ldots,K \\
    					  		j=1,\ldots,n
         \,.
                                                           \end{array}
    \end{align}
    In this case the resulting upper triangular matrices will be denoted by $T_k$ (instead of $R_k$ for the general $K$-JET):
    \begin{align}
    \label{eq:K-GMD}
      A_k &= U_k T_k V^\dagger \,, \quad k=1,\ldots,K \,.
    \end{align}
\end{defn}
\begin{remark}
     For $K=1$, $1$-GMD reduces to the GMD of \eqref{eq:gmd}.
\end{remark}
The proof of the existence of a JET of two matrices $A_1$ and $A_2$ \cite{STUD:SP} is based
upon applying the GMD \eqref{eq:gmd} to the (single) matrix  $A_1 A_2^{-1}$.
This technique is generalized for more matrices in the next lemma.

\begin{lemma}[Equivalence of Square K-GMD and (K+1)-JET]
\label{lem:k_to_k_plus_one_square}
Let $A_1,\ldots,A_{K+1}$ be $n \times n$ full-rank complex-valued matrices with equal determinants, and
define the $K$ matrices:
\begin{align} \label{eq:definition_of_a_square}
			B_k = A_k A_{K+1}^{-1}  \,, \qquad k=1,\ldots,K \,.
\end{align}
Then the following two statements are equivalent:
\begin{enumerate}
\item There exist $K+1$
unitary matrices $U_1,\ldots,U_{K},U_{K+1}$, of dimensions $n \times n$, such that
\begin{align} \label{eq:jet_of_a_square}
	U_k^\dagger B_k U_{K+1} = T_k \,,\qquad  k=1,\ldots,K \,,
\end{align}
where $\left\{ T_k \right\} $ are $n \times n$ upper triangular with all diagonal entries equal to $1$.
\item There exist $K+2$ unitary matrices
$U_1,\ldots,U_{K+1},V$, of dimensions $n \times n$, such that
\begin{align}
	U_k^\dagger A_k V = R_k \,,\qquad k=1,\ldots,K+1 \,,
\end{align}
where $\left\{ R_k \right\}$ are $n \times n$ upper triangular with equal diagonals,
as in \eqref{eq:R_equal_diagonals}.
\end{enumerate}
\end{lemma}

\begin{proof}
First, assume that statement 2 holds.
Thus, there exist $K+2$ unitary matrices $U_1,\ldots,U_{K+1},V$, of dimensions $n \times n$, such that
\begin{align}
	U_k^\dagger A_k V = R_k \,,\qquad k=1,\ldots,K+1 \,,
\end{align}
where $\left\{ R_k \right\}$ are $n \times n$ upper triangular with equal diagonals.
This implies that 
\begin{align}
	U_k^\dagger B_k U_{K+1} &=
      U_k^\dagger A_k A_{K+1}^{-1} U_{K+1}
\\
      &=
      U_k^\dagger A_k V V^\dagger A_{K+1}^{-1} U_{K+1}
\\
      &=
      R_k  R_{K+1}^{-1}
\\
      &=
	T_k \,,
\end{align}
where $T_k$ is upper triangular with all the diagonal elements equal to $1$,
which results in statement 1.
\\ \indent
Now, assume that statement 1 holds.
Perform the QR decomposition on the matrix $A_{K+1}^{-1} U_{K+1}$:
\begin{align}
	A_{K+1}^{-1} U_{K+1} = V R  \,,
\end{align}
where $V$ is a unitary matrix of dimensions $n \times n$, and $R$ is an \mbox{$n \times n$}  upper triangular matrix.
Thus, substituting \eqref{eq:definition_of_a_square}, we obtain the following equalities:
\begin{align}
  U_k^\dagger A_k V R & = U_k^\dagger A_k A_{K+1}^{-1} U_{K+1}
\\*
 & = U_k^\dagger B_k U_{K+1}  \,, \qqquad k=1,\ldots,K \,,
\end{align}
which, according to \eqref{eq:jet_of_a_square}, is equal to
\begin{align} \label{eq:miffy1_square}
&   U_k^\dagger A_k V R = T_k  \,, \qquad \qquad k=1,\ldots,K \,.
\end{align}
On the other hand, we have
\begin{subequations}
\label{eq:miffy2_square}
\noeqref{eq:miffy2_square:1,eq:miffy2_square:2}
\begin{align}
 U_{K+1}^\dagger A_{K+1} V R    &= U_{K+1}^\dagger A_{K+1} A_{K+1}^{-1} U_{K+1} 
\label{eq:miffy2_square:1}
\\*
 &= U_{K+1}^\dagger U_{K+1}  =  I \,.
\label{eq:miffy2_square:2}
\end{align}
\end{subequations}
Multiplying \eqref{eq:miffy1_square} and \eqref{eq:miffy2_square} by $R^{-1}$ on the right yields:
\begin{align}
&    U_k^\dagger A_k V         = T_k R^{-1}  \,, \qquad k=1,\ldots,K \\*
&    U_{K+1}^\dagger A_{K+1} V = R^{-1} \,.
\end{align}
Since $T_k$ are upper triangular with only $1$s on the diagonal, the matrices
$ R_k  \triangleq T_k R^{-1} $  ($k=1,\ldots,K$) and
$ R_{K+1} \triangleq R^{-1} $
have equal diagonals, which completes the proof.
\end{proof}

\begin{remark}
As a consequence of \lemref{lem:k_to_k_plus_one_square}, if it is possible to perform $K$-GMD
on \emph{any} $K$ full rank square matrices having the same determinant, then it is also possible to perform $(K+1)$-JET
on \emph{any} $K+1$ full rank square matrices of the same dimensions and the same determinant, and vice versa.
In particular, since $1$-GMD is always possible, it is also always possible
to perform $2$-JET on \emph{any} two full rank square matrices of the same dimensions and equal determinants.
\end{remark}

\begin{remark}
  The condition of equal determinants in Definitions \ref{defn:K-JET} and \ref{defn:K-GMD}
  may be replaced with a slightly weaker condition of equal absolute values of the determinants, i.e.,
  \begin{align}
    \left| \det(A_1)  \right| = \left| \det(A_2)  \right| = \cdots = \left| \det(A_K)  \right| \,.
  \end{align}
  This is easily achieved by multiplying by additional diagonal phase matrices on the left in \eqref{eq:K-JET} and \eqref{eq:K-GMD}.
\end{remark}

\subsection{Geometric Interpretation of the JET}
\label{ss:GeometricInterpretationJET}
Following the geometric interpretation of the GTD in \secref{ss:GeometricInterpretationGTD},
we give a geometric interpretation of the JET for the special case of $2 \times 2$ matrices:
\begin{align}
    &A_1 \triangleq
    \begin{pmatrix}
        \ba^{(1)} & \bb^{(1)}
    \end{pmatrix}
    =
    \left( \begin{array}{cc}
            a_x^{(1)} & b_x^{(1)} \\
            a_y^{(1)} & b_y^{(1)}
         \end{array}
    \right)
 \\
    &A_2 \triangleq
    \begin{pmatrix}
        \ba^{(2)} & \bb^{(2)}
    \end{pmatrix}
    =
    \left( \begin{array}{cc}
            a_x^{(2)} & b_x^{(2)} \\
            a_y^{(2)} & b_y^{(2)}
         \end{array}
    \right)
    \,,
\end{align}
where $\ba^{(i)}$ and $\bb^{(i)}$ are the first and second columns of $A_i$ ($i=1,2$), respectively.
The interpretation for the general case is a simple extension of the $2 \times 2$ case.
As in \secref{ss:GeometricInterpretationGTD}, we assume, without loss of generality, that 
$\det(A_1) = \det(A_2) = 1$.

By multiplying both matrices $A_1$ and $A_2$ on the right by the same rotation matrix $V$~\eqref{eq:gtd_interprtation:V},
we obtain ($i=1,2$)
\col{
\begin{align}
  &A_i V
  \triangleq 
  \begin{pmatrix}
    \tba^{(i)} & \tbb^{(i)}
  \end{pmatrix}
  =
  \\
  &\!\!\begin{bmatrix}
      a_x^{(i)} \cos\theta_r+b_x\sin\theta_r &
      a_x^{(i)} \cos(\theta_r+\frac{\pi}{2}) + b_x^{(i)} \sin(\theta_r+\frac{\pi}{2}) \\
      a_y^{(i)} \cos\theta_r+b_y\sin\theta_r &
      a_y^{(i)} \cos(\theta_r+\frac{\pi}{2}) + b_y^{(i)} \sin(\theta_r+\frac{\pi}{2})
  \end{bmatrix}
  \!\!.
\label{eq:JET_interp:AV}
\end{align}
}
{
\begin{align}
  A_i V 
  &\triangleq \begin{pmatrix}
    \tba^{(i)} & \tbb^{(i)}
  \end{pmatrix}
  \\
  &=
  \begin{bmatrix}
      a_x^{(i)} \cos\theta_r+b_x\sin\theta_r &
      a_x^{(i)} \cos(\theta_r+\frac{\pi}{2}) + b_x^{(i)} \sin(\theta_r+\frac{\pi}{2}) \\
      a_y^{(i)} \cos\theta_r+b_y\sin\theta_r &
      a_y^{(i)} \cos(\theta_r+\frac{\pi}{2}) + b_y^{(i)} \sin(\theta_r+\frac{\pi}{2})
  \end{bmatrix}
   \,.
\label{eq:JET_interp:AV}
\end{align}
}
That is, 
we obtain two ellipses of equal area (absolute value of determinant), 
centered at the origin (see \figref{subfig:fig_jet_o}).
The norms of the first column vectors in \eqref{eq:JET_interp:AV}, 
$\tba^{(1)}$ and $\tba^{(2)}$, 
are $2\pi$-cyclic continuous functions of $\theta_r$.
Thus, using the intermediate value theorem, 
there exists an angle $\theta_r$ (and in fact, four such angles per cycle)
for which the norms of $\tba^{(1)}$ and $\tba^{(2)}$ are equal, 
as illustrated in \figref{subfig:fig_jet_v}.

Multiplying each of the resulting matrices, $A_i V$,
on the left, by an appropriate rotation matrix $U_i^\dagger$, 
where
\begin{align}
  U_i&=\left( \begin{array}{cc}
	    \cos\theta^{(i)}_\ell & -\sin\theta^{(i)}_\ell \\
	    \sin\theta^{(i)}_\ell &  \cos\theta^{(i)}_\ell
	  \end{array}
    \right)\,,
\end{align}
rotates both column vectors of $A_i V$ by the same 
angle, $\left( -\theta_\ell^{(i)} \right)$, without altering their norms.
Thus, by choosing $\theta_\ell^{(i)}$, such that 
$U_i^\dagger \tba^{(i)}$ are aligned with the $x$-axis, for both $i=1,2$, 
we achieve the desired decomposition, 
as depicted in Figures \ref{subfig:fig_jet_u1} and \ref{subfig:fig_jet_u2}.

\col{
\begin{figure*}[th]
  \centering
  \subfloat[All possible column vectors of $A_i V$, where $V$ is a rotation matrix; the original column vectors (for $V=I$) are depicted explicitly.]
  {\label{subfig:fig_jet_o}
    \centering
    \epsfig{file = 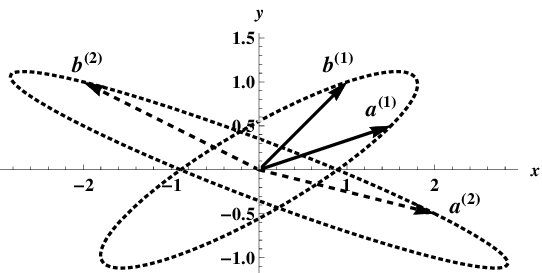, width = .75\columnwidth}
  } \qquad
  \subfloat[Right rotation by $\theta_r \approx 1.34$, for which the resulting vectors $\tba^{(1)}$ and $\tba^{(2)}$ have equal norms.]{\label{subfig:fig_jet_v}
    \centering
    \epsfig{file = 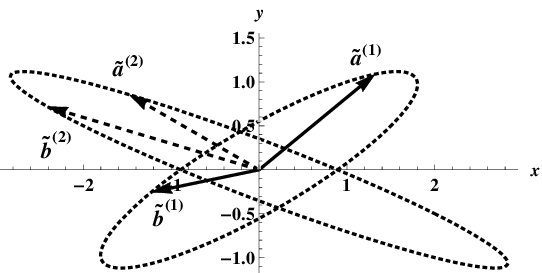, width = .75\columnwidth}
  }
\\
  \subfloat[Left rotation by $\theta^{(1)}_\ell \approx 0.69$ of  the first matrix, for which the first vector is aligned with the $x$-axis.]{\label{subfig:fig_jet_u1}
    \centering
    \epsfig{file = 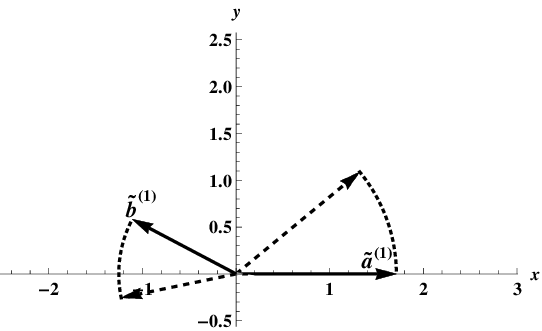, width = .75\columnwidth}
  } \qquad
  \subfloat[Left rotation by $\theta^{(2)}_\ell \approx 2.62$ of  the first matrix, for which the first vector is aligned with the $x$-axis.]{\label{subfig:fig_jet_u2}
    \centering
    \epsfig{file = 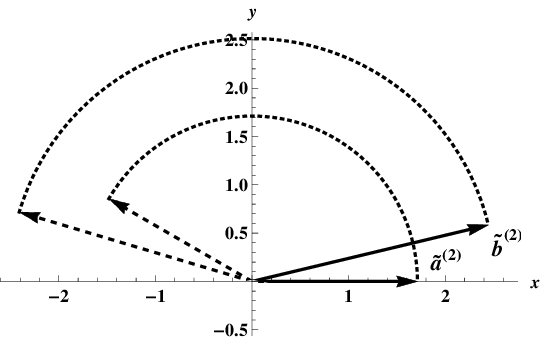, width = .75\columnwidth}
  }
  \caption{JET for $\ba^{(1)} = (3/2,1/2)^\dagger\,, \bb^{(1)}=(1,1)^\dagger\,, \ba^{(2)}=(2,-0.5)^\dagger\,,\bb^{(2)}=(-2,1)^\dagger$}
\label{fig:jet}
\end{figure*}
}
{
\begin{figure}[t]
  \centering
  \subfloat[All possible column vectors of $A_i V$, where $V$ is a rotation matrix; the original column vectors (for $V=I$) are depicted explicitly.]
  {\label{subfig:fig_jet_o}
    \centering
    \epsfig{file = fig5a.eps, scale=0.8}
  } \qquad
  \subfloat[Right rotation by $\theta_r \approx 1.34$, such that the resulting vectors $\tba^{(1)}$ and $\tba^{(2)}$ have equal norms.]{\label{subfig:fig_jet_v}
    \centering
    \epsfig{file = fig5b.eps, scale=0.8}
  }
\\
  \subfloat[Left rotation by $\theta^{(1)}_\ell \approx 0.69$ of  the first matrix, such that the first vector is aligned with the $x$-axis.]{\label{subfig:fig_jet_u1}
    \centering
    \epsfig{file = fig5c.eps, scale=0.8}
  } \qquad
  \subfloat[Left rotation by $\theta^{(2)}_\ell \approx 2.62$ of  the first matrix, such that the first vector is aligned with the $x$-axis.]{\label{subfig:fig_jet_u2}
    \centering
    \epsfig{file = fig5d.eps, scale=0.8}
  }
  \caption{JET for $\ba^{(1)} = (3/2,1/2)^\dagger\,, \bb^{(1)}=(1,1)^\dagger\,, \ba^{(2)}=(2,-1)^\dagger\,,\bb^{(2)}=(-1,1)^\dagger$}
\label{fig:jet}
\end{figure}
}

\begin{remark}
    JET of more than two matrices is not possible, in general.
    This may be seen in the $2 \times 2$ case, that while every two ellipses must intersect for some value of $\theta_r$, due to the intermediate value theorem, 
    there is no hope for simultaneous intersection of more trajectories.
\end{remark}

\col{
\subsection{MIMO Common-Message Broadcast Scheme via \\ Matrix Decomposition}
}{
\subsection{MIMO Common-Message Broadcast Scheme via Matrix Decomposition}
}
\label{ss:multi_user_scheme}
The scheme of \secref{ss:p2p_scheme} can be generalized for the $K$-user BC channel
\eqref{MIMO-BC}  in a straightforward manner,
by replacing the GTD \eqref{eq:gtd} with the $K$-JET \eqref{eq:K-JET}.

Let $\CCC$ be an admissible covariance matrix.
As will be explained in \remref{rem:equal_det},
 we can assume without loss of generality that
\mbox{$\Capacity{1}=\cdots=\Capacity{K}$}.
The following scheme achieves the rate
$\Capacity{i}$.
Therefore, the common-message BC capacity \eqref{eq:MIMO_BC_capacity} can be achieved by an
appropriate choice of the matrix $\CCC$.

Applying Definitions \ref{def:augmented}  and \ref{def:canonical} we
define
\begin{align}
\label{eq:tilde_G_K_users}
	\tH_k &\triangleq
	\left(\begin{array}{c}
		  H_k \CCC^{1/2} \\ I_{n_t}
	      \end{array}
	\right)\,,
    \\
    \tH_k &= Q_k G_k \,, \qquad k=1,\ldots,K \,,
\label{eq:G_matrix_K_users}
\end{align}
where $I_{n_t}$ is the $n_t \times n_t$ identity matrix, $\CCC$ is any admissible covariance matrix,
the matrices $\tH_k$ are the augmented channel matrices,
$Q_k$ are $(n_r^{(k)}+n_t) \times n_t$ matrices with orthonormal columns, and $G_k$ are the canonical channel matrices of dimensions $n_t \times n_t$ and are upper triangular with real positive diagonal elements.

Now, assume that there exists a $K$-JET of the matrices $G_k$:
\begin{align}
    G_k &= U_k R_k V^\dagger \,, \qquad k=1,\ldots,K \,,
\end{align}
where $R_k$ are upper triangular matrices whose diagonal values are equal to $r_1,\ldots,r_{n_t}$.
Then, the same transmission scheme as in \secref{ss:p2p_scheme} may be employed, where in step 3 the $k$-th receiver uses the matrices $Q_k$ and $U_k$ in \eqref{eq:scheme_receiver}.

\begin{remark}
As in \remref{rem:gmd_same_codebook}, if the $K$-JET in the above scheme is also a $K$-GMD \eqref{eq:K-GMD}, then the capacity \eqref{eq:MIMO_BC_capacity} can be achieved using \emph{the same} scalar codebook over all scalar sub-channels.
\end{remark}

\begin{remark}
\label{rem:equal_det}
    Consider the case where, for the optimal input covariance matrix $C_\bx$, the mutual informations to the different users, $\{ I(H_k, C_\bx) \}$, are not all equal.
    In this case, the common-message BC capacity \eqref{eq:MIMO_BC_capacity} is limited to the minimum of these mutual informations:
    \begin{align}
        C \left( \left\{ H_k \right\}_{k=1}^K, P \right) &= \max_{C_\bx} \min_{k=1,\dots,K} I(H_k,C_\bx) \,.
    \end{align}
    Rewriting these mutual informations in terms of the channel canonical matrices $\{G_k\}$:
    \begin{align}
        I(H_k,C_\bx) &= \log \det \left( I + H_k \CCC H_k^\dagger \right)  \\
        &= \log \det \left( I + H_k^\dagger \CCC^{1/2 \dagger} \CCC^{1/2} H_k \right)  \\
        &= \log \det \left(\tH_k^\dagger \tH_k \right) \\
        &= \log \det \left( \left(Q_k G_k\right)^\dagger Q_k G_k  \right) \\
        &= \log \det \left( G_k^\dagger Q_k^\dagger Q_k G_k  \right) \\
        &= \log \det \left( G_k^\dagger  G_k  \right) \\
        &=2 \log |\det{(G_k)}|
        \,,
    \end{align}
    we have
    \begin{align}
        C \left( \left\{ H_k \right\}_{k=1}^K, P \right)
    &=  2 \log \min_{k=1,\dots,K}  \det{(G_k)} \,,
    \end{align}
    where the absolute value operation may be dropped as explained in \remref{rem:greater_than_1_diag}.

    Thus, the common-message BC capacity is dictated by the user having the minimal $\det(G_k)$.

    Applying K-JET to the matrices $\{ G_k \}$, results in \emph{proportional} diagonal elements (in contrast to the equal diagonals resulting when all mutual informations are equal; see \remref{rem:JET:equal_det}). Since these effective diagonal entries correspond to the effective SNRs of the effective scalar sub-channels observed by each user,
    this implies, in turn, that the users having larger mutual information have larger effective SNRs.
    However, since the common-message BC capacity is limited to the minimum of the mutual informations, the excess SNRs of the users with larger mutual informations (and $\det(G_k)$) has no effect on achievable rate. 

    This ``bottleneck phenomenon'' is illustrated in the following example.
\end{remark}
\vspace{.5\baselineskip}

\begin{exmpl}[\exref{ex:non_equal_dets} Continued]
    Consider the two channel canonical matrices $G_1$ and $G_2$ (replacing $A_1$ and $A_2$ in \exref{ex:non_equal_dets}).
    \begin{align}
    G_1 &=
    \begin{pmatrix}
        2 & 1 \\
        0 & 8
    \end{pmatrix}
    = 4
    \begin{pmatrix}
        0.5 & 0.25 \\
        0 & 2
    \end{pmatrix}
    \,,&
    \det \left( G_1 \right) &= 16 \,,\\
    G_2 &=
    \begin{pmatrix}
        5 & -2 \\
        0 & 5
    \end{pmatrix}
    = 5
    \begin{pmatrix}
        1 & -0.4 \\
        0 & 1
    \end{pmatrix}
    \,,&
    \det \left( G_2 \right) &= 25 \,.
    \end{align}
By applying JET to ${G_1}$ and ${G_2}$ we obtain
     \begin{align}
	R_1 = 4
	\left( \begin{array}{cc}
	  1.20 & -1.48 \\
	  0 &  0.84
        \end{array} \right) 
        \\
	R_2 = 5
	\left( \begin{array}{cc}
	  1.20 & -0.17 \\
	  0 &  0.84
        \end{array} \right) 
    \end{align}
    The corresponding common-message BC capacity is, therefore, 
    \begin{align}
    \label{eq:rem:equal_det:capacity}
        C &= 2 \log \min \big\{ \det \left( G_1 \right), \det \left( G_2 \right) \big\} \\
           &\approx \underbrace{2 \log \left( 4 \times 1.20 \right)}_{\text{Rate of stream 1}}+ \underbrace{2 \log \left( 4 \times 0.84 \right)}_{\text{Rate of stream 2}}
        \approx 8 \,.
    \end{align}
Thus, the rates of the two streams are dictated by user 1, whereas user 2 has excess effective SNR in each of the streams.
\end{exmpl}

\begin{remark}[Decoding Order]
 Recall that in the single-user case, there is no loss (in terms of achievable rates) in restricting attention to upper triangular decomposition at the receiver, since any ordering can be represented as a permutation of the matrix $R$ in \eqref{eq:Ggtd}, namely,
\begin{align}
\label{eq:Ggtd_with_perms}
      G = U \Pi R \Pi^\dagger V^\dagger \,,
\end{align}
where $\Pi$ is a permutation matrix.
Since permutation matrices are unitary, \eqref{eq:Ggtd_with_perms} falls under the
framework \eqref{eq:Ggtd} without permutations.
In the multi-user case, on the other hand, each receiver can choose a different decoding order,
which implies that the different permutation matrices cannot be absorbed in the (single) matrix $V$.
 Hence, there is a loss of generality in the proposed scheme. This restriction is removed in \secref{ss:upper_lower}.
\end{remark}

\subsection{Perfect $2$-GMD for $2 \times 2$ Matrices}
\label{ss:multi_user_perfect_2x2}
In this section we provide necessary  and sufficient conditions for the existence of $2$-GMD for $2 \times 2$ matrices.
The conditions are stated in the following theorem.
As explained in \remref{rem:equal_det}, we can assume without loss of generality that both matrices have determinants equal to $1$.
According to \lemref{lem:k_to_k_plus_one_square}, this also provides a necessary and sufficient condition for the existence of a $3$-JET for $2 \times 2$ matrices.

\begin{thm}[2-GMD for $2 \times 2$ Matrices]
\label{thm:perfect_two_on_two_complex}
    Let $A_1$ and $A_2$ be \emph{complex-valued} $2 \times 2$ matrices with determinants equal to $1$.
    Then, there exist complex-valued $2 \times 2$ unitary matrices $U_1,U_2,V$ such that:
    \begin{align} \label{eq:two_by_two_gmd}
          U_k^\dagger A_k V =  \left( \begin{array}{cc}
    		      1 & * \\ 0 & 1
                       \end{array} \right) \,, \quad k=1,2 \,,
    \end{align}
    if and only if the following inequality is satisfied:
    \begin{align} \label{eq:condition3}
          F_1 \left(  A_1^\dagger A_1 - I ,  A_2^\dagger A_2 - I  \right) \geq 0 \,,
    \end{align}
    where
    \begin{align} \label{eq:definition_of_F}
        F_1(S_1,S_2)  \triangleq \det \big( S_1 \adj (S_2) - S_2 \adj (S_1)    \big) \,,
    \end{align}
    and $*$ represents an arbitrary value value (which may differ between the two matrices).
\end{thm}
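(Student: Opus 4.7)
The plan is to reduce the existence of the decomposition to a geometric condition on a single unit vector in $\compls^2$ and then recognize that condition as $F_1 \geq 0$. I first observe that a unitary $V$ extends to a triple $(U_1,U_2,V)$ satisfying \eqref{eq:two_by_two_gmd} if and only if its first column $\bv \in \compls^2$ is a unit vector with $\|A_k\bv\| = 1$ for $k=1,2$. Indeed, given such a $\bv$, extend it to a unitary $V = [\bv \mid \bv^\perp]$ and apply QR to each $A_kV = U_k R_k$: the $(1,1)$-entry of $R_k$ equals $\|A_k\bv\| = 1$, and since $\det(A_kV) = 1$, the $(2,2)$-entry is forced to be $1$ as well. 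The converse is immediate by reading off the first column of $V$. Setting $S_k \triangleq A_k^\dagger A_k - I$ (Hermitian), the norm condition $\|A_k\bv\|^2 = 1$ rewrites as $\bv^\dagger S_k \bv = 0$.

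I next expand each $S_k$ in the Pauli basis $\vec{\sigma} = (\sigma_x,\sigma_y,\sigma_z)$ as $S_k = \alpha_k I + \vec{\beta}_k \cdot \vec{\sigma}$, with $\alpha_k = \trace{S_k}/2 \in \reals$ and $\vec{\beta}_k \in \reals^3$, and parametrize unit vectors $\bv$ (modulo overall phase) by points $\hat{n} \in S^2 \subset \reals^3$ via $\bv\bv^\dagger = \tfrac{1}{2}(I + \hat{n}\cdot\vec{\sigma})$. A direct trace computation yields $\bv^\dagger S_k \bv = \alpha_k + \vec{\beta}_k \cdot \hat{n}$, so the problem becomes: does the affine line $\{\hat{n} \in \reals^3 : \vec{\beta}_k\cdot\hat{n} = -\alpha_k,\; k=1,2\}$ meet the unit sphere? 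Assuming $\vec{\beta}_1 \times \vec{\beta}_2 \neq 0$, a standard computation gives the squared distance from the origin to this line as $\|\alpha_2\vec{\beta}_1 - \alpha_1\vec{\beta}_2\|^2 / \|\vec{\beta}_1\times\vec{\beta}_2\|^2$, so existence of $\hat{n}$ is equivalent to
\[
  \|\vec{\beta}_1\times\vec{\beta}_2\|^2 \;\geq\; \|\alpha_2\vec{\beta}_1 - \alpha_1\vec{\beta}_2\|^2.
\]

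I then evaluate $F_1(S_1,S_2)$ independently. Using the $2\times 2$ identity $\adj(S) = \trace{S}\cdot I - S$ and the Pauli identity $[\vec{a}\cdot\vec{\sigma},\vec{b}\cdot\vec{\sigma}] = 2i(\vec{a}\times\vec{b})\cdot\vec{\sigma}$, I obtain
\[
  S_1\adj(S_2) - S_2\adj(S_1) = 2\bigl[(\alpha_2\vec{\beta}_1 - \alpha_1\vec{\beta}_2) - i(\vec{\beta}_1\times\vec{\beta}_2)\bigr]\cdot\vec{\sigma}.
\]
Applying $\det(\vec{w}\cdot\vec{\sigma}) = -\vec{w}\cdot\vec{w}$ for $\vec{w} \in \compls^3$ (with the unconjugated dot product), together with the orthogonality $(\alpha_2\vec{\beta}_1 - \alpha_1\vec{\beta}_2)\cdot(\vec{\beta}_1\times\vec{\beta}_2) = 0$, gives
\[
  F_1(S_1,S_2) = 4\bigl(\|\vec{\beta}_1\times\vec{\beta}_2\|^2 - \|\alpha_2\vec{\beta}_1 - \alpha_1\vec{\beta}_2\|^2\bigr),
\]
matching exactly the geometric criterion, so $F_1 \geq 0$ iff the line meets the sphere iff the decomposition exists.

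The main obstacle is the degenerate case $\vec{\beta}_1 \times \vec{\beta}_2 = 0$, where the ``line'' collapses. Here the unit-determinant hypothesis is used crucially: AM-GM applied to the eigenvalues of the positive matrix $A_k^\dagger A_k$ with $\det = 1$ yields $\det(S_k) \leq 0$, equivalently $\alpha_k \geq 0$ and $\|\vec{\beta}_k\|^2 \geq \alpha_k^2$, so every individual plane already meets the sphere in a circle. Consequently, when the $\vec{\beta}_k$ are parallel and nonzero, $F_1 \geq 0$ forces $F_1 = 0$ and the two planes to coincide, giving a whole circle of common $\hat{n}$'s; and when some $\vec{\beta}_k = 0$, the same estimate forces $\alpha_k = 0$, so $S_k = 0$ and $A_k$ is unitary, for which the decomposition exists trivially and $F_1 = 0$ as well. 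Checking these edge cases carefully is the one piece of bookkeeping needed to close the ``iff''.
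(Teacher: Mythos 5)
Your proposal is correct, and while its opening reduction coincides with the paper's (extend a candidate first column $\bv$ of $V$ to a unitary, apply QR to $A_kV$, and use the unit determinant to force the second diagonal entry, so that existence is equivalent to finding a unit $\bv$ with $\bv^\dagger(A_k^\dagger A_k-I)\bv=0$ for $k=1,2$), the way you settle that reduced question is genuinely different. The paper goes through \lemref{lem:det_adj_complex}: it first proves unitary invariance of $F_1$ (\lemref{lem:F_invariant}) so that $S_1$ may be taken real diagonal, then rewrites the constraints in real coordinates, splits into four cases, solves a linear system with a free parameter $t$, and identifies $F_1$ with the discriminant of the resulting quadratic in $t$. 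You instead use the Bloch-sphere picture: writing $S_k=\alpha_k I+\vec\beta_k\cdot\vec\sigma$ and $\bv\bv^\dagger=\tfrac12(I+\hat n\cdot\vec\sigma)$ turns the two quadratic constraints into two affine planes in $\reals^3$ that must meet the unit sphere, and a short Pauli computation exhibits $F_1(S_1,S_2)=4\left(\|\vec\beta_1\times\vec\beta_2\|^2-\|\alpha_2\vec\beta_1-\alpha_1\vec\beta_2\|^2\right)$ as exactly the quantity deciding whether the intersection line passes within distance one of the origin. This buys a coordinate-free, case-light argument that also explains geometrically what $F_1$ measures; the paper's route is more elementary and packages the statement as a lemma for arbitrary Hermitian pairs, which it reuses for \thrmref{thm:extention_does_not_help} and \thrmref{thm:upper_lower_complex} --- though your argument, combined with $\det S_k\le 0$ (i.e., $\alpha_k^2\le\|\vec\beta_k\|^2$), in fact yields that lemma in the same generality. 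Two cosmetic points: the step forcing the lower-right entry to be $1$ should invoke a QR with positive real diagonal (or absorb phases into $U_k$), since $\det(A_kV)=\det V$ is only unit-modulus (the paper's own proof has the same elision); and $\det S_k\le 0$ is equivalent to $\|\vec\beta_k\|^2\ge\alpha_k^2$ alone, the sign $\alpha_k\ge 0$ following separately from $\trace{A_k^\dagger A_k}\ge 2$ and not actually being needed.
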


\begin{remark}
    Even in the case where the  matrices $A_1$ and $A_2$ are real-valued, the resulting unitary matrices $U_1,U_2,$ and $V$ are, in general, \emph{complex-valued}. In fact, if $A_1,A_2$ are real valued, then it can be easily shown that the matrices $U_1,U_2$ and $V$ are real-valued if and only if \eqref{eq:condition3} holds \emph{with equality}.
    In  \secref{ss:restatement} we show how to obtain a communication scheme that involves only real-valued orthogonal transformations, under the same condition \eqref{eq:condition3}, using a space--time structure.
\end{remark}

\begin{remark}
    Using this theorem, a sufficient and necessary condition for the existence of a $2$-GMD for two $3 \times 3$ \emph{diagonal} matrices can be derived. The method of this derivation
    is demonstrated via an example of the ``rateless'' problem with three rates in 
    \ver{\secref{sss:rateless_three_rates}}{\cite[Sec.~V-E2]{JET:TechReport:SeveralUsers2013}}.
\end{remark}

The following lemma, the proof of which is given in \appref{app:det_adj_complex_proof}, 
will be used in the proof of the theorem.
\begin{lemma} \label{lem:det_adj_complex}
  Let $S_1$ and $S_2$ be complex-valued Hermitian \mbox{$2 \times 2$} matrices.
Then, there exists a complex-valued vector \mbox{$\bv \in \mathbb{C}^2$}, such that
\begin{align}
 \bv^\dagger S_1 \bv &= 0 \\
 \bv^\dagger S_2 \bv &= 0 \\
 \bv &\neq 0 \,,
\end{align}
if and only if the following conditions hold:
\begin{subequations}
\label{eq:lemma_tnayal}
\noeqref{eq:lemma_tnayal:0,eq:lemma_tnayal:1,eq:lemma_tnayal:2}
\begin{align}
\label{eq:lemma_tnayal:0}
      \det(S_1) & \leq 0 \\
\label{eq:lemma_tnayal:1}
      \det(S_2) & \leq 0 \\
\label{eq:lemma_tnayal:2}
      F_1 \left(  S_1 , S_2  \right) & \geq 0 \,,
\end{align}
\end{subequations}
where $F_1$ is defined as in \eqref{eq:definition_of_F}.
\end{lemma}
\vspace{\baselineskip}

\begin{proof}[Proof of \thrmref{thm:perfect_two_on_two_complex}]
Let $V$ be a $2 \times 2$ unitary matrix, and denote by
$\bv_1$ and $\bv_2$ the first and second columns of $V$, respectively.
Note that
\begin{align}
	A_k V = \left( A_k \bv_1 \middle| A_k \bv_2   \right)  \,, \quad k=1,2 \,.
\end{align}
We now perform the QR decomposition on the above matrices:
\begin{subequations}
\label{eq:av_is_triangular}
\noeqref{eq:av_is_triangular:1,eq:av_is_triangular:2}
\begin{align}
\label{eq:av_is_triangular:1}
  A_1 V &= U_1 T_1 \\
\label{eq:av_is_triangular:2}
  A_2 V &= U_2 T_2  \,,
\end{align}
\end{subequations}
where $U_1,U_2$ are unitary and $T_1,T_2$ are upper triangular.
Since we have
\begin{align}
    T_k = U_k^\dagger A_k V = \left( \begin{array}{c|c}
				     U_k^\dagger A_k v_1 & U_k^\dagger A_k v_2
                              \end{array} \right) \,,
\end{align}
and the norm of $A_k v_1$ equals that of $U_k^\dagger A_k v_1$,
the upper-left element of $T_1$ and $T_2$ is equal to $1$,
\begin{align} \label{eq:T_up_left}
      T_k = \left( \begin{array}{cc} 1 & * \\ 0 & * \end{array}
\right) \,, \quad k=1,2 \,,
\end{align}
 \emph{if and only if}:
\begin{align}
  \left\| A_1 \bv_1 \right\| &= 1 \\*
  \left\| A_2 \bv_1 \right\| & = 1 \,.
\end{align}
Also, since $V$ is required to be unitary, the norm of $\bv_1$ must equal $1$:
\begin{align}
      \left\|  \bv_1 \right\| = 1 \,.
\end{align}
Note that for every $\bv_1$,
we can choose a unit-norm vector $\bv_2$ that spans the subspace orthogonal to $\bv_1$,
thus constructing a unitary matrix $V$.
Also, since $V$ is unitary, $\det(A_1 V) = \det(A_2 V) = 1$, and therefore from \eqref{eq:T_up_left}
it follows that the bottom-right element also equals $1$.

Combining the above observations, it follows that 
there exists a $2 \times 2$ unitary matrix $V$
such that the decomposition \eqref{eq:av_is_triangular} is possible,
where $T_1,T_2$ have only $1$s on their diagonals,
if and only if the first column of $V$, denoted by $\bv_1$, satisfies the
following three equations:
\begin{align}
  \bv_1^\dagger A_1^\dagger A_1 \bv_1 &= 1 \\
  \bv_1^\dagger A_2^\dagger A_2 \bv_1 &= 1 \\
  \bv_1^\dagger \bv_1 & = 1 \,,
\end{align}
or equivalently,
\begin{align} \label{eq:vector_tnayal}
\begin{aligned}
  \bv_1^\dagger (A_1^\dagger A_1 - I) \bv_1 &= 0 \\
  \bv_1^\dagger (A_2^\dagger A_2 - I) \bv_1 &= 0 \\
  \bv_1^\dagger \bv_1 & = 1 \,.
\end{aligned}
\end{align}
Note that since $\det(A_1)=\det(A_2)=1$, we have
\begin{align}
  \det \left( A_1^\dagger A_1 - I \right) & \leq 0 \\
  \det \left( A_2^\dagger A_2 - I \right) & \leq 0 \,.
\end{align}
Using this result along with the result of \lemref{lem:det_adj_complex} with \mbox{$S_k = A_k^\dagger A_k - I$}, proves the theorem.

\end{proof}

\begin{corol}
\thmref{thm:perfect_two_on_two_complex} can easily be generalized as follows: for any $r>0$, there exist three complex-valued $2 \times 2$ unitary matrices $U_1,U_2,V$ such that:
\begin{align}
      U_k^\dagger A_k V =  \left( \begin{array}{cc}
		      r & * \\ 0 & 1/r
                   \end{array} \right) \,, \quad k=1,2 \,,
\end{align}
if and only if the following conditions are satisfied:
\begin{align}
      \det \left( A_1^\dagger A_1 - r^2 I \right)  & \leq 0 \\
      \det \left( A_2^\dagger A_2 - r^2 I \right)  & \leq 0 \\
      F_1 \left(  A_1^\dagger A_1 - r^2 I ,  A_2^\dagger A_2 - r^2 I  \right) & \geq 0 \,.
\end{align}
The proof of the corollary follows along the same line as that of
\thrmref{thm:perfect_two_on_two_complex} with obvious modifications.
\end{corol}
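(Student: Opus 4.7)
The plan is to mimic the proof of \thrmref{thm:perfect_two_on_two_complex} line by line, tracking the role of the constant $1$ and replacing it by $r$ wherever the target diagonal value appears. As before, I would let $V$ be a $2\times 2$ unitary matrix with first column $\bv_1$, apply the QR decomposition $A_k V = U_k T_k$ for $k=1,2$, and observe that the $(1,1)$ entry of $T_k$ equals $\norm{A_k \bv_1}$. Requiring this entry to equal $r$ yields the two quadratic constraints $\bv_1^\dagger (A_k^\dagger A_k - r^2 I)\bv_1 = 0$, which together with $\bv_1^\dagger \bv_1 = 1$ characterize admissible first columns. Since $\det(A_k)=1$ and $V$ is unitary, $\det(T_k)=1$, and the bottom-right entry of $T_k$ is therefore forced to be $1/r$.

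The next step is to invoke \lemref{lem:det_adj_complex} with the Hermitian matrices $S_k \triangleq A_k^\dagger A_k - r^2 I$. That lemma characterizes the existence of a non-zero $\bv$ annihilating both quadratic forms $\bv^\dagger S_k \bv$ in terms of exactly the three scalar conditions $\det(S_1)\leq 0$, $\det(S_2)\leq 0$, and $F_1(S_1,S_2)\geq 0$; any such $\bv$ may then be normalized to give $\bv_1$, completed to a unitary $V$ by choosing any unit vector $\bv_2$ orthogonal to $\bv_1$, after which the matrices $U_k$ are read off from the QR decomposition.

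The principal difference from the $r=1$ case is that the two individual determinant inequalities are no longer automatic. For $r=1$ and $\det(A_k)=1$, the singular values satisfy $\sigma_1(A_k)\sigma_2(A_k)=1$, which forces $(\sigma_1^2-1)(\sigma_2^2-1)\leq 0$ and allowed the theorem to state only the joint condition $F_1\geq 0$. For general $r>0$ the corresponding inequality $(\sigma_1^2 - r^2)(\sigma_2^2 - r^2) \leq 0$ reduces to $\sigma_2(A_k) \leq r \leq \sigma_1(A_k)$, which is precisely the single-matrix majorization condition of \thrmref{thm:gtd} for a triangularization with diagonal $(r,1/r)$. Hence all three conditions must now be stated explicitly, but none introduces any genuinely new technical content: the argument is entirely carried by \lemref{lem:det_adj_complex}, exactly as in the $r=1$ case, and I do not expect any substantive obstacle beyond this bookkeeping.
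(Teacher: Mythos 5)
Your proposal is correct and matches the paper's intent exactly: the paper offers no separate argument, stating only that the proof follows that of \thrmref{thm:perfect_two_on_two_complex} with obvious modifications, and your line-by-line adaptation (replacing $1$ by $r$, invoking \lemref{lem:det_adj_complex} with $S_k = A_k^\dagger A_k - r^2 I$) is precisely that. You also correctly pinpoint the one genuine change~--- the two determinant inequalities are no longer automatic for $r \neq 1$ and must therefore appear as explicit conditions~--- so no gap remains.
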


\ver{}{
We next present a special class of channel matrices, for which perfect JET of more than three matrices is possible.
Another interesting special class of matrices stemming from the Gaussian rateless problem can be found in \cite[Sec.~V-E]{JET:TechReport:SeveralUsers2013} and \cite{ETW_GaussianRateless,JET:RatelessITW2011,ayal_thesis}.\footnote{A numerical derivation of the precoding matrix $V$ in the case of a rateless code (even for parameters for which a perfect decomposition is not possible) is available at \cite{ETW_Matlab}.}
}

\ver{
\subsection{Example: ``Rateless'' Codes over the AWGN Channel}
\label{ss:multi_user_rateless}
We now consider the problem of constructing scalar Gaussian rateless codes, treated in \cite{ETW_GaussianRateless}.\footnote{A numerical derivation of the precoding matrix $V$ in the case of a rateless code (even for parameters for which a perfect decomposition is not possible) is available in \cite{ETW_Matlab}.}
The constructed codes are designed for a complex AWGN channel,
\begin{align} \label{eq:rateless_blocks}
      \by_l = \alpha \bx_l + \bz_l\,, \qquad l = 1,2,\ldots \,,
\end{align}
where $\alpha$ is a channel gain that varies from receiver to receiver, $\bx_l$
is the channel input vector of $M$ symbols, $\bz_l$ is a noise vector of $M$ i.i.d.\ complex Gaussian random variables, each of variance $1$, and $\by_l$ is the vector of $M$ channel output symbols.
The channel input is average-power limited, without loss of generality, to power $1$.

We assume that $\alpha$ can take one of $K$ possible values, such that  a gain of $\alpha_k$ implies that the
message should be decodable using only the first $k$ received blocks.\footnote{Alternatively, this can be viewed as a scheme that works for every value of $\alpha$, but designed to be optimal only for $K$ specific values.}
The gains are such that, for any value of $k$, the total capacity is the same:
\begin{align}
	C = k \log (1 + | \alpha_k |^2 ) \,, \qquad k=1,2,\ldots,K \,.
\end{align}
This implies that the compound capacity is achieved by a white input distribution.

The scheme proposed in \cite{ETW_GaussianRateless} consists of dividing the information message into $L$ sub-messages (``layers''), encoding each sub-message using a (fixed-block) codebook, designed for a scalar AWGN channel, and sending in each block some linear combination of those codewords. In the sequel we will consider only the case where $K=L$, i.e., the number of codewords used by the scheme is equal to the highest possible number of blocks received by the receiver.

Alternatively, this problem can be viewed as a $K$-user MIMO common-message BC problem, as follows:
the $K$ transmission blocks \eqref{eq:rateless_blocks} can be considered as a single transmission
over a Gaussian MIMO channel, with channel matrix
\begin{align}
	H = \left(
  \begin{array}{cccc}
	\alpha_k & 0 & \cdots & 0 \\
	0 & \alpha_k & \cdots & 0 \\
	\vdots & \vdots & \ddots & \vdots \\
	0 & 0 & \cdots & \alpha_k
  \end{array}
\right) \,.
\end{align}
Since the $k$-th user is allowed to use only the first $k$ blocks,
this is equivalent to removing the last $K-k$ rows from the corresponding channel matrix, namely, the channel
matrix of the $k$-th user becomes:
\begin{align} \label{eq:H_k}
	H_k = \left(
\overbrace{
  \begin{array}{cccc}
	\alpha_k & 0 & \cdots & 0\\
	0 & \alpha_k & \cdots & 0\\
	\vdots & \vdots & \ddots & \vdots \\
	0 & 0 & \cdots & \alpha_k
  \end{array}}^k
\overbrace{
  \begin{array}{ccc}
	0 & \cdots & 0\\
	0 & \cdots & 0\\
	\vdots & \vdots & \vdots \\
	0 & \cdots & 0
  \end{array}}^{K-k}
\right) \,.
\end{align}
Since the capacity-achieving distribution in this problem is white, this translates to
an input covariance matrix which is a scaled identity matrix. Namely, $\CC=I$.

Alternatively, the channel matrix of the $k$-th user can be viewed as a square $K \times K$ diagonal matrix,
where the last $K-k$ diagonal elements are forced to be zeros:
\begin{align} \label{eq:H_k_alternative}
	H_k = \left(
\overbrace{
  \begin{array}{cccc}
	\alpha_k & 0 & \cdots & 0\\
	0 & \alpha_k & \cdots & 0\\
	\vdots & \vdots & \ddots & \vdots \\
	0 & 0 & \cdots & \alpha_k \\
	0 & 0 & \cdots & 0 \\
\vdots & \vdots & \ddots & \vdots \\
0 & 0 & \cdots & 0
  \end{array}}^k
\overbrace{
  \begin{array}{ccc}
	0 & \cdots & 0\\
	0 & \cdots & 0\\
	\vdots & \vdots & \vdots \\
	0 & \cdots & 0 \\
	0 &  \cdots & 0 \\
 \vdots & \ddots & \vdots \\
0 & \cdots & 0
\end{array}}^{K-k}
\right) \,.
\end{align}
This alternative representation yields the same results as the representation \eqref{eq:H_k}.

We now recover the results of \cite{ETW_GaussianRateless}, giving explicit constructions for $K=2,3$.

\subsubsection{Two Rates ($K=2$)}

Specializing the problem to the case of one (possible) incremental redundancy block ($K=2$), the
two channel matrices are (same as $H_1$ and $H_3$ in \exref{DOF-example})
        \begin{align}
            H_1 = \left(  \begin{array}{cc} \alpha_1 & 0  \end{array}\right)
        	\,,\quad
            H_2 = \left(  \begin{array}{cc} \alpha_2 & 0  \\ 0  & \alpha_2 \end{array}\right) \,,
        \end{align}	
where $\alpha_1,\alpha_2$ are values satisfying
\begin{align}
\log(1 + \left| \alpha_1 \right|^2) = 2\log(1 + \left| \alpha_2 \right|^2) = C \,.
\end{align}
Applying the scheme of \secref{ss:multi_user_scheme} yields the following precoding matrix \cite{JET:RatelessITW2011}:
        \begin{align}
            V =
            \sqrt{\frac{1}{2^{C/2}+1}}
            \left(
              \begin{array}{cc}
                1 & 2^{C/4} \\
                2^{C/4} & -1 \\
              \end{array}
            \right)  \,,
        \end{align}
which coincides with the result in~\cite[Sec.~III]{ETW_GaussianRateless}.

\subsubsection{Three Rates}
\label{sss:rateless_three_rates}
The case of $K=3$ was also treated in \cite{ETW_GaussianRateless},
where a condition for which a ``perfect'' scheme exists was derived.
We will now shed light on this condition.

Again, representing the problem as a MIMO common-message BC one, the three possible channel matrices are:
\begin{align}
H_1 &= \left( \begin{array}{ccc}
	  \alpha_1 & 0 & 0
       \end{array}
\right)
\,,
\\
H_2 &= \left( \begin{array}{ccc}
	  \alpha_2 & 0 & 0 \\
	  0 &  \alpha_2  & 0
       \end{array}
\right)
\,, \\
H_3 & = \left( \begin{array}{ccc}
	  \alpha_3 & 0 & 0 \\
	  0 &  \alpha_3  & 0 \\
	  0 & 0 &  \alpha_3
       \end{array}
\right)\,,
\end{align}
where $\alpha_1,\alpha_2,\alpha_3$ are values satisfying
\begin{align}
\log(1 + \left| \alpha_1 \right|^2) = 2\log(1 + \left| \alpha_2 \right|^2) = 3 \log(1 + \left| \alpha_3 \right|^2) = C \,.
\end{align}
The channel canonical matrices, as defined in  \eqref{eq:G_matrix_K_users}, are:
\begin{align}
G_1 &=  \left( \begin{array}{ccc}
	  2^\frac{C}{2} & 0 & 0 \\
	  0 &  1  & 0 \\
	  0 & 0 & 1
       \end{array}
\right)
\,,
\\
G_2 &=  \left( \begin{array}{ccc}
	  2^\frac{C}{4} & 0 & 0 \\
	  0 &  2^\frac{C}{4}  & 0 \\
	  0 & 0 & 1
       \end{array}
\right)
\,,
\\
G_3 &=  \left( \begin{array}{ccc}
	  2^\frac{C}{3} & 0 & 0 \\
	  0 &  2^\frac{C}{3}  & 0 \\
	  0 & 0 & 2^\frac{C}{3}
       \end{array}
\right)\,.
\end{align}
Since $G_3$ is a s scaled identity matrix, we are in fact seeking a $2$-GMD of the remaining two matrices.
Thus, denoting $b = 2^{\frac{C}{12}}$, we need to perform a $2$-GMD on the following two $3 \times 3$ matrices,
\begin{align}
G_1 =  \left( \begin{array}{ccc}
	  b^6 & 0 & 0 \\
	  0 &  1  & 0 \\
	  0 & 0 & 1
       \end{array}
\right)
\,,\quad
G_2 =  \left( \begin{array}{ccc}
	  b^3 & 0 & 0 \\
	  0 &  b^3  & 0 \\
	  0 & 0 & 1
       \end{array}
\right)\,.
\end{align}
Equivalently, dividing both matrices by  $b^2$, we are seeking a $2$-GMD of the following two $3 \times 3$ matrices, both having a determinant equal to $1$:
\begin{align}
A_1 & =  \left( \begin{array}{ccc}
	  b^4 & 0 & 0 \\
	  0 &  b^{-2}   & 0 \\
	  0 & 0 & b^{-2}
       \end{array}
\right)
\\
A_2 & =  \left( \begin{array}{ccc}
	  b & 0 & 0 \\
	  0 &  b  & 0 \\
	  0 & 0 & b^{-2}
       \end{array}
\right)\,.
\end{align}
As shown in \appref{app:rateless_reduction}, this reduces to performing $2$-GMD on the following two $2 \times 2$ matrices:
\begin{align}
\tilde{A}_1 &= \left( \begin{array}{cc}	
	  \frac{\sqrt{1-b^2+b^8}}{b^2} & \frac{b^6-1}{b\sqrt{(1-b^2+b^8)(1+b^2+b^4)}} \\
	  0 & \frac{b^2}{\sqrt{1-b^2+b^8}}	
       \end{array}
\right)
\\
\tilde{A}_2 &=
\left( \begin{array}{cc}
	   b  & 0 \\
	  0 & b^{-1}
       \end{array}
\right)\,.
\end{align}
We have:
\begin{align}
& F_1(\tilde{A}_1^\dagger \tilde{A}_1-I,\tilde{A}_2^\dagger \tilde{A}_2 - I) =
\\
&  - \frac{(b^2-1)^4(b^2+1)^2(1+b^2+b^4)(1-3b^2+b^4)}{b^{12}} \,,
\end{align}
where $F_1$ is defined in \eqref{eq:definition_of_F}.
According to \thrmref{thm:perfect_two_on_two_complex}, there exists a solution if and only if
this value is non-negative, namely,
\begin{align}
	 1 - 3 \cdot 2^\frac{C}{6} + 2^\frac{C}{3}  \leq 0\,.
\end{align}
This condition is satisfied if and only if:
\begin{align}
C \leq 6 \log \left( \frac{3 + \sqrt{5}}{2}      \right) \approx 8.331\,,
\end{align}
which coincides with the result that was obtained in \cite{ETW_GaussianRateless},
where arduous algebraic manipulations were used to obtain this condition.

Finally, we note that there exists a similar result for four rates ($K=4$). In this case, it is shown in  \cite{ayal_thesis} that there exists a perfect solution if and only if the rate $C$ does not exceed a critical rate, which equals approximately $10.55$.

}{}

\subsection{Example: Arbitrarily Permuted Parallel Channels}
\label{ss:multi_user_perms}
The problem of transmitting information over arbitrarily permuted parallel channels was studied by Willems and Gorokhov \cite{WillemsGorokhov_PermutedChannels} and by Hof et al.~\cite{Hof_Permuted}.
In this point-to-point scenario, the transmitter is connected to the receiver via $M$ parallel memoryless channels,
sharing the same input alphabet,
the transition matrices of which are known at the transmitter but not their order.
Namely, at each time instant, the transmitter generates $M$ input symbols to be sent over the $M$ parallel channels,
and these symbols are then \emph{permuted} by a one-to-one-mapping (permutation) $\pi \in S_M$ from $\left\{1,\ldots,M\right\}$ onto itself.

The permutation $\pi$ is arbitrary, yet constant throughout the transmission block,
and is known to the receiver but not to the transmitter.
The aim of the receiver is to recover the transmitted message with arbitrarily small error probability.
This channel model is of relevance in scenarios where the gains of the channels are generated according
to an i.i.d.\ distribution, and one may choose the ``design gains'' so as to minimize the outage probability;
for details see \cite[Sec.~VII]{WillemsGorokhov_PermutedChannels}.

In this section we describe a practical capacity-achieving scheme for the \emph{Gaussian} case, described by
\begin{align}
\label{eq:perm_channel_model}
     y_m = \alpha_m x_m + z_m  \,, \quad m=1,2,\ldots,M \,,
\end{align}
where
$x_m$ is the input to the $m$-th channel and is subject to a power constraint\footnote{Alternatively,
the individual power constraints can be replaced by a sum-power constraint.
However, both cases reduce to the same result.}
\begin{align} \label{eq:perms_power_constraint}
\mathbb{E} \left( |x_m|^2 \right) \leq 1 \,,
\end{align}
$y_m$ is the output of the \mbox{$m$-th} channel,
and $\left\{ z_m \right\} $ are i.i.d.\ circularly-symmetric Gaussian variables with unit variance,
independent of $\left\{ x_m \right\}$.
The gains $\left\{ \alpha_m \right\}$ are known to the receiver,
whereas the transmitter knows the gains up to an unknown permutation. Namely, the transmitter knows
the gains but not their \emph{order}.

The $M$ parallel channels~\eqref{eq:perm_channel_model} may be regarded as a single MIMO channel,
\begin{align}
      \by = H\bx + \bz \,,
\end{align}
where $\bx$ is the channel input vector of length $M$, and
$\bz$ is a circularly-symmetric white Gaussian random vector of length $M$ and identity covariance matrix.
The channel matrix $H$ is an $M \times M$ diagonal matrix, which is known at the receiver:
\begin{align} \label{eq:h_permuted}
      H = \left( \begin{array}{cccc}
	  \alpha_1 & 0 & \cdots & 0 \\
	  0 & \alpha_2 & \cdots & 0 \\
	\vdots & \vdots & \ddots & \vdots \\
	  0 & 0 & \cdots & \alpha_M
\end{array}
\right) \,.
\end{align}
The transmitter knows the matrix $H$, up to the unknown order of the diagonal elements.

The latter is, in turn, equivalent to broadcasting the same (common) message to
$K=M!$ receivers simultaneously, where the channel matrix to user $k$ is
\begin{align}
      H_k \triangleq \left( \begin{array}{cccc}
	  \alpha_{\pi_k(1)} & 0 & \cdots & 0 \\
	  0 & \alpha_{\pi_k(2)} & \cdots & 0 \\
	\vdots & \vdots & \ddots & \vdots \\
	  0 & 0 & \cdots & \alpha_{\pi_k(M)}
\end{array}
\right) \,,
\end{align}
and $\pi_k \in S_K$ is a permutation which is different for each user.
As a consequence, this transmission problem may be regarded as a special case of the common-message Gaussian MIMO broadcast one.
Under the power constraint \eqref{eq:perms_power_constraint},
the capacity of this common-message BC scenario is obtained by
taking $\CC=I$ in \eqref{eq:MIMO_BC_capacity}, namely,
\begin{align} \label{eq:capacity}
      C = \sum_{m=1}^{M} \log \left( 1 + |\alpha_m|^2 \right) \,.
\end{align}
We now show how the same transmission schemes as described in the previous sections can be used in this scenario for $M=2 \rightarrow K=2$ and $M=3 \rightarrow K=6$. We give here only the results without proofs. The full details are given in \cite{JET:Permuted_ISIT2012}.

For the case of $M=2$, the channel can be in one of two~``states'':
\begin{align}
    H_1 = \left( \begin{array}{ccc}
                               \alpha_1 & 0  \\
                               0 & \alpha_2  \\
                             \end{array} \right) \,,
\end{align}
\begin{align}
    H_2 = \left( \begin{array}{ccc}
                               \alpha_2 & 0  \\
                               0 & \alpha_1  \\
                             \end{array} \right) \,,
\end{align}
where $\alpha_1,\alpha_2 \geq 0$ are known.

Since there are only two options for the channel matrix $H$,
the capacity in this case can be achieved using JET, as described in \secref{ss:multi_user_intro}.
Specifically, capacity is achieved
by choosing the precoding matrix to be the (scaled) Hadamard matrix
(which coincides with the $2 \times 2$ DFT matrix):
\begin{align}
	V = \frac{1}{\sqrt{2}} \left(
	  \begin{array}{cc}
	     1 & 1 \\
	     1 & -1 \\
	  \end{array}
\right) \,.
\end{align}
Similarly, in the case of three parallel channels ($M=3$), we have:
\begin{align} \label{eq:abc}
    H = \left( \begin{array}{ccc}
                               \alpha_1 & 0 & 0 \\
                               0 & \alpha_2 & 0 \\
                               0 & 0 & \alpha_3
                             \end{array} \right) \,,
\end{align}
where $\alpha_1,\alpha_2,\alpha_3 \geq 0$ are known, up to an unknown permutation.
In this case, capacity is achieved by the following precoding matrix, which is the $3 \times 3$ DFT matrix:
\begin{align}
    V = \frac{1}{\sqrt{3}} \left( \begin{array}{ccc}
                               1 & 1 & 1 \\
                               1 & e & e^{-1} \\
                               1 & e^{-1} & e
                             \end{array} \right) \,,
\end{align}
where $e \triangleq e^{2 \pi i / 3}$.

For $M \geq 4$, capacity is no longer achieved using a DFT precoding matrix.
Nevertheless, extension of the above scheme to $4 \leq M \leq 6$ is possible \cite{JET:Permuted_ISIT2012} by utilizing algebras of higher dimensions, such as the quaternion algebra. These algebras can be materialized using a
space--time structure over the complex or real fields. Moreover, the complex field may be represented over the reals by incorporating time extensions, as is explained in the sequel~--- in \secref{ss:restatement}.

In the next section we describe the space--time structure that is used for the construction of joint triangularization of more than two matrices.

\section{Space--Time Triangularization}
\label{s:space_time}
\subsection{Introduction}
\label{ss:space_time_intro}
 As indicated by \thrmref{thm:perfect_two_on_two_complex}, joint (unitary) triangularization with constant diagonal values
($K$-GMD)
 is not always possible. However, even when the condition for joint triangularization does not hold, it is possible to gain more mathematical degrees of freedom  by utilizing multiple uses of the same channel realization. 
The idea of mixing the same symbols between multiple channel uses has much in common with OSTBC~\cite{Alamouti,TarokhJafarkhaniCalderbank_STBC}.
However, whereas space--time processing has traditionally been applied to an open-loop communication scenario,
in the present work it will be applied to the closed-loop common-message BC problem.

We first recall the idea of linear space--time codes, also known as linear dispersion codes (see, e.g., \cite{HassibiSTC}), which will be used as a building block for the proposed communication scheme.
For this, we consider the point-to-point MIMO Gaussian channel, with an $n_r \times n_t$ channel matrix $H$,
\begin{align}
      \by = H \bx + \bz \,.
\end{align}
We now utilize transmission over $N$ consecutive blocks, assuming that the channel matrix $H$ does not change between these blocks.
This is equivalent to sending time-extended symbols over the following \emph{time-extended channel}:
\begin{align} \label{eq:extended_channel}
    \yyy = \HHH \xxx + \zzz \,.
\end{align}
The time-extended vectors $\xxx,\yyy,\zzz$ are composed of $N$ ``physical'' (concatenated) input, output, and noise vectors, respectively, and $\mathcal{H}$ is the $(N n_r) \times (N n_t)$  \emph{time-extended channel matrix} defined as
\begin{align}
\label{eq:extended_channel_def}
      \mathcal{H} = \blkmat{H}{N} \,,
\end{align}
where $ \blkmat{A}{N} $ denotes the Kronecker product $I_N \otimes A$, viz.\ a block-diagonal matrix with $N$ blocks of $A$ on its diagonal:
\begin{align}
    \blkmat{A}{N} \triangleq \left( \begin{array}{cccc} A & 0  & \cdots & 0 \\ 0 & A & \cdots & 0 \\ \vdots & \vdots & \ddots & \vdots \\ 0 & 0 & \cdots & A
                   \end{array} \right)  \,.
\end{align}
In linear space--time modulation (also known as ``space--time coding'') the extended input vector $\xxx$ is obtained by 
linearly combining independent streams of data symbols.\footnote{The transformation may, more generally, be taken to be linear over the reals. Nevertheless, for the purposes of this paper it suffices to consider only linear transformations over the complex numbers.}
Of special interest are modulations that possess a certain structure with the aim of 
facilitating decoding. Such a family includes OSTBCs, and in particular Alamouti modulation \cite{Alamouti}.
When using an OSTBC, the transmitter applies a unitary transformation, \emph{which does not depend on the channel matrix $H$}, to the data symbols, and the receiver applies another orthogonal transformation to the channel output, such that the effective channel matrix is transformed into a \emph{diagonal} form, over which communication is possible using off-the-shelf codes designed for \emph{scalar} AWGN channels.
Thus, \emph{simultaneous} diagonalization of all possible channel matrices, is attained.

Unfortunately, OSTBCs that universally achieve the white-input capacity of every channel, as is the case for Alamouti modulation, do not exist for MISO channels with more than $2$ transmit antennas, let alone for MIMO channels \cite{TarokhJafarkhaniCalderbank_STBC,LiangSTC}.

In this work, we use the idea of space--time modulation, but instead of diagonalizing the channel matrices,
we are content with \emph{triangularization}.
This, in turn, requires the employment of another ingredient to the communication scheme, namely,
successive interference cancellation at the receivers.
Further, in contrast to OSTBC, where the same transformation is applied to a continuum of channels,
the proposed approach is applicable to only a finite number of channel matrices.

\subsection{Space--Time Common-Message BC Scheme}
\label{ss:space_time_scheme}
We now introduce the space-time common-message BC scheme.
Recall the common-message broadcast MIMO channel \eqref{MIMO-BC} with $K$ users and $n_t$ transmit antennas. We now utilize transmission over $N$ consecutive blocks, assuming that the channel matrices do not change between these blocks.
This is equivalent to sending extended symbols over the following \emph{time-extended channels}:
\begin{align}
	\yyy_k = \HHH_k \xxx + \zzz_k \,, \qquad k=1,\ldots,K  \,,
\end{align}
where the time-extended vectors $\xxx, \yyy_k, \zzz_k$ and time-extended matrices $\HHH_k$ are defined as in \eqref{eq:extended_channel} and \eqref{eq:extended_channel_def}.
\!\!\footnote{This technique can be extended to the case where the channel matrices are time-varying. In this case, the time-extended channel matrices of \eqref{eq:extended_channel_def} are replaced by the block-diagonal matrices
\begin{align}
    \HHH_k = \left(
  \begin{array}{cccc}
	H_k^{(1)} & 0 & \cdots & 0 \\
	0 & H_k^{(2)} & \cdots & 0 \\
	\vdots & \vdots & \ddots & \vdots \\
	0 & 0 & \cdots & H_k^{(N)} \\
  \end{array}
\right)
    \,.
\end{align}}
The power constraint now becomes $\mathbb{E} \left[ \xxx^\dagger \xxx \right] \leq N P$.

Let $\CC$ be an $n_t \times n_t$ covariance matrix satisfying \mbox{$\trace \CC \leq P$}.
As explained in \remref{rem:equal_det}, we can assume
without loss of generality that
\begin{align}
  \Capacity{1}=\cdots=\Capacity{K}=C \,.
\end{align}

Define the  matrices $\tilde{H}_k$, $Q_k$, and $G_k$ as in
\eqref{eq:tilde_G_K_users} and {\eqref{eq:G_matrix_K_users}.
Further define the following time-extended channel canonical matrices:
\begin{align} \label{eq:extended_two}
    \GGG_k \triangleq  \blkmat{G_k}{N}  \,, \quad k=1,\ldots,K \,.
\end{align}
Now, assume that there exists a $K$-JET of the matrices $\GGG_k$:
\begin{align}
    \GGG_k &= \UUU_k \RRR_k \VVV^\dagger \,,
\end{align}
where $\RRR_k$ are upper triangular matrices whose diagonal values are equal to $r_1,\ldots,r_{n_t N}$.
Then, the same transmission scheme as in \secref{ss:p2p_scheme} can be employed,
with the following replacements:
\begin{itemize}
 \item
The transmitted vector $\bx$ is replaced by the time-extended vector $\xxx$
\item
The received vector $\by$ is replaced by the time-extended vector $\yyy_k$
\item
In step 3, the $k$-th user uses the matrix $\UUU_k$ instead of $U$ in \eqref{eq:scheme_receiver},
and the matrix $\tilde{Q}$ is replaced with its time-extended version,
$ \blkmat{ \tilde{Q}_k }{N} $, where $\tilde{Q}_k$ consists of the first $n_t$ rows of $Q_k$.
\end{itemize}

\subsection{Space--Time $2$-GMD for $2 \times 2$ Matrices}
\label{ss:restatement}
We now consider the special case where the transmitter is equipped with 2 antennas,
and we are interested in performing
$2$-GMD, or alternatively, $3$-JET, on the extended matrices.

As we saw in \secref{ss:multi_user_perfect_2x2}, 2-GMD of $2 \times 2$ matrices is not always possible.
This raises the question whether we can exploit the space--time structure to
perform $2$-GMD on the extended matrices,
even in cases where 2-GMD of the original (not time-extended) matrices is not possible.

For a general number of antennas $n_t$, we know that space--time structures can sometimes enable GMD in cases where it is not possible without time extensions (see, e.g., \cite{JET:Permuted_ISIT2012}).
However, in some cases, space--time structures cannot help. Such is the case for $n_t=2$, as
implied by the following theorem which is proved in \appref{app:extention_does_not_help}.

\begin{thm} \label{thm:extention_does_not_help}
Let $A_1$ and $A_2$ be \emph{complex-valued} $2 \times 2$ matrices with determinants equal to $1$,
such that condition \eqref{eq:condition3} does not hold (namely, there does not exist a 2-GMD of the matrices $A_1$ and $A_2$).
Let $N \in \mathbb{N}$, and define the following extended matrices:
\begin{align}
      \AAA_k \triangleq \blkmat{A_k}{N} \,, \quad k=1,2 \,.
\end{align}
Then, there also does not exist 2-GMD of the matrices $\AAA_1,\AAA_2$, for any value of $N \in \mathbb{N}$.
\end{thm}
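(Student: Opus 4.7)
The plan is to derive a contradiction from the existence of a hypothetical $2$-GMD of $\AAA_1, \AAA_2$ by focusing on just the first column of the precoding matrix. Suppose $\UUU_1, \UUU_2, \VVV$ are $2N \times 2N$ unitary matrices with $\UUU_k^\dagger \AAA_k \VVV$ upper-triangular with all-ones diagonal (this is forced since $\det \AAA_k = (\det A_k)^N = 1$). Let $\vvv_1 \in \mathbb{C}^{2N}$ be the first column of $\VVV$. Since $\VVV$ is unitary, $\|\vvv_1\|=1$; since the first column of $\UUU_k^\dagger \AAA_k \VVV$ equals $(1,0,\ldots,0)^\dagger$ and $\UUU_k^\dagger$ preserves norms, $\|\AAA_k \vvv_1\| = 1$ for $k=1,2$. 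Partitioning $\vvv_1$ into $N$ blocks $\bv_1,\ldots,\bv_N \in \mathbb{C}^2$ and using the block-diagonal structure of $\AAA_k$, these constraints become
\begin{align}
\sum_{i=1}^N \|\bv_i\|^2 = 1, \qquad \sum_{i=1}^N \bv_i^\dagger S_k \bv_i = 0, \quad k=1,2,
\end{align}
where $S_k \triangleq A_k^\dagger A_k - I$. The goal then becomes to show that this system is infeasible whenever $F_1(S_1, S_2) < 0$.

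Since $\det A_k = 1$, we have $\det S_k \leq 0$, so the assumption that \eqref{eq:condition3} fails says, via \lemref{lem:det_adj_complex}, that no nonzero single vector $\bv \in \mathbb{C}^2$ annihilates both quadratic forms simultaneously. The next step is to upgrade this single-vector impossibility to a multi-vector (sum) impossibility. For this, I would invoke a Finsler-type separation argument: consider the joint numerical range
\begin{align}
W \triangleq \bigl\{ (\bv^\dagger S_1 \bv,\, \bv^\dagger S_2 \bv) : \bv \in \mathbb{C}^2,\, \|\bv\|=1 \bigr\} \subset \mathbb{R}^2,
\end{align}
which is a convex, compact subset of $\mathbb{R}^2$ (for two $2 \times 2$ Hermitian matrices $W$ is in fact an ellipse, possibly degenerate). \lemref{lem:det_adj_complex} together with $\det S_k \leq 0$ asserts exactly that the origin is \emph{not} in $W$. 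Strict hyperplane separation then produces real scalars $\alpha, \beta$ (not both zero) with $\alpha u + \beta v > 0$ for all $(u,v) \in W$, which is the same as saying $\alpha S_1 + \beta S_2$ is positive definite.

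With such $\alpha, \beta$ in hand, the contradiction is routine: forming the corresponding linear combination of the two scalar equations above yields
\begin{align}
0 = \sum_{i=1}^N \bv_i^\dagger (\alpha S_1 + \beta S_2) \bv_i,
\end{align}
a sum of non-negative terms that vanishes only when every $\bv_i = 0$, contradicting $\sum_i \|\bv_i\|^2 = 1$. Hence no $2$-GMD of $\AAA_1, \AAA_2$ exists for any $N$.

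The main obstacle is the passage from a single-vector condition to the multi-vector condition, which is precisely the content of the joint numerical range / separation step. All other steps are bookkeeping based on the first column of the precoding matrix. A self-contained justification of the separation step in this $2\times 2$ setting can be supplied either by directly verifying that $W$ is an ellipse (parametrizing unit vectors in $\mathbb{C}^2$), or by an algebraic argument on the pencil $\alpha S_1 + \beta S_2$, mirroring the discriminant analysis used to establish \lemref{lem:det_adj_complex} itself.
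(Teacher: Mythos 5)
Your proposal is correct, and its reduction step is exactly the paper's: from a hypothetical $2$-GMD you extract the first column $\vvv_1$ of $\VVV$, use unitarity of $\UUU_k$ to get $\|\AAA_k\vvv_1\|=\|\vvv_1\|=1$, and the block structure to obtain $\sum_i\|\bv_i\|^2=1$ and $\sum_i \bv_i^\dagger S_k\bv_i=0$ with $S_k=A_k^\dagger A_k-I$. Where you genuinely diverge is in showing this system is infeasible when $F_1(S_1,S_2)<0$. The paper redoes the case analysis of \lemref{lem:det_adj_complex} with sums over blocks: it takes $S_1$ real diagonal without loss of generality, writes the constraints as a linear system in the aggregated quantities $X_j,W_j,Z_j,Y_j$, and repeats the three-vector geometric argument in $\mathbb{R}^{2N}$, where the only change is that $\cos^2\theta_1+\cos^2\theta_2\le 1$ (the vectors need not be coplanar), so feasibility again forces the discriminant condition $b^2-4ac=4\Delta^2 F_1(S_1,S_2)\ge 0$, plus three further degenerate cases. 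You instead observe that $\det S_k\le 0$ always holds (since $\det A_k=1$), so by \lemref{lem:det_adj_complex} the failure of \eqref{eq:condition3} means the origin is not in the joint numerical range $W$ of $(S_1,S_2)$; convexity of $W$ (Toeplitz--Hausdorff, or the elliptical-range theorem for $2\times 2$ matrices) plus strict separation yields real $\alpha,\beta$ with $\alpha S_1+\beta S_2\succ 0$, and then $0=\sum_i\bv_i^\dagger(\alpha S_1+\beta S_2)\bv_i>0$ is an immediate contradiction for every $N$. Your route buys brevity and conceptual clarity: once a definite pencil combination exists, the multi-block extension is a one-line Finsler-type argument, and it makes transparent why no amount of time extension can help. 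What it costs is the appeal to the convexity of the joint numerical range, which the paper never establishes; the paper's computation, while longer and case-ridden, is self-contained and reuses machinery already built for \lemref{lem:det_adj_complex}. If you include your version, you should supply the promised self-contained justification of the separation step (e.g., via the elliptical range theorem for the $2\times 2$ matrix $S_1+iS_2$), since that is the only nontrivial ingredient not already in the paper.
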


Consider now the case where the channel matrices are real-valued, and we allow the use of
only orthogonal real-valued matrices $U_k,V$ in the communication scheme.
Then, if condition \eqref{eq:condition3} holds,
a space--time structure with $N=2$ enables 2-GMD.
This is explained in the following corollary.

\begin{corol}
If condition \eqref{eq:condition3} holds, then according to \thrmref{thm:perfect_two_on_two_complex}
we can perform $2$-GMD on $A_1,A_2$ \eqref{eq:two_by_two_gmd} with complex-valued unitary matrices $U_1,U_2,V$.
In particular, we can assume that the
three matrices $U_1,U_2,V$ are of the following form:
\begin{align} \label{eq:abcd_form}
      \left( \begin{array}{cc}
	    a+bi & \phantom{-}c+di  \\
	    c-di & -a+bi \\		
                   \end{array} \right)\,.
\end{align}
This implies that there exists a $2$-GMD of the extended matrices with $N=2$, $\AAA_1$ and $\AAA_2$,
where the corresponding real-valued orthogonal matrices $\UUU_1,\UUU_2,\VVV$ are derived from $U_1,U_2,V$  \eqref{eq:abcd_form} as follows:
\begin{align}
      \left( \begin{array}{cccc}
		      \phantom{-}a & -b & \phantom{-}c & -d \\
   \phantom{-}c &  \phantom{-}d & -a & -b \\		
		    \phantom{-}b & \phantom{-}a & \phantom{-}d & \phantom{-}c \\		
			  -d & \phantom{-}c & \phantom{-}b & -a
                   \end{array} \right) \,.
\end{align}
However, more extensions, i.e., $N \geq 3$, cannot help to construct (perfect) 2-GMD,
due to \thmref{thm:extention_does_not_help}.
\end{corol}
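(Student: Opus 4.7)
The plan is to lift the complex 2-GMD of $A_1,A_2$ guaranteed by \thmref{thm:perfect_two_on_two_complex} to a real 2-GMD of $\AAA_1,\AAA_2$ via the standard $\mathbb{R}$-algebra embedding $\phi:M_2(\mathbb{C})\to M_4(\mathbb{R})$ that replaces each complex scalar $z=x+yi$ by the real $2\times 2$ block $\bigl(\begin{smallmatrix}x & -y\\ y & x\end{smallmatrix}\bigr)$. The impossibility claim for $N\ge 3$ is immediate from \thmref{thm:extention_does_not_help}, so only the positive assertion requires work.

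First I will verify that the three unitaries in the decomposition can be taken of the claimed form. Any element of $U(2)$ with determinant $-1$ can be parametrized (e.g., as an $SU(2)$-matrix times $\mathrm{diag}(1,-1)$) in the form $\bigl(\begin{smallmatrix}a+bi & c+di\\ c-di & -a+bi\end{smallmatrix}\bigr)$ with $a^2+b^2+c^2+d^2=1$. In a GMD $A_k=U_kT_kV^\dagger$ with $\det A_k=\det T_k=1$ we have $\det U_1=\det U_2=\det V$, and the simultaneous substitution $U_1\to U_1D$, $U_2\to U_2D$, $V\to VD$ with $D$ any diagonal unitary preserves the decomposition (since $D^\dagger T_k D$ remains upper triangular with unit diagonal) while multiplying all three determinants by $\det D$; choosing $\det D$ to cancel the common phase up to a sign sets all three to $-1$, hence puts $U_1,U_2,V$ in the stated form. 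Next, I apply $\phi$ entry-wise to $A_k=U_kT_kV^\dagger$. Since $\phi$ is a real-linear $*$-homomorphism satisfying $\phi(AB)=\phi(A)\phi(B)$ and $\phi(A^\dagger)=\phi(A)^T$, it sends unitaries to real orthogonals and $T_k$ to an upper-triangular real matrix with unit diagonal; thus $\phi(A_k)=\phi(U_k)\,\phi(T_k)\,\phi(V)^T$ is a real 2-GMD of $\phi(A_k)$. Finally I reconcile $\phi(A_k)$ with the time-extended matrix: a direct inspection shows $\AAA_k=P\phi(A_k)P$, where $P$ is the involutive permutation that swaps the 2nd and 3rd coordinates of $\mathbb{R}^4$. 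Substituting gives
\begin{align}
\AAA_k=\bigl[P\phi(U_k)\bigr]\,\phi(T_k)\,\bigl[P\phi(V)\bigr]^T ,
\end{align}
a real 2-GMD of $\AAA_k$ with $\UUU_k:=P\phi(U_k)$, $\VVV:=P\phi(V)$ real orthogonal and $\RRR_k:=\phi(T_k)$ upper triangular with unit diagonal. Inserting the special complex form for $U_k$ or $V$ into $P\phi(\cdot)$ and performing the row swap recovers exactly the displayed $4\times 4$ matrix.

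The main subtlety is precisely this reconciliation between $\phi(A)$ and $\blkmat{A}{2}$: the two orderings differ by the single coordinate transposition $P$, but $P$ must be absorbed into the outer factors on only \emph{one} side of $\phi(T_k)$; a symmetric absorption would produce $P\phi(T_k)P$, which has an entry below the diagonal and hence is not upper triangular. Everything else in the proof is routine: the standard parametrization of $U(2)$, the well-known ring-theoretic properties of $\phi$, and the gauge-invariance of the GMD under simultaneous right-multiplication by a diagonal unitary.
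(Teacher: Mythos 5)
Your argument is correct and is essentially the construction the paper intends but leaves unproven: realifying the complex decomposition by replacing each entry with a $2\times 2$ real block (your $\phi$, exactly the representation of $\mathbb{C}$ over $\mathbb{R}$ alluded to in \secref{ss:multi_user_perms}), after gauge-fixing $U_1,U_2,V$ by a common diagonal unitary to determinant $-1$ so that they take the form \eqref{eq:abcd_form}, and then reconciling $\phi(A_k)$ with $\blkmat{A_k}{2}$ through the coordinate swap $P$, correctly absorbed only into the outer factors so that $\phi(T_k)$ remains upper triangular with unit diagonal. One point worth making explicit: the identity $\AAA_k=P\phi(A_k)P$ holds because $A_1,A_2$ are real-valued (the standing assumption of this subsection); for complex $A_k$ the displayed real orthogonal matrices would not in general triangularize $\AAA_k$, so that hypothesis should be stated where the identity is invoked.
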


\section{Nearly-Optimal $K$-GMD}
\label{ss:space_time_nearly}
As indicated by \thrmref{thm:perfect_two_on_two_complex}, joint triangularization with constant diagonal values ($K$-GMD) is not always possible even if we consider time-extended channel matrices, 
as in \thrmref{thm:extention_does_not_help}.

The question is whether we may use the transmission scheme, presented in \secref{ss:multi_user_scheme}, for the general multi-user problem. 
We now demonstrate that although perfect decomposition is not possible in general, we can still perform \emph{nearly-optimal} triangularization, by utilizing multiple uses of the same channel realization.

There are many ways to define ``nearly optimal''. Commonly, this term refers to a problem with some optimization criterion, or some error criterion, where the optimization solution or the error are bounded, based on some statistical assumptions. Here, we refer to a different meaning.
We strive for an explicit lower bound on the communication rate (without any statistical assumption on the generation processes of the channel matrices), which is asymptotically optimal, in the number of time extensions utilized.
These two goals are achieved by defining ``nearly optimal $K$-GMD'', in which the resulting matrices are as in ``perfect $K$-GMD'' form~--- upper triangular matrices with equal and constant diagonal elements~--- up to a small number of diagonal elements, which becomes negligible as the number of time extensions grows. 
This is defined formally as follows. 

\begin{defn}[Nearly-Optimal $K$-GMD]
  Let $A_1,\dots,A_K$ be complex-valued $n \times n$ matrices with determinants equal to $1$. 
  Consider a sequence of decompositions (for each $N$) of the following form.
  For each $N$, define the following $nN \times nN$ extended matrices:
  \begin{align}
  \label{eq:extended_n}
    \AAA_k \triangleq \blkmat{A_k}{N}
  \,, \quad k=1,\dots,K \,,
  \end{align}
  and the $(K+1)$ matrices $\UUU_1,\ldots,\UUU_K,\VVV$
  of dimensions \mbox{$nN \times \nnn$},
  with orthonormal columns, such that:
  \begin{align}
    \UUU_k^\dagger \AAA_k \VVV = \left( \begin{array}{ccccc}
		    1 & * & \cdots & * & * \\
		    0 & 1 & \cdots & * & *\\
		    \vdots & \vdots & \ddots & \vdots & \vdots \\
		    0 & 0 & \cdots & 1 & * \\
		    0 & 0 & \cdots & 0 & 1
		\end{array} \right) 
    \triangleq \TTT_k \,,
    \quad k=1, \dots, K \,,
 \label{eq:nearly:Tk}
  \end{align}
  where $*$ represents some value (which may differ within each matrix as well as between different ones).

  We say that the sequence of decompositions is \emph{nearly-optimal $K$-GMD}, if 
  \begin{align}
      \lim_{N \rightarrow \infty} \frac{ \nnn }{n N} = 1 \,.
  \end{align}
\end{defn}
\vspace{.5\baselineskip}

\begin{thm}[Existence of nearly-optimal K-GMD]
\label{thm:n_n_asymptotical}
  For any $K$ complex-valued $n \times n$ matrices $A_1,\dots,A_K$ with determinants equal to $1$, there exists a sequence of \emph{nearly-optimal K-GMD} with 
  \mbox{$\nnn = n \big(N - ( n^{K-1} - 1 ) \big)$}, 
  where $N \geq n^{K-1}$.
\end{thm}

\vspace{\baselineskip}

Note that again, as was explained in \remref{rem:equal_det},
we assume, w.l.o.g., that all matrices have determinants equal to~1.

The proof of the theorem is given in the form of a constructive algorithm.
The algorithm for the general case is presented in \appref{app:proof_general}.
Also, implementations of the algorithm in Matlab and Python are available in \cite{MATLAB} and \cite{PYTHON}, respectively.
In order to simplify the understanding of the algorithm, we demonstrate the algorithm for some special cases, each of which illustrates a different aspect of the general case. 
In \secref{ss:space_time_proof_K_2_n_2} we present the algorithm for the simplest case of $2$-GMD of extended $2 \times 2$ matrices, with any number of time extensions.
In \appref{app:proof_K_3_n_2} we present the algorithm for the case of $3$-GMD of extended $2 \times 2$ matrices with only $N=4$ extensions. In \appref{app:proof_n_2} we
generalize this for general $K$-GMD of extended $2 \times 2$ matrices.
Finally, in \appref{app:proof_K_2}, we present the algorithm for $2$-GMD of extended $n \times n$ matrices.

We note that, similarly to the case of
perfect triangularization, nearly optimal $K$-GMD is equivalent to nearly optimal $(K+1)$-JET. This
is formally stated in the following lemma, which is a generalization of \lemref{lem:k_to_k_plus_one_square} to the non square-matrix case, 
and is proved in \appref{app:proof_k_to_k_plus_one}.

\begin{lemma}[Equivalence of K-GMD and (K+1)-JET]
\label{lem:k_to_k_plus_one}
Let $A_1,\ldots,A_{K+1}$ be $n \times n$ full-rank complex-valued matrices with equal determinants, and
define the $K$ matrices:
\begin{align} \label{eq:definition_of_a}
			B_k = A_k A_{K+1}^{-1}  \,, \qquad k=1,\ldots,K \,.
\end{align}
Then, the following two statements are equivalent:
\begin{enumerate}
\item There exist $K+1$
matrices with orthonormal columns $U_1,\ldots,U_{K},U_{K+1}$, of dimensions $n \times \nnn$, such that
\begin{align} \label{eq:jet_of_a}
	U_k^\dagger B_k U_{K+1} = T_k \,,\qquad  k=1,\ldots,K \,,
\end{align}
where $\left\{ T_k \right\} $ are $\nnn \times \nnn$ upper triangular with all diagonal entries equal to $1$.
\item There exist $K+2$ matrices with orthonormal columns
$U_1,\ldots,U_{K+1},V$, of dimensions $n \times \nnn$, such that
\begin{align}
	U_k^\dagger A_k V = R_k \,,\qquad k=1,\ldots,K+1 \,,
\end{align}
where $\left\{ R_k \right\}$ are $\nnn \times \nnn$ upper triangular with equal diagonals,
as in \eqref{eq:R_equal_diagonals}.
\end{enumerate}
\end{lemma}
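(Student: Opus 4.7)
The plan is to mirror the proof of \lemref{lem:k_to_k_plus_one_square}, replacing the full QR decompositions used in the square case by thin (skinny) QR decompositions of the $n \times \nnn$ rectangular matrices.

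For direction (1) $\Rightarrow$ (2), I would first perform a thin QR decomposition of the $n \times \nnn$ matrix $A_{K+1}^{-1} U_{K+1}$, writing
\[
A_{K+1}^{-1} U_{K+1} = V R,
\]
with $V \in \mathbb{C}^{n \times \nnn}$ having orthonormal columns and $R \in \mathbb{C}^{\nnn \times \nnn}$ upper triangular with positive diagonal $r_1, \ldots, r_{\nnn}$. Rearranging gives $A_{K+1} V = U_{K+1} R^{-1}$, and applying $U_{K+1}^\dagger$ on the left (using $U_{K+1}^\dagger U_{K+1} = I_{\nnn}$) yields $U_{K+1}^\dagger A_{K+1} V = R^{-1}$. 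For each $k \le K$, substituting the hypothesis \eqref{eq:jet_of_a} then produces
\[
U_k^\dagger A_k V \;=\; U_k^\dagger A_k A_{K+1}^{-1} (A_{K+1} V) \;=\; U_k^\dagger B_k U_{K+1} R^{-1} \;=\; T_k R^{-1}.
\]
Since each $T_k$ has unit diagonal, the diagonals of $T_k R^{-1}$ and of $R^{-1}$ all coincide, giving (2) upon setting $R_k := T_k R^{-1}$ for $k \le K$ and $R_{K+1} := R^{-1}$.

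For direction (2) $\Rightarrow$ (1), the symmetric strategy is to perform a thin QR of $A_{K+1} V$, writing $A_{K+1} V = U_{K+1}' L$, so that $A_{K+1}^{-1} U_{K+1}' = V L^{-1}$. After a coordinated realignment of each of the remaining $U_k$'s (via thin QR of $A_k V$), one aims to deduce $(U_k')^\dagger B_k U_{K+1}' = R_k L^{-1}$ as an upper-triangular matrix whose diagonal entries collapse to unity thanks to the common-diagonal structure of the $R_k$ guaranteed by (2).

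\textbf{Main obstacle.} The chief subtlety, absent in the square case of \lemref{lem:k_to_k_plus_one_square}, is that $V V^\dagger$ is only a projection rather than the identity when $\nnn < n$. Consequently, the slick algebraic identity
\[
U_k^\dagger B_k U_{K+1} \;=\; U_k^\dagger A_k V V^\dagger A_{K+1}^{-1} U_{K+1} \;=\; R_k R_{K+1}^{-1}
\]
used for the reverse direction in the square case does not go through verbatim. Recovering the unit-diagonal conclusion requires carefully exploiting the equal-diagonal hypothesis simultaneously across all $K+1$ indices to show that the coordinated thin-QR factors have matching diagonal entries; this diagonal-matching step is where I expect the bulk of the technical work in direction (2) $\Rightarrow$ (1) to reside.
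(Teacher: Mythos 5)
Your direction (1) $\Rightarrow$ (2) is precisely the paper's argument (thin QR of $A_{K+1}^{-1}U_{K+1}=VR$, then $U_k^\dagger A_k V = T_kR^{-1}$ and $U_{K+1}^\dagger A_{K+1}V = R^{-1}$, so all diagonals agree), and that half is complete. The genuine gap is the converse: you only name the obstacle ($VV^\dagger\neq I$ when $\nnn<n$) and defer ``the bulk of the technical work,'' so no proof of (2) $\Rightarrow$ (1) is actually given. Moreover, the route you sketch --- re-aligning each user by a thin QR of $A_kV=U_k'L_k$ and then ``matching diagonals'' --- is not just heavy, it does not have the property you need: writing $M:=U_k^\dagger U_k'=R_kL_k^{-1}$, the matrix $M$ is an upper-triangular contraction, hence $[L_k]_{jj}\geq [R_k]_{jj}$ with a slack that can differ from user to user whenever the columns of $A_kV$ do not lie in the range of $U_k$; the re-triangularized factors therefore need not have equal diagonals at all, and the equal-diagonal hypothesis on the $R_k$ alone does not restore them.

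What the paper does instead is embed the problem back into the square case of \lemref{lem:k_to_k_plus_one_square} (see \appref{app:proof_k_to_k_plus_one}): extend $V$ arbitrarily to a unitary $\tilde V = \left( V \;\middle|\; V^\perp \right)$ and extend each $U_k$ to a unitary $\tilde U_k = \left( U_k \;\middle|\; U_k^\perp \right)$ so that $\tilde U_k^\dagger A_k \tilde V$ is \emph{block} upper triangular with leading $\nnn\times\nnn$ block $R_k$. Since $\tilde V\tilde V^\dagger = I$, the square-case identity applies verbatim, giving $\tilde U_k^\dagger B_k \tilde U_{K+1}=\tilde R_k\tilde R_{K+1}^{-1}$, a block upper-triangular matrix whose leading block is $R_kR_{K+1}^{-1}$ and has unit diagonal by the equal-diagonal hypothesis; restricting that identity to its first $\nnn$ rows and columns yields \eqref{eq:jet_of_a}. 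Note that this extension step tacitly uses that the columns of $A_kV$ lie in the column space of $U_k$ (i.e., $A_kV=U_kR_k$ holds exactly, so that the lower-left block vanishes), which is exactly the point your thin-QR analysis would otherwise trip over. In short, the missing idea relative to the paper is the extension-to-unitary (block-embedding) device, not a diagonal-matching computation.
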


Nearly optimal $K$-GMD is readily applied for K-user common-message BC:
Transmission is carried over the equal sub-channel gains whereas the non-equal ones are discarded.

\begin{corol}[Achievable Rates via Nearly-Optimal $K$-GMD]
\label{corol:nearly:achievable_rate}
    Let $H_1, \ldots, H_K$ be complex-valued channel matrices of dimensions $n_{r}^{(1)} \times n_t, \ldots, n_{r}^{(K)} \times n_t$, respectively, and $C_\bx$ be an $n_t \times n_t$ covariance matrix satisfying the power constraint $\trace{C_\bx} \leq P$.
    Define $\{ \HHH_k \}$, $\{ G_k \}$, and $\{ \GGG_k \}$ as in \secref{ss:space_time_scheme} with $N \geq n_t^{K-1}$ time extensions. Without loss of generality, assume that 
    \begin{align*}
        \Capacity{1}=\cdots=\Capacity{K}=C \triangleq n_t \log(1 + \SNR_\eff) \,.
    \end{align*}
    Then, the following common-message BC 
    rate is achieved:
    \begin{subequations}
    \noeqref{eq:nearly:achievable_rate:power_compensate,eq:nearly:achievable_rate:naive}
    \begin{align}
        R &= 
        \left[1 - \frac{n_t^{K-1} - 1}{N} \right] n_t \log \left(1 + \frac{N}{N - ( n_t^{K-1} - 1 )} \SNR_\eff \right)
    \label{eq:nearly:achievable_rate:power_compensate}
     \\ &\geq \left[1 - \frac{n_t^{K-1} - 1}{N} \right] C
     \,,
    \label{eq:nearly:achievable_rate:naive}
    \end{align}
    \end{subequations}
    using equal-rate capacity-achieving scalar AWGN codes.
    By taking $N \rightarrow \infty$, the achievable rate $R$ achieves capacity.
\end{corol}

\vspace{.5\baselineskip}

\begin{proof}[Proof of \colref{corol:nearly:achievable_rate}]
    Apply \thmref{thm:n_n_asymptotical} to $\{ \GGG_k \}$ to obtain the square upper triangular matrices
    $\{ \TTT_k \}$ of dimensions \mbox{$n_t \big(N - ( n_t^{K-1} - 1 ) \big)$} with constant diagonals.
    By using the transmission scheme of \secref{ss:space_time_scheme} over $\{ \TTT_k \}$ a rate of \eqref{eq:nearly:achievable_rate:naive} is achieved.
    By allocating power and rate only to the $n_t \big(N - ( n_t^{K-1} - 1 ) \big)$ non-discarded streams corresponding to the (constant) diagonal values in $\{\TTT_k\}$ in \eqref{eq:nearly:Tk}, 
    the improved rate of \eqref{eq:nearly:achievable_rate:power_compensate} is achieved.
\end{proof}

\vspace{.5\baselineskip}

\begin{remark}
    Any nearly optimal $K$-GMD sequence (not necessarily the one specified in \thmref{thm:n_n_asymptotical}) allows to approach capacity in the limit of $N \rightarrow \infty$.
\end{remark}

\vspace{.5\baselineskip}

We now demonstrate \colref{corol:nearly:achievable_rate} for two special cases.
\begin{exmpl}[\exref{DOF-example} Revisited]
\label{DOF-example_3users}
    We reexamine the three-user degrees-of-freedom mismatch setting that was introduced in \exref{DOF-example} in \secref{s:intro}, which we reproduce here for convenience.
We have three users with the following channel matrices:
\begin{align*}
    H_1 = \left(
            \begin{array}{cc}
              \alpha_1 & 0 \\
            \end{array}
          \right) \col{}{\,},
    \col{\:}{\quad}
    H_2 = \left(
            \begin{array}{cc}
              0 & \alpha_1 \\
            \end{array}
          \right) \col{}{\,},
    \col{\:}{\quad}
    H_3 = \left(
            \begin{array}{cc}
              \alpha_2 & 0 \\
              0 & \alpha_2 \\
            \end{array}
          \right)
    \col{}{\,},
\end{align*}
such that their WI capacities are equal.

    For this specific case, since the third channel matrix is a scaled identity matrix,  $3$-JET and $3$-GMD coincide. Therefore, the number of channel uses needed to achieve $3$-GMD is identical to that of $3$-JET.

    Table \ref{table:PortionOfCapacityInfiniteSNRExample} summarizes achievable fractions of capacity corresponding to different numbers of time extensions. We note that in the table we do not apply power compensation 
    as appears in \eqref{eq:nearly:achievable_rate:power_compensate}. Thus, the achievable rates according to \eqref{eq:nearly:achievable_rate:naive} are tabulated.
    For comparison, with $P \rightarrow \infty$, time-sharing between the users achieves $33\%$ of the capacity, whereas both Alamouti modulation and beamforming achieve $50\%$.\footnote{In all the schemes, we assume that the scalar codes used are capacity-achieving.} 
    We note that Alamouti modulation falls under the framework of space--time triangularization (in this case diagonalization) with two time extensions, see \cite[Ch. 1.7.3]{livni_thesis}.
    By using more than two time extensions, the proposed scheme achieves a larger fraction of capacity.

    \begin{table}[ht!]
      \centering
      \begin{tabular}{||c||c|c|c|c|c|c|c|c|c||}
        \hline
        \# Time extensions &  2 &  3 & 4  & 5  & 6  & 7 & 8 & 10 \\ \hline
        \% Capacity    & 50 & 66 & 75 & 80 & 83 & 85 & 87 & 90 \\
      \hline
      \end{tabular}
      \caption{Fraction of capacity achievable for different numbers of channel uses processed together, when using $3$-GMD and $3$-JET (without power compensation) in \exref{DOF-example_3users}.}
    \label{table:PortionOfCapacityInfiniteSNRExample}
    \end{table}

    \begin{remark}
     Note that all the schemes considered here impose a decoding order which is shared among all the users. We will see in \secref{ss:upper_lower} that in this particular example, removing this restriction enables to attain 100\% efficiency (with no time extensions!).
    \end{remark}
\end{exmpl}

\begin{exmpl}[A General Three-User $2 \times 2$ Case]
\label{ex:nearly:3user_general}
    We assume now three general $n_r^{(k)} \times 2$ channel matrices.
    The resulting channel canonical matrices \eqref{eq:G_matrix_K_users} are of dimensions
    \mbox{$2 \times 2$}.
    To be optimal for all three users simultaneously, we need to use $3$-JET (which can be done using the same parameters of $2$-GMD, as explained in \remref{rem:nearly_k_jet}). If we further wish to have the same SNR for all the scalar sub-channels, then we need to use $3$-GMD.
    Table \ref{table:PortionOfCapacityInfiniteSNR} summarizes achievable fractions of capacity corresponding to different numbers of time extensions. Again, the achievable rates tabulated are according to  \eqref{eq:nearly:achievable_rate:naive}.
    For comparison, with $P \rightarrow \infty$, time-sharing between the users achieves $33\%$ of the capacity, whereas both Alamouti modulation and beamforming achieve $50\%$.\footnote{In all the schemes, we assume that the scalar codes used are capacity-achieving.} 

    \begin{table}[ht!]
      \centering
      \begin{tabular}{||c||c|c|c|c|c|c|c|c|c||}
        \hline
        \# Time extensions &  2 &  3 & 4  & 5  & 6  & 10 & 15 & 30 \\ \hline
        GMD \% Capacity & -- & -- & 25 & 40 & 50 & 70 & 80 & 90 \\ \hline
        \hspace{1mm} JET \% Capacity & 50 & 66 & 75 & 80 & 83 & 90 & 93 & 96 \\
      \hline
      \end{tabular}
      \caption{Fraction of capacity achievable for different numbers of channel uses processed together, when using $3$-GMD and $3$-JET (without power compensation) in \exref{ex:nearly:3user_general}.}
    \label{table:PortionOfCapacityInfiniteSNR}
    \end{table}
\end{exmpl}

\subsection{Preliminaries for the Proof of \thrmref{thm:n_n_asymptotical}}
\label{ss:space_time_proof_preliminaries}

We now introduce some definitions and properties that
will be used in the proof of \thrmref{thm:n_n_asymptotical} in \appref{app:proof_general},
as well as in its demonstration for the simple $2 \times 2$ matrix case in \secref{ss:space_time_proof_K_2_n_2} 
and the demonstrations in Appendices \ref{app:proof_K_3_n_2}--\ref{app:proof_K_2}.

 \begin {defn}
    Define by $j:m$ the list of consecutive indices between $j$ and $m$:
    \begin{align}
      j:m \triangleq  (j,j+1,j+2,\ldots,m)
      \,.
    \end{align}
 \end{defn}

 \begin{defn}
   Define the operation of ``extraction'' of multiple \emph{ordered} indices
\mbox{$ n_1\,,n_2\,,\ldots\,,n_k$}  from a matrix $A$ by:
   \begin{align}
     \submat{A}{n_1}{n_2\,,\ldots\,,n_k} \triangleq   \left(
 				\begin{array}{cccc}
 				  A_{n_1n_1} & A_{n_1n_2} & \cdots & A_{n_1n_k} \\
 				  A_{n_2n_1} & A_{n_2n_2} & \cdots & A_{n_2n_k} \\
                  \vdots     & \cdots     & \ddots & \vdots     \\
                  A_{n_kn_1} & A_{n_kn_2} & \cdots & A_{n_kn_k}
 				\end{array}
 				\right)
     \,.
   \end{align}

   For example, if

   \begin{align}
   A=
   \left(
     \begin{array}{cccccc}
       1  & 2  & 3  & 4  & 5  & 6 \\
       7  & 8  & 9  & 10 & 11 & 12 \\
       13 & 14 & 15 & 16 & 17 & 18 \\
       19 & 20 & 21 & 22 & 23 & 24 \\
       25 & 26 & 27 & 28 & 29 & 30 \\
       31 & 32 & 33 & 34 & 35 & 36 \\
     \end{array}
   \right) \,,
   \end{align}
   then,
   \begin{align}
   \submat{A}{2}{5} &=
        \left(
          \begin{array}{cc}
            8 & 11 \\
            26 & 29 \\
          \end{array}
        \right) \,,\\
   \submatt{A}{3}{5} &=
        \left(
          \begin{array}{ccc}
            15 & 16 & 17 \\
            21 & 22 & 23 \\
            27 & 28 & 29 \\
          \end{array}
        \right)\,, \\
   \submat{A}{1}{6\,,2} &=
        \left(
          \begin{array}{ccc}
            1 & 6 & 2 \\
            31 & 36 & 32 \\
            7 & 12 & 8 \\
          \end{array}
        \right)\,.
   \end{align}
   \end{defn}

   \begin{defn}

   Define the ``embedding'' operation $\embb{n}{A}{\bigcup_j \subm{m_j}{n_j}}$ as the replacement of the elements in the identity matrix $I_n$ in the index-pairs contained in $\subm{m_1}{n_1}\subm{m_2}{n_2}\subm{m_3}{n_3}\ldots\subm{m_k}{n_k}$,\footnote{The notation $\subm{j}{m}\subm{p}{q}$ stands for $\subm{j}{m} \cup \subm{p}{q}$.} with the elements of the $2 \times 2$ matrix $A$.

   For example, the embedding $\embb{4}{B}{\subm{1}{3}\subm{2}{4}}$ of
   \begin{align}
     B = \left( \begin{array}{cc}
   	      11 & 2 \\
   	      3  & 4
   	     \end{array}
         \right)
   \end{align}
   into the four-dimensional identity matrix $I_4$ is
  \begin{align}
    \left(
    \begin{array}{cccc}
      \cellcolor[gray]{0.8} 11 & 0 & \cellcolor[gray]{0.8} 2 & 0 \\
      0  & \cellcolor[gray]{0.5} 11 & 0 & \cellcolor[gray]{0.5} 2 \\
      \cellcolor[gray]{0.8} 3  & 0 & \cellcolor[gray]{0.8} 4 & 0 \\
      0  & \cellcolor[gray]{0.5} 3 & 0 & \cellcolor[gray]{0.5} 4 \\
    \end{array}
    \right)\,.
  \end{align}
   \end{defn}

   \begin{defn} \label{def:extracted_identity}
   Define the matrix $\extt{n}{\{n_j\}_{j=1}^{k}}$ as an $n \times k$
matrix,
whose columns are the ${\{n_j\}_{j=1}^{k}}$ vectors of the standard basis:
   \begin{align}
   \extt{n}{\{n_j\}}=
   \left(
     \begin{array}{c|c|c|c}
       e_n^{n_1} & e_n^{n_2} & \cdots & e_n^{n_k} \\
     \end{array}
   \right) \,,
   \end{align}
    where $e_n^{n_j}$ is a column-vector of length $n$ with all entries $0$ except for the $n_j$-th entry which equals $1$.

   For example,
   \begin{align}
   \extt{5}{4,1,5} =
   \left(
   \begin{array}{ccc}
        0 & 1 & 0\\
        0 & 0 & 0\\
        0 & 0 & 0\\
        1 & 0 & 0\\
        0 & 0 & 1\\
   \end{array}
   \right)\,.
   \end{align}
   Note that $\left(\extt{n}{\{n_j\}_{j=1}^k}\right)^\dagger \extt{n}{\{n_j\}_{j=1}^k}=I_k$.
   \end{defn}

    \begin{remark}\label{remark:extract}
    For any matrix $A$, ``extraction'' can be materialized via multiplication by a matrix
$ \mI_n^{\{[n_j]\}} $ of Definition \ref{def:extracted_identity}:
    \begin{align}
    \submat{A}{n_1}{n_2\,,\ldots\,,n_k}=\left(\mI_n^{[\{n_j\}]}\right)^\dagger A \mI_n^{[\{n_j\}]} \,.
    \end{align}
    An important special case is the extraction operation of a submatrix:
    \begin{align}
        \submatt{A}{j}{m} \triangleq \left(\extt{n}{j:m}\right)^\dagger A \extt{n}{j:m}\,.
    \end{align}
    \end{remark}

We now introduce a simple key property that will serve as the main idea in our proofs.

\begin {property} \label{prop:lenI}
Let $A$ be a scaled identity matrix, namely, \mbox{$A = cI$}, for some scalar $c$.
The QR decomposition of the matrix $A$ is invariant to multiplications by unitary matrices on the right.
This means that for any unitary matrix $V$, 
the resulting triangular matrix after applying the QR decomposition to the matrix $AV$ is the matrix $A$, and further $Q=V^\dagger$:
\begin{align}
cI_n=V^\dagger c I_n V \qqquad \forall c,n \,.
\end{align}
\end{property}

\subsection{Proof of \thrmref{thm:n_n_asymptotical} for $n=2,K=2$ and General $N$}
\label{ss:space_time_proof_K_2_n_2}

We now demonstrate the algorithm for the special case of $n=2$, $K=2$, and general $N$.
The proof is based on $K=2$ steps. \\
\underline{Step 1:} \\
We start by performing $1$-GMD on the matrix $A_1$:
\begin{align} \label{eq:stage_a}
	\Ul11 A_1 \Vl1 & =  \left( \begin{array}{cc}
	                     1 & x_1 \\ 0 & 1
	                 \end{array} \right)\,,
\end{align}
where the superscripts denote the step number and the subscripts denote the user index.
We now apply the decomposition \eqref{eq:stage_a} to each block separately, using:
\begin{align}
\UUUU11 &\triangleq\embb{2N}{\Ul11}{\subm{1}{2}\subm{3}{4}\cdots\subm{2N-1}{2N}} \,,\\
\VVVV1  &\triangleq\embb{2N}{\Vl1} {\subm{1}{2}\subm{3}{4}\cdots\subm{2N-1}{2N}} \,,
\end{align}
which yields the following $2N \times 2N$ extended triangular matrix:
\col{
\begin{align}
    \TTT_1^{(1)} &=   \UUUU11 \AAA_1 \VVVV1
\\
&= \left(
\begin{array}{ccccccc} \cline{1-2}
	 \multicolumn{1}{|c}{1} & \multicolumn{1}{c|}{x_1} & 0 & 0 & \cdots & 0 & 0 \\ \cdashline{2-3}
	 \multicolumn{1}{|c}{0}    &  \multicolumn{1}{:c|}{1} & 0 & \multicolumn{1}{:c}{0} & \cdots & 0 & 0 \\ \cline{1-4}
	  0 & \multicolumn{1}{:c}{0} & \multicolumn{1}{|c}{1} & \multicolumn{1}{:c|}{x_1}  & \cdots& 0 & 0  \\ \cdashline{2-3}
	  0 & 0 &  \multicolumn{1}{|c}{0}    & \multicolumn{1}{c|}{1} &\cdots &  0 & 0  \\ \cline{3-4}
	  \vdots  &  \vdots & \vdots  &  \vdots   & \ddots &  \vdots  &  \vdots   \\	    \cline{6-7}
	  0 & 0 & 0 & 0 &\cdots & \multicolumn{1}{|c}{1} & \multicolumn{1}{c|}{x_1} \\
	  0 & 0 & 0 & 0 &\cdots &  \multicolumn{1}{|c}{0}    & \multicolumn{1}{c|}{1} \\ \cline{6-7}
\end{array}
\right)\,.
\end{align}
}
{
\begin{align}
    \TTT_1^{(1)} &=   \UUUU11 \AAA_1 \VVVV1
\\*
&=\left(
\begin{array}{ccccccc} \cline{1-2}
	 \multicolumn{1}{|c}{1} & \multicolumn{1}{c|}{x_1} & 0 & 0 & \cdots & 0 & 0 \\ \cdashline{2-3}
	 \multicolumn{1}{|c}{0}    &  \multicolumn{1}{:c|}{1} & 0 & \multicolumn{1}{:c}{0} & \cdots & 0 & 0 \\ \cline{1-4}
	  0 & \multicolumn{1}{:c}{0} & \multicolumn{1}{|c}{1} & \multicolumn{1}{:c|}{x_1}  & \cdots& 0 & 0  \\ \cdashline{2-3}
	  0 & 0 &  \multicolumn{1}{|c}{0}    & \multicolumn{1}{c|}{1} &\cdots &  0 & 0  \\ \cline{3-4}
	  \vdots  &  \vdots & \vdots  &  \vdots   & \ddots &  \vdots  &  \vdots   \\	    \cline{6-7}
	  0 & 0 & 0 & 0 &\cdots & \multicolumn{1}{|c}{1} & \multicolumn{1}{c|}{x_1} \\
	  0 & 0 & 0 & 0 &\cdots &  \multicolumn{1}{|c}{0}    & \multicolumn{1}{c|}{1} \\ \cline{6-7}
\end{array}
\right)\,.
\end{align}
}
Note that the same matrix $\mV^{(1)}$ has to be applied also to the matrix of the second user (since the encoder is shared by all users).
  We next decompose the resulting matrix (after multiplying it by $\mV^{(1)}$ on the right) according to the QR decomposition,
  resulting in a unitary matrix $\UUUU21$ such that:
\col{
\begin{align}
   & \TTT_2^{(1)} =   \UUUU21 \AAA_2 \VVVV1 \\*
   & = \left(
\begin{array}{ccccccc} \cline{1-2}
	 \multicolumn{1}{|c}{r_1} & \multicolumn{1}{c|}{x_2} & 0 & 0 & \cdots & 0 & 0 \\ \cdashline{2-3}
	 \multicolumn{1}{|c}{0}    &  \multicolumn{1}{:c|}{r_2} & 0 & \multicolumn{1}{:c}{0} & \cdots & 0 & 0 \\ \cline{1-4}
	  0 & \multicolumn{1}{:c}{0} & \multicolumn{1}{|c}{r_1} & \multicolumn{1}{:c|}{x_2}  & \cdots& 0 & 0  \\ \cdashline{2-3}
	  0 & 0 &  \multicolumn{1}{|c}{0}    & \multicolumn{1}{c|}{r_2} &\cdots &  0 & 0  \\ \cline{3-4}
	  \vdots  &  \vdots & \vdots  &  \vdots   & \ddots &  \vdots  &  \vdots   \\	    \cline{6-7}
	  0 & 0 & 0 & 0 &\cdots & \multicolumn{1}{|c}{r_1} & \multicolumn{1}{c|}{x_2} \\
	  0 & 0 & 0 & 0 &\cdots &  \multicolumn{1}{|c}{0}    & \multicolumn{1}{c|}{r_2} \\ \cline{6-7}
\end{array}
\right)\,.
\end{align}
}
{
\begin{align}
    \TTT_2^{(1)} &=   \UUUU21 \AAA_2 \VVVV1 \\*
   & = \left(
\begin{array}{ccccccc} \cline{1-2}
	 \multicolumn{1}{|c}{r_1} & \multicolumn{1}{c|}{x_2} & 0 & 0 & \cdots & 0 & 0 \\ \cdashline{2-3}
	 \multicolumn{1}{|c}{0}    &  \multicolumn{1}{:c|}{r_2} & 0 & \multicolumn{1}{:c}{0} & \cdots & 0 & 0 \\ \cline{1-4}
	  0 & \multicolumn{1}{:c}{0} & \multicolumn{1}{|c}{r_1} & \multicolumn{1}{:c|}{x_2}  & \cdots& 0 & 0  \\ \cdashline{2-3}
	  0 & 0 &  \multicolumn{1}{|c}{0}    & \multicolumn{1}{c|}{r_2} &\cdots &  0 & 0  \\ \cline{3-4}
	  \vdots  &  \vdots & \vdots  &  \vdots   & \ddots &  \vdots  &  \vdots   \\	    \cline{6-7}
	  0 & 0 & 0 & 0 &\cdots & \multicolumn{1}{|c}{r_1} & \multicolumn{1}{c|}{x_2} \\
	  0 & 0 & 0 & 0 &\cdots &  \multicolumn{1}{|c}{0}    & \multicolumn{1}{c|}{r_2} \\ \cline{6-7}
\end{array}
\right)\,.
\end{align}

}

\noindent
\underline{Step 2:}\\
Note that the submatrix $\submat{\mathcal{T}_1^{(1)}}23$ is
$
\left(
      \begin{array}{:cc:} \hdashline
	1 & 0 \\
	  0 & 1 \\ \hdashline
      \end{array} \right)
$. Thus, according to \propertyref{prop:lenI} we can perform 1-GMD on the corresponding elements of the matrix of user 2, $\submat{\TTT_2^{(1)}}23$, without changing $\submat{\TTT_1^{(1)}}23$ :
  \begin{align}
    \Ul22\left(
    \begin{array}{cc}
      r_2 & 0 \\
      0	    & r_1 \\
    \end{array}
    \right)
    \Vl2
    =
    \left(
    \begin{array}{cc}
      1 & x_2^{(2)} \\
      0	    & 1 \\
    \end{array}
    \right)\,.
  \end{align}
Hence, by defining
\col{
\begin{align}
&\UUUU22 \\*
&\triangleq \embb{2N}{\Ul22} {\subm{2}{3}\subm{4}{5}\cdots\subm{2N-2}{2N-1}} \,,\\
&\VVVV2 \\*
&\triangleq \embb{2N}{\Vl2} {\subm{2}{3}\subm{4}{5}\cdots\subm{2N-2}{2N-1}}  \,,
\end{align}
}
{
\begin{align}
\UUUU22 &\triangleq \embb{2N}{\Ul22} {\subm{2}{3}\subm{4}{5}\cdots\subm{2N-2}{2N-1}} \,,\\
\VVVV2  &\triangleq \embb{2N}{\Vl2} {\subm{2}{3}\subm{4}{5}\cdots\subm{2N-2}{2N-1}}  \,,
\end{align}
}
and applying them to $\mathcal{T}_1^{(1)}$ and $\mathcal{T}_2^{(1)}$,
we attain:
\col{
\begin{align}
   \TTT_1^{(2)} & =   \left(\VVVV2\right)^\dagger \TTT_1^{(1)} \VVVV2
\col{\\}{\\*}
&  =
    \left(\VVVV2\right)^\dagger \UUUU11 \AAA_1 \VVVV1\VVVV2 \col{\\}{\\*}
    & = \left(
\begin{array}{ccccccc}
	 \multicolumn{1}{c}{1} & \multicolumn{1}{c}{\tx_1}     & \multicolumn{1}{c}{*}     & \multicolumn{1}{c}{0}      & \cdots & \multicolumn{1}{c}{0} & 0 \\\cline{2-6}
	 \multicolumn{1}{c}{0}       & \multicolumn{1}{|c}{1} & \multicolumn{1}{c}{0}     & \multicolumn{1}{c}{*}      & \cdots & \multicolumn{1}{c|}{0} & 0 \\
	  0                          & \multicolumn{1}{|c}{0}       & \multicolumn{1}{c}{1} & \multicolumn{1}{c}{\tx_1}  & \cdots&  \multicolumn{1}{c|}{0} & 0  \\
	  0                          & \multicolumn{1}{|c}{0}       & \multicolumn{1}{c}{0}       & \multicolumn{1}{c}{1} &\cdots &   \multicolumn{1}{c|}{0} & 0  \\
	  \vdots                     & \multicolumn{1}{|c}{\vdots}  & \vdots  &  \vdots   & \ddots &   \multicolumn{1}{c|}{\vdots}  &  \vdots   \\	
	  0                          & \multicolumn{1}{|c}{0} & 0 & 0 &\cdots & \multicolumn{1}{c|}{1} & \multicolumn{1}{c}{\tx_1} \\\cline{2-6}
	  0                          & \multicolumn{1}{c}{0} & 0 & 0 &\cdots &  \multicolumn{1}{c}{0}    & \multicolumn{1}{c}{1} \\
\end{array}
\right)\,,
\\
    \TTT_2^{(2)} &=   \UUUU22 \TTT_2^{(1)} \VVVV2
\\ & =
    \UUUU22\UUUU21 \AAA_2 \VVVV1\VVVV2 \\
    & =\left(
\begin{array}{ccccccc}
	 \multicolumn{1}{c}{r_1} & \multicolumn{1}{c}{\tx_2}     & \multicolumn{1}{c}{*}     & \multicolumn{1}{c}{0}      & \cdots & \multicolumn{1}{c}{0} & 0 \\\cline{2-6}
	 \multicolumn{1}{c}{0}       & \multicolumn{1}{|c}{1} & \multicolumn{1}{c}{{x_2^{(2)}}}     & \multicolumn{1}{c}{*}      & \cdots & \multicolumn{1}{c|}{0} & 0 \\
	  0                          & \multicolumn{1}{|c}{0}       & \multicolumn{1}{c}{1} & \multicolumn{1}{c}{\tx_2}  & \cdots&  \multicolumn{1}{c|}{0} & 0  \\
	  0                          & \multicolumn{1}{|c}{0}       & \multicolumn{1}{c}{0}       & \multicolumn{1}{c}{1} &\cdots &   \multicolumn{1}{c|}{0} & 0  \\
	  \vdots                     & \multicolumn{1}{|c}{\vdots}  & \vdots  &  \vdots   & \ddots &   \multicolumn{1}{c|}{\vdots}  &  \vdots   \\	
	  0                          & \multicolumn{1}{|c}{0} & 0 & 0 &\cdots & \multicolumn{1}{c|}{1} & \multicolumn{1}{c}{\tx_2} \\\cline{2-6}
	  0                          & \multicolumn{1}{c}{0} & 0 & 0 &\cdots &  \multicolumn{1}{c}{0}    & \multicolumn{1}{c}{r_2} \\
\end{array}
\right)\,.
\end{align}
}
{
\begin{align}
   \TTT_1^{(2)} & =   \left(\VVVV2\right)^\dagger \TTT_1^{(1)} \VVVV2
\col{\\}{\\*}
&  =
    \left(\VVVV2\right)^\dagger \UUUU11 \AAA_1 \VVVV1\VVVV2 \col{\\}{\\*}
    & = \left(
\begin{array}{ccccccc}
	 \multicolumn{1}{c}{1} & \multicolumn{1}{c}{x_1}     & \multicolumn{1}{c}{*}     & \multicolumn{1}{c}{0}      & \cdots & \multicolumn{1}{c}{0} & 0 \\\cline{2-6}
	 \multicolumn{1}{c}{0}       & \multicolumn{1}{|c}{1} & \multicolumn{1}{c}{0}     & \multicolumn{1}{c}{*}      & \cdots & \multicolumn{1}{c|}{0} & 0 \\
	  0                          & \multicolumn{1}{|c}{0}       & \multicolumn{1}{c}{1} & \multicolumn{1}{c}{x_1}  & \cdots&  \multicolumn{1}{c|}{0} & 0  \\
	  0                          & \multicolumn{1}{|c}{0}       & \multicolumn{1}{c}{0}       & \multicolumn{1}{c}{1} &\cdots &   \multicolumn{1}{c|}{0} & 0  \\
	  \vdots                     & \multicolumn{1}{|c}{\vdots}  & \vdots  &  \vdots   & \ddots &   \multicolumn{1}{c|}{\vdots}  &  \vdots   \\	
	  0                          & \multicolumn{1}{|c}{0} & 0 & 0 &\cdots & \multicolumn{1}{c|}{1} & \multicolumn{1}{c}{x_1} \\\cline{2-6}
	  0                          & \multicolumn{1}{c}{0} & 0 & 0 &\cdots &  \multicolumn{1}{c}{0}    & \multicolumn{1}{c}{1} \\
\end{array}
\right)\,,
\\
    \TTT_2^{(2)} &=   \UUUU22 \TTT_2^{(1)} \VVVV2
\\* & =
    \UUUU22\UUUU21 \AAA_2 \VVVV1\VVVV2 \\*
    & =\left(
\begin{array}{ccccccc}
	 \multicolumn{1}{c}{r_1} & \multicolumn{1}{c}{x_2}     & \multicolumn{1}{c}{*}     & \multicolumn{1}{c}{0}      & \cdots & \multicolumn{1}{c}{0} & 0 \\\cline{2-6}
	 \multicolumn{1}{c}{0}       & \multicolumn{1}{|c}{1} & \multicolumn{1}{c}{{x_2^{(2)}}}     & \multicolumn{1}{c}{*}      & \cdots & \multicolumn{1}{c|}{0} & 0 \\
	  0                          & \multicolumn{1}{|c}{0}       & \multicolumn{1}{c}{1} & \multicolumn{1}{c}{x_2}  & \cdots&  \multicolumn{1}{c|}{0} & 0  \\
	  0                          & \multicolumn{1}{|c}{0}       & \multicolumn{1}{c}{0}       & \multicolumn{1}{c}{1} &\cdots &   \multicolumn{1}{c|}{0} & 0  \\
	  \vdots                     & \multicolumn{1}{|c}{\vdots}  & \vdots  &  \vdots   & \ddots &   \multicolumn{1}{c|}{\vdots}  &  \vdots   \\	
	  0                          & \multicolumn{1}{|c}{0} & 0 & 0 &\cdots & \multicolumn{1}{c|}{1} & \multicolumn{1}{c}{x_2} \\\cline{2-6}
	  0                          & \multicolumn{1}{c}{0} & 0 & 0 &\cdots &  \multicolumn{1}{c}{0}    & \multicolumn{1}{c}{r_2} \\
\end{array}
\right)\,.
\end{align}
}
Now, to get the desired decomposition we need to ``extract'' the middle submatrices (by multiplying on both sides
by $\extt{2N}{2:2N-1}$,
as explained in \remref{remark:extract}).

Thus, by defining
\begin{align}
\qqquad\VVV &\triangleq \VVVV1\VVVV2\extt{2N}{2:2N-1} \\
\left(\UUU_1\right)^\dagger &\triangleq \left(\extt{2N}{2:2N-1}\right)^\dagger\left(\VVVV2\right)^\dagger \UUUU11 \\
\left(\UUU_2\right)^\dagger &\triangleq \left(\extt{2N}{2:2N-1}\right)^\dagger\UUUU22\UUUU21
\end{align}
we arrive at the desired result.

\begin{remark}
 \label{rem:nearly_k_jet}
  It was shown in \lemref{lem:k_to_k_plus_one} that $K$-GMD is equivalent to $(K+1)$-JET.
  Hence, nearly-optimal $(K+1)$-JET can be obtained with the same parameters as in \thrmref{thm:n_n_asymptotical}.
    Alternatively, an explicit algorithm for $(K+1)$-JET can be obtained by performing
the $K$-GMD algorithm as in \appref{app:proof_general}, where in the first step,
instead of performing $1$-GMD on the matrix $A_1$, $2$-JET on the matrices
$A_1$ and $A_2$ is performed, and similarly, in step $\ell$
instead of performing $1$-GMD on the matrix $\submatt{\mT_{\ell}^{({\ell})(1)}}{1}{n}$,
$2$-JET on the matrices
$\submatt{\mT_{\ell}^{({\ell})(1)}}{1}{n}$
and
$\submatt{\mT_{l+1}^{({\ell})(1)}}{1}{n}$ is performed.
\end{remark}

\vspace{\baselineskip}

\section{Extensions}
\label{s:extentions}
\subsection{Time-Varying Channel}
\label{ss:time_varying}
Throughout this paper, we have considered the problem of broadcasting the same information to $K$ different users
over \emph{static} Gaussian MIMO channels, described by the matrices $H_k$.
As mentioned in \secref{s:channel_model}, this problem is equivalent to the
problem of transmission over a compound channel~\cite{Dobrushin59,BlackwellBreimanThomasian59,Wolfowitz60},
where a transmitter wishes to convey information to a single receiver over a MIMO channel,
which can take one out of $K$ realizations, the set of which is known at the transmitter,
but the exact realization is known only to the receiver (but not to the transmitter) and remains constant throughout the whole transmission.

For this problem, the schemes of \secref{ss:multi_user_scheme} and \secref{ss:space_time_nearly}
may be readily used.
These schemes may further be extended to the case where the channel varies in time.
For $K=2$, using the JET-based scheme, any arbitrary sequence of channel realizations (within the set $\{H_1,H_2\}$) may be accommodated,
provided that this sequence is known to the receiver. The transmitter, in this case, is identical to the one in the ``compound scenario'',
whereas the receiver needs to apply to its received signal, at each time instant, $U_1^\dagger$ or $U_2^\dagger$,
depending on the channel realization at this time instant ($H_1$ or $H_2$, respectively).
The successive decoding process needs to be modified as follows: The last sub-channel is interference-free, as in the ``compound scenario'',
and therefore its interference can be subtracted of the other sub-channels; however, its components in the other sub-channels, differ with the realizations
at each time instant (``off-diagonal'' coefficients differ with $H_k$, unlike the diagonal ones which are equal to all channel realizations).
The successive decoding process of the other sub-messages needs to be modified in a similar manner.

Note however that for $K>2$ channel realizations, more channel uses need to be processed together,
in general, as explained in \secref{ss:space_time_nearly}.
In the time-varying scenario, this implies that, in order to use the schemes of \secref{ss:space_time_nearly}, the channel needs to be constant in time for a number of time instants which equals the number of channel uses that are jointly processed together. 
This requirement is shared by the space--time schemes of \cite{Alamouti} and \cite{TarokhJafarkhaniCalderbank_STBC}.

\subsection{Different Decoding Orders}
\label{ss:upper_lower}
In the above sections, we discussed the simultaneous decomposition of several matrices
into \emph{upper triangular} forms. In terms of the transmission scheme described
in \secref{ss:multi_user_scheme}, all the receivers decode the messages in the same order
(starting with the last component; ending with the first one).

This scheme can be generalized,  if we allow each receiver to choose \emph{its own} order of decoding.
It turns out that this generalized scheme can achieve rates which are strictly higher than the rates achieved using
the ordinary scheme (where all the decoders use the same order of decoding).

In the case of two transmit antennas, the channel canonical matrices
\eqref{eq:G_matrix_K_users}
 are $2 \times 2$ matrices. Thus, allowing different decoding orders means that some matrices are transformed into
 \emph{upper triangular} matrices, whereas the others~--- into \emph{lower triangular} matrices,
 where all the resulting matrices have equal diagonal values.
The following theorem is proved using a similar technique to the one used for the proof of \thrmref{thm:perfect_two_on_two_complex}.
Again, as explained in \remref{rem:equal_det}, we can
assume without loss of generality that both matrices have determinants equal to $1$.

\begin{thm} \label{thm:upper_lower_complex}
Let $A_1$ and $A_2$ be \emph{complex-valued} $2 \times 2$ matrices with determinants equal to $1$.
Then, there exist three complex-valued $2 \times 2$ unitary matrices $U_1$, $U_2$, and $V$, such that
\begin{align}
      {{  \left( U_1  \right)^\dagger    }} A_1 V =  \left( \begin{array}{cc}
		      1 & * \\ 0 & 1
                   \end{array} \right)
\end{align}
and
\begin{align}
      {{  \left( U_2  \right)^\dagger    }} A_2 V =  \left( \begin{array}{cc}
		      1 & 0 \\ * & 1
                   \end{array} \right)\,,
\end{align}
if and only if the following inequality is satisfied:
\begin{align} \label{eq:upper_lower_complex_2}
      F_2 \left(  A_1^\dagger A_1 - I ,  A_2^\dagger A_2 - I  \right) \geq 0 \,,
\end{align}
where
\begin{align}
    F_2(S_1,S_2) \triangleq \mathrm{det} \big(
	    S_1 S_2  - \adj S_2 \adj S_1
\big) \,.
\end{align}
\end{thm}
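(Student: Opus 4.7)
The plan is to mirror the proof of \thrmref{thm:perfect_two_on_two_complex}: parametrize the unitary $V$ by its two columns $\bv_1, \bv_2$, translate the triangularization requirements into quadratic conditions on these columns, and then invoke \lemref{lem:det_adj_complex}. As in the earlier theorem, the requirement that $U_1^\dagger A_1 V$ be upper-triangular with unit diagonal is equivalent to $\|A_1 \bv_1\| = 1$, i.e.\ $\bv_1^\dagger S_1 \bv_1 = 0$ with $S_1 \triangleq A_1^\dagger A_1 - I$. The new ingredient is the lower-triangular condition on $U_2^\dagger A_2 V$: here I would read off its \emph{second} column (which must equal $(0,1)^\top$) rather than its first, obtaining $\|A_2 \bv_2\| = 1$, i.e.\ $\bv_2^\dagger S_2 \bv_2 = 0$ with $S_2 \triangleq A_2^\dagger A_2 - I$. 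Unitarity of $V$ forces $\|\bv_i\| = 1$ and $\bv_1 \perp \bv_2$, and the hypothesis $\det A_i = 1$ together with the phase freedom in $U_i$ and $V$ allows the two remaining (``free'') diagonal entries to be set to exactly $1$.

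The crux is to combine the $A_1$-condition (a quadratic form in $\bv_1$) with the $A_2$-condition (a quadratic form in $\bv_2$) into a single system of quadratic forms in $\bv_1$ alone. For this I would use two elementary $2 \times 2$ identities: the trace identity $\bv_1^\dagger M \bv_1 + \bv_2^\dagger M \bv_2 = \trace{M}$, valid for any orthonormal basis and any Hermitian $M$; and the Cayley--Hamilton identity $M + \adj(M) = \trace{M} \cdot I$, valid for any $2 \times 2$ matrix. Together with $\|\bv_1\|^2 = 1$, these yield $\bv_2^\dagger S_2 \bv_2 = \bv_1^\dagger \adj(S_2) \bv_1$. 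The full system therefore reduces to finding a nonzero $\bv_1$ with
\begin{align}
    \bv_1^\dagger S_1 \bv_1 = 0, \qquad \bv_1^\dagger \adj(S_2) \bv_1 = 0;
\end{align}
the unit-norm constraint on $\bv_1$ can then be imposed by scaling, since both conditions are homogeneous.

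The final step is to apply \lemref{lem:det_adj_complex} to the Hermitian pair $(S_1, \adj(S_2))$. The necessary and sufficient conditions become $\det(S_1) \leq 0$, $\det(\adj(S_2)) \leq 0$, and $F_1(S_1, \adj(S_2)) \geq 0$. The first two are automatic, as in the earlier theorem: since $\det A_i = 1$, the singular values of $A_i$ satisfy $\prod_j \sigma_j^2 = 1$, whence $\det(A_i^\dagger A_i - I) = \prod_j(\sigma_j^2 - 1) \leq 0$; and $\det(\adj(S_2)) = \det(S_2)$ for $2 \times 2$ matrices. The last inequality simplifies using the involution $\adj(\adj(M)) = M$ valid in the $2 \times 2$ case:
\begin{align}
    F_1(S_1, \adj(S_2))
    &= \det\bigl(S_1 \adj(\adj(S_2)) - \adj(S_2) \adj(S_1)\bigr) \\
    &= \det\bigl(S_1 S_2 - \adj(S_2) \adj(S_1)\bigr) = F_2(S_1, S_2),
\end{align}
yielding precisely condition \eqref{eq:upper_lower_complex_2}.

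I do not anticipate a serious technical obstacle. The only subtlety is the phase bookkeeping that promotes a nonzero isotropic vector returned by \lemref{lem:det_adj_complex} to a unitary $V$ and then tunes the phases in $U_1$, $U_2$ and in the second column of $V$ so that the diagonal entries of the resulting triangular matrices are exactly $1$ rather than merely of modulus $1$; this is handled exactly as in the proof of \thrmref{thm:perfect_two_on_two_complex}.
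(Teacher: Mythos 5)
Your proposal is correct and follows essentially the same route as the paper's proof: reduce both triangularization requirements to the pair of quadratic conditions $\bv_1^\dagger S_1 \bv_1 = 0$ and $\bv_1^\dagger \adj(S_2)\, \bv_1 = 0$ on the unit vector $\bv_1$, then apply \lemref{lem:det_adj_complex} to $(S_1,\adj(S_2))$ and use $\adj(\adj(M))=M$ to turn $F_1$ into $F_2$. The only (cosmetic) difference is that you justify the identity $\bv_2^\dagger S_2 \bv_2 = \bv_1^\dagger \adj(S_2) \bv_1$ via the trace and Cayley--Hamilton identities, whereas the paper reads it off from an explicit parametrization of $V$.
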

The proof is given in \appref{app:upper_lower_proof}.

This result can be easily generalized, as stated in the following corollary.

\begin{corol}
Let $A_1$ and $A_2$ be \emph{complex-valued} $2 \times 2$ matrices with determinants equal to $1$, and let $r>0$.
Then there exist three complex-valued $2 \times 2$ unitary matrices $U_1$, $U_2$, and $V$, such that
\begin{align}
      {{  \left( U_1  \right)^\dagger    }} A_1 V =  \left( \begin{array}{cc}
		      r & * \\ 0 & 1/r
                   \end{array} \right)
\end{align}
and
\begin{align}
      {{  \left( U_2  \right)^\dagger    }} A_2 V =  \left( \begin{array}{cc}
		      r & 0 \\ * & 1/r
                   \end{array} \right)
\end{align}
if and only if the following conditions are satisfied:
\begin{align}
      \det \left( A_1^\dagger A_1 - r^2 I \right)  & \leq 0 \\
      \det \left( A_2^\dagger A_2 - 1/r^2 I \right)  & \leq 0 \\
      F_2 \left(  A_1^\dagger A_1 - r^2 I ,  A_2^\dagger A_2 - 1/r^2 I  \right) & \geq 0 \,.
\end{align}
The proof of the corollary follows along the same lines as that of \thrmref{thm:upper_lower_complex} with obvious modifications.
\end{corol}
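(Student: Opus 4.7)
The plan is to mimic the proof of \thrmref{thm:upper_lower_complex} with scalar modifications accounting for the parameter $r$. As in that proof, I would parameterize a $2\times 2$ unitary matrix $V$ by its unit first column $\bv_1$; its second column $\bv_2$ is then determined up to a unit-phase factor. The matrix $U_1^\dagger A_1 V$ is upper-triangular with diagonal $(r,1/r)$ iff $\|A_1\bv_1\| = r$, since performing a QR decomposition on $A_1 V$ yields such a form, and the constraint $\det(A_1 V)=1$ then forces the second diagonal entry to be $1/r$. Dually, $U_2^\dagger A_2 V$ is lower-triangular with diagonal $(r,1/r)$ iff $\|A_2\bv_2\|=1/r$ (via a QL rather than QR decomposition).

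Next, I would reduce the existence of the decomposition to the existence of a unit vector $\bv_1 \in \mathbb{C}^2$ satisfying
\begin{align}
\bv_1^\dagger \left( A_1^\dagger A_1 - r^2 I \right) \bv_1 &= 0, \\
\bv_2^\dagger \left( A_2^\dagger A_2 - r^{-2} I \right) \bv_2 &= 0.
\end{align}
Using the completeness identity $\bv_1 \bv_1^\dagger + \bv_2 \bv_2^\dagger = I$, for any Hermitian $M$ we have $\bv_2^\dagger M \bv_2 = \trace{M} - \bv_1^\dagger M \bv_1$, so the second equation with $M = A_2^\dagger A_2 - r^{-2} I$ rewrites (using $\bv_1^\dagger \bv_1 = 1$) as $\bv_1^\dagger (M - \trace{M}\,I)\bv_1 = 0$. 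Invoking the $2\times 2$ Cayley--Hamilton identity $S + \adj(S) = \trace{S}\,I$, this further simplifies to $\bv_1^\dagger \adj\!\left(A_2^\dagger A_2 - r^{-2} I\right)\bv_1 = 0$.

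I would then apply \lemref{lem:det_adj_complex} to the Hermitian pair
\begin{align}
S_1' \triangleq A_1^\dagger A_1 - r^2 I, \qquad S_2' \triangleq \adj\!\left(A_2^\dagger A_2 - r^{-2} I\right),
\end{align}
noting that unit norm on $\bv$ is not required by the lemma --- any nonzero solution may be rescaled. This immediately yields the three conditions $\det(S_1') \leq 0$, $\det(S_2') \leq 0$, and $F_1(S_1',S_2') \geq 0$. The first matches the statement directly. For the second, I would invoke the $2\times 2$ identity $\det(\adj M) = \det M$, recovering $\det\!\left(A_2^\dagger A_2 - r^{-2} I\right) \leq 0$. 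For the third, the identity $\adj(\adj M) = M$ (valid for $2\times 2$) gives
\begin{align}
F_1(S_1',S_2') &= \det\!\left( S_1'\, \adj(\adj M) - \adj(M)\, \adj(S_1') \right) \\
&= \det\!\left( S_1'\, M - \adj(M)\, \adj(S_1') \right) = F_2\!\left(A_1^\dagger A_1 - r^2 I,\; A_2^\dagger A_2 - r^{-2} I\right),
\end{align}
where $M = A_2^\dagger A_2 - r^{-2} I$, matching the third condition of the statement.

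The bulk of the work is conceptually routine once the $r=1$ proof is in hand; the main obstacle is the careful bookkeeping required to track how the adjugate operation (introduced when eliminating $\bv_2$ in favor of $\bv_1$) interacts with the bilinear invariants $F_1$ and $F_2$. The two $2\times 2$ identities $S + \adj(S) = \trace{S}\,I$ and $\adj(\adj S) = S$ are the key simplifications that make the reduction from $F_1$ to $F_2$ transparent. The forward direction (from the decomposition to the conditions) and the converse (given the conditions, construct $\bv_1$ and hence $V$, $U_1$, $U_2$) proceed along identical lines, since \lemref{lem:det_adj_complex} is an ``if and only if'' statement.
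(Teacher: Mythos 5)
Your proposal is correct and follows essentially the same route as the paper: the paper proves the corollary by repeating the proof of \thrmref{thm:upper_lower_complex} with $r^2$ and $1/r^2$ in place of $1$, i.e., reducing existence of the decomposition to a unit vector $\bv_1$ annihilating the quadratic forms of $A_1^\dagger A_1 - r^2 I$ and $\adj\left(A_2^\dagger A_2 - r^{-2} I\right)$, invoking \lemref{lem:det_adj_complex}, and translating $F_1$ into $F_2$ via $\adj(\adj S)=S$ and $\det(\adj S)=\det S$, with the two determinant conditions now appearing explicitly since they are no longer automatic. Your only deviation is deriving the identity $\bv_2^\dagger S \bv_2 = \bv_1^\dagger \adj(S)\,\bv_1$ from the completeness relation and the $2\times 2$ Cayley--Hamilton identity rather than from the explicit parameterization of $V$, which is an equivalent (and arguably cleaner) justification of the same step.
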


Recall the ``degrees-of-freedom mismatch'' scenario of Examples \ref{DOF-example} and \ref{DOF-example_3users}.
The compound capacity in this case is achieved by a white input covariance matrix.
The corresponding channel canonical matrices \eqref{eq:G_matrix_K_users}, are
\begin{align}
    G_1 & = \left( \begin{array}{cc}
	2^{\Cptp/2} & 0 \\ 0 & 1 \end{array}
\right) \,, \\
G_2  & = \left(\begin{array}{cc}
	1 & 0 \\ 0 & 2^{\Cptp/2} \end{array}
\right) \,,
\\
G_3  &= \left(\begin{array}{cc}
	2^{\Cptp/4} & 0 \\ 0 & 2^{\Cptp/4} \end{array}
\right)
\,.
\end{align}
Since $G_3$ is a scaled identity matrix, performing $3$-GMD on these three matrices is in fact equivalent to $2$-GMD of $G_1$ and $G_2$, which is not possible according to  \thmref{thm:perfect_two_on_two_complex}.
However, if we allow generalized triangularization ~--- namely, receiver $1$ transforms the channel into upper triangular form, whereas receiver $2$ transforms it into lower triangular form ~--- then the decomposition is possible according to \thmref{thm:upper_lower_complex}, using the following precoding matrix:
        \begin{align}
            V =
            \sqrt{\frac{1}{2^{\Cptp/2}+1}}
            \left(
              \begin{array}{cc}
                1 & 2^{\Cptp/4} \\
                2^{\Cptp/4} & -1 \\
              \end{array}
            \right)  \,,
        \end{align}
which gives rise, in turn, to the following triangular matrices:
\begin{align}
    T_1 & = \left( \begin{array}{cc}
	2^{\Cptp/4} & \frac{2^{\Cptp}-1}{2^{\Cptp/2}+1} \\ 0 & 2^{\Cptp/4} \end{array}
\right) \,, \\
T_2  & = \left(\begin{array}{cc}
	2^{\Cptp/4} & 0 \\ - \frac{2^{\Cptp}-1}{2^{\Cptp/2}+1} & 2^{\Cptp/4} \end{array}
\right) \,,
\\
T_3  &= \left(\begin{array}{cc}
	2^{\Cptp/4} & 0 \\ 0 & 2^{\Cptp/4} \end{array}
\right)
\,.
\end{align}

\subsection{Block GTD}
\label{ss:block_gmd}
There are certain cases, where triangularity of the resulting matrices is not necessary and block-triangular forms, with
blocks satisfying certain relations between their determinants, suffice.
In these cases we are interested primarily in deriving information-theoretic bounds,
rather than constructing practical communication schemes.

This is the case for the Gaussian MIMO joint source--channel coding (JSCC) problem,
where we wish to convey a scalar Gaussian source over Gaussian MIMO links,
having different capacities.
In this case, pure digital transmission, as in Sections \ref{s:multi_user} and \ref{s:space_time}, is not optimal, as it is restricted to the minimum of the capacities of the different MIMO links.
Indeed, better performance may be achieved,
using a scheme which better adapts to the different capacities of the different channel links.
For more information see \cite[Sec. IV]{STUD:SP}.

For this purpose, we first extend the GTD, discussed in \secref{ss:decomposition},
for a block-triangular form,
after which we apply this result in the derivation of a block joint triangularization.

\begin{thm}[Block GTD]
\label{thm:BlockGTD}
    Let $A$ be an $n \times n$ full-rank matrix. Then, it can be decomposed into a
block upper triangular form ($1 \leq M \leq n$):
    \begin{align}
    \label{eq:BlockTriangular}
        \tilde R = \left(
              \begin{array}{cccc}
                \tilde R_{11} & \tilde R_{12} & \cdots & \tilde R_{1M} \\
                0 & \tilde R_{22} & \cdots & \tilde R_{2M} \\
                \vdots & \ & \ddots & \vdots \\
                0 & \cdots & 0 & \tilde R_{MM} \\
              \end{array}
            \right) \,,
    \end{align}
    where $\tilde R_{j\ell}$ are $n_j \times n_\ell$ blocks, and the matrices $\tilde{R}_{mm}$ have prescribed determinants $\det\left( \tilde R_{mm} \right)$,
    such that \mbox{$\sum_{m=1}^M n_m = n$},
    if and only if
    \begin{align}
    \label{eq:BlockDiag_WeylConds}
        \prod_{m=1}^q \left| \det \left( \tilde R_{{p_m} {p_m}} \right) \right| &\leq \prod_{j=1}^{\sum_{m=1}^q n_{p_m}} \sigma_j
    \end{align}
    for all $q=1,2,...,M$, and
    \begin{align}
    \label{eq:BlockDiag_WeylCondsEquality}
        \prod_{m=1}^M \left| \det \left( \tilde R_{{p_m} {p_m}} \right) \right| &= \prod_{j=1}^{n} \sigma_j \,,
    \end{align}
    where $\sigma_j$ are the singular values of $A$ ordered non-increasingly,
    $\left\{ {p_m} \right\}_{m=1}^M$ are the indices satisfying
    \begin{align}
        d_{p_1} \geq d_{p_2} \geq \cdots \geq d_{p_M} \,,
    \end{align}
    and
    \begin{align}
        d_{m} \triangleq \sqrt[n_m]{\left| \det \left( \tilde R_{mm} \right) \right|} \,, \qquad m=1,...,M \,.
    \end{align}
\end{thm}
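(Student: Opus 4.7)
My plan is to reduce Theorem~\ref{thm:BlockGTD} to the scalar GTD (Theorem~\ref{thm:gtd}) by setting $r_j \triangleq d_m$ whenever the index $j$ belongs to the $m$-th block. If the scalar GTD can be applied with this prescribed diagonal, the resulting upper-triangular matrix $R$ will automatically be block upper-triangular, and each diagonal block $R_{mm}$ will be an $n_m \times n_m$ upper-triangular matrix with constant diagonal $d_m$, so that $|\det R_{mm}| = d_m^{n_m} = |\det \tilde R_{mm}|$. A final right-multiplication by a block-diagonal unitary $D = \mathrm{diag}(e^{i\phi_1} I_{n_1}, \ldots, e^{i\phi_M} I_{n_M})$, absorbed into $V$, aligns the complex phases of the block determinants with the prescribed values $\det \tilde R_{mm}$.

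For necessity, assume $A = U \tilde R V^\dagger$ with $\tilde R$ having the prescribed block structure. Condition~\eqref{eq:BlockDiag_WeylCondsEquality} is immediate since $|\det A| = |\det \tilde R| = \prod_m |\det \tilde R_{mm}|$. For \eqref{eq:BlockDiag_WeylConds}, I would extract the principal submatrix of $\tilde R$ indexed by the rows and columns belonging to blocks $p_1,\ldots,p_q$; this submatrix is still block upper-triangular in the induced block order, so its determinant has absolute value $\prod_{m=1}^q |\det \tilde R_{p_m p_m}|$. The standard inequality $|\det B_I| \leq \prod_{j=1}^k \sigma_j(B)$ for any $k \times k$ principal submatrix $B_I$ (proved by writing $B_I = E_I^\dagger B E_I$, using $E_I E_I^\dagger \preceq I$, and invoking Fischer's inequality on $B^\dagger B$), applied to $\tilde R$ (which shares singular values with $A$), then yields~\eqref{eq:BlockDiag_WeylConds}.

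The main technical step is the sufficiency direction, which reduces to verifying scalar multiplicative majorization $\bsigma(A) \succeq \br$ for the prescribed $\br$. When sorted non-increasingly, $\br$ consists of $n_{p_1}$ copies of $d_{p_1}$, then $n_{p_2}$ copies of $d_{p_2}$, and so on down to $d_{p_M}$. At the block boundaries $k = N_q \triangleq \sum_{m=1}^q n_{p_m}$, the required inequality $\prod_{j=1}^{N_q} r_j \leq \prod_{j=1}^{N_q} \sigma_j$ becomes exactly \eqref{eq:BlockDiag_WeylConds}, and \eqref{eq:BlockDiag_WeylCondsEquality} gives equality at $k=n$. For indices $k$ strictly between two block boundaries, I would use a concavity argument: since both $\log r_j$ (sorted non-increasingly) and $\log \sigma_j$ are non-increasing, the second difference of the partial-sum gap $\Delta_k \triangleq \sum_{j=1}^k (\log \sigma_j - \log r_j)$ equals $\log \sigma_{k+1} - \log \sigma_k \leq 0$ on any interval of constant $r$, so $\Delta_k$ is concave on each block. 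A concave function that is non-negative at both endpoints of an interval is non-negative throughout the interval, so non-negativity of $\Delta_k$ at all block boundaries extends to all intermediate $k$. Once the majorization is established, Theorem~\ref{thm:gtd} delivers the desired $A = U R V^\dagger$, and the block-diagonal phase correction completes the construction.

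The principal obstacle I expect is the concavity argument bridging the block-boundary conditions \eqref{eq:BlockDiag_WeylConds} and the full scalar majorization demanded by Theorem~\ref{thm:gtd}; this step is where the specific choice of the ordering $\{p_m\}$ by decreasing $d_m$ plays its essential role, since it is precisely what makes the sorted $\br$ align with the sorted $\bsigma(A)$ so that concavity applies. Everything else reduces to bookkeeping with block-diagonal phase corrections and the standard principal-submatrix determinant bound.
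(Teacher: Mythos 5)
Your proposal is correct, and its skeleton --- replace the block problem by a scalar GTD whose prescribed diagonal consists of $d_m$ repeated $n_m$ times --- is the same reduction the paper uses (there, via a GMD of each diagonal block). Where you genuinely diverge is in the two technical steps. For sufficiency, the paper isolates the claim that only the $M$ block-boundary conditions need to be checked as a separate lemma (\lemref{lem:GTD_with_multiplicities}) and proves it by an induction with geometric-mean manipulations and a contradiction argument; your discrete-concavity argument on $\Delta_k=\sum_{j\le k}(\log\sigma_j-\log r_j)$ establishes the same statement more transparently, since within a block the second difference equals $\log\sigma_{k+1}-\log\sigma_k\le 0$ and a concave sequence that is nonnegative at the block boundaries is nonnegative throughout; this is also exactly where the ordering $\{p_m\}$ by decreasing $d_m$ enters. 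For necessity, the paper again routes through the per-block GMD and Weyl's condition, whereas you read \eqref{eq:BlockDiag_WeylConds} directly off the principal submatrix of $\tilde R$ formed by the blocks $p_1,\dots,p_q$, using $\left|\det B_I\right|\le\prod_{j=1}^{k}\sigma_j(B)$; note your argument in fact gives the inequality for \emph{any} subset of blocks, so the ordering is needed only on the sufficiency side. Two small points: the submatrix determinant bound follows from singular-value interlacing under deletion of rows and columns (Cauchy-type interlacing applied to $B^\dagger B$), not from Fischer's inequality, which bounds in the wrong direction; and your explicit block-diagonal phase correction to match the prescribed complex determinants is a detail the paper's proof leaves implicit, since it works only with $\left|\det\tilde R_{mm}\right|$.
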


\ver{
Before we prove this theorem, we need the following lemma.
}{
The proof of this theorem is provided in \cite[Appendix K]{JET:TechReport:SeveralUsers2013}
and is based upon the following lemma, which is proved in \cite[Sec.~VIII-C]{JET:TechReport:SeveralUsers2013}.
}
\begin{lemma}[GTD with Multiplicities]
\label{lem:GTD_with_multiplicities}
    Let $A$ be an $n \times n$ full-rank matrix with singular values $\left\{ \sigma_j \right\}$, ordered non-increasingly.
    Then, it can be decomposed as
    \begin{align}
    \label{eq:append:A=URV'}
        A = U R V^\dagger \,,
    \end{align}
    where $R$ is upper triangular and $U,V$ are unitary,
    if and only if
    \begin{align}
    \label{eq:WeylMultiplicityCond}
        \prod_{m=1}^q r_m^{n_m} &\leq \prod_{j=1}^{\sum_{m=1}^q {n_m}} \sigma_j
    \end{align}
    for every $q$ ($q=1,2,...,M$), and
    \begin{align}
    \label{eq:WeylMultiplicityCondEqual}
        \prod_{m=1}^M r_m^{n_m} &= \prod_{j=1}^n \sigma_j \,,
    \end{align}
    where the absolute values of the diagonal of $R$ take $M$ ($1 \leq M \leq n$) distinct values; these values, ordered non-decreasingly, are denoted by $r_{m}$ (\mbox{$m=1,2,...,M$}), and the number of occurrences (``multiplicity'') of each value~--- by $n_m$.
\end{lemma}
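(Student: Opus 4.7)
\medskip
\noindent\textbf{Proof plan.}
The plan is to reduce \lemref{lem:GTD_with_multiplicities} to the standard GTD of \thmref{thm:gtd}. Flatten the prescribed multiset of absolute diagonal values into a single length-$n$ non-increasing vector
\[
    \br = (\underbrace{r_1,\dots,r_1}_{n_1 \text{ times}},\underbrace{r_2,\dots,r_2}_{n_2 \text{ times}},\dots,\underbrace{r_M,\dots,r_M}_{n_M \text{ times}}) \,.
\]
By \thmref{thm:gtd}, a decomposition $A = URV^\dagger$ with upper-triangular $R$ whose $|R_{jj}|$ has this multiset exists if and only if the multiplicative majorization $\bsigma(A) \succeq \br$ holds (positive diagonal values can always be obtained by absorbing phases into $U$). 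So the lemma reduces to showing that conditions \eqref{eq:WeylMultiplicityCond}--\eqref{eq:WeylMultiplicityCondEqual} are equivalent to $\bsigma(A) \succeq \br$.

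For the necessity direction, the Weyl half of \thmref{thm:gtd} gives the majorization directly; evaluating it at the ``step'' indices $k_q := n_1 + \cdots + n_q$ recovers \eqref{eq:WeylMultiplicityCond}, and taking $k=n$ recovers \eqref{eq:WeylMultiplicityCondEqual}.

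The substance is the converse. The hypotheses only check majorization at the $M$ step-indices $k_q$, whereas full majorization requires an inequality at every $k \in \{1, \dots, n\}$. The plan is to bridge the missing indices by exploiting that $\br$ is \emph{constant on each plateau} $(k_q, k_{q+1}]$. Fix $q$ and, for $0 \leq s \leq n_{q+1}$, set $k = k_q + s$ and consider the log-deficit
\[
    f(s) \;=\; \sum_{j=1}^{k_q+s} \log \sigma_j \;-\; \sum_{m=1}^{q} n_m \log r_m \;-\; s \log r_{q+1} \,.
\]
Since $\sigma_j$ is non-increasing, the discrete difference $f(s) - f(s-1) = \log \sigma_{k_q + s} - \log r_{q+1}$ is non-increasing in $s$; hence $f$ is discrete concave. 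The hypotheses at steps $q$ and $q+1$ read precisely $f(0) \geq 0$ and $f(n_{q+1}) \geq 0$, so discrete concavity forces $f(s) \geq 0$ throughout the plateau. Letting $q$ run over $0, 1, \dots, M-1$ recovers $\bsigma \succeq \br$ at every $k$, and a direct application of \thmref{thm:gtd} closes the argument.

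The main obstacle is precisely this plateau-interpolation step; I expect it to hinge on the discrete-concavity observation above. Everything else is bookkeeping plus invocation of the previously-stated GTD theorem.
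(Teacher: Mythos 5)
Your proposal is correct, and at the top level it follows the same route as the paper: both reduce \lemref{lem:GTD_with_multiplicities} to Weyl's multiplicative-majorization criterion via \thmref{thm:gtd} applied to the flattened length-$n$ diagonal vector, observe that necessity is just the majorization read off at the step indices $k_q=n_1+\cdots+n_q$ (with the equality at $k=n$), and then show that the $M$ step-index conditions force the prefix inequality at every intermediate index. Where you genuinely differ is in how the plateau is filled in. The paper proceeds by induction over the blocks and, within a block, by contradiction: assuming an intermediate prefix inequality fails, it divides the block-level condition by the assumed violation and derives a contradiction from the fact that the geometric mean of the smallest $n_q-l$ singular values in the block is at most that of the largest $l$, together with $\gamma\geq 1$ inherited from the previous step. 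Your discrete-concavity argument --- the log-deficit $f(s)$ has non-increasing increments $\log\sigma_{k_q+s}-\log r_{q+1}$ because the $\sigma_j$ are sorted, so nonnegativity at the two plateau endpoints propagates to the interior via the chord --- exploits exactly the same monotonicity, but handles the whole plateau at once, dispenses with the contradiction and the separate basis/inductive-step bookkeeping, and makes the mechanism transparent; the paper's geometric-mean manipulations can be seen as an unpacked version of it. Two minor remarks: your ordering (distinct values $r_1\geq\cdots\geq r_M$, so the flattened vector is non-increasing) matches the paper's own proof, the statement's ``non-decreasingly'' being evidently a typo; and your phase-absorption remark correctly bridges the gap between the prescribed \emph{positive} diagonal of \thmref{thm:gtd} and the prescribed \emph{absolute values} in the lemma.
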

\ver{
The proof of this lemma is given in \appref{app:Proof_GTD_with_multiplicities}.
}{}

Note that this lemma suggests that in case of multiplicities of the absolute values of the desired diagonal entries of the triangular matrix, if those entries take only $M$ different values,
then it suffices to verify only $M$ conditions (1 condition per distinct value),
instead of the $n$ conditions of general GTD.

\begin{proof}[Proof of \thrmref{thm:BlockGTD}]
    Decompose, according to the GMD, every block matrix $\tilde R_{mm}$ in \eqref{eq:BlockTriangular} laying on the main diagonal, as
    \begin{align}
       \tilde R_{mm} = U_{mm} T_{mm} V_{mm}^\dagger \,, \qquad m=1,2,...,K\,,
    \end{align}
    where $U_{mm}$ and $V_{mm}$ are unitary and $T_{mm}$ is upper triangular with constant diagonal entries which are equal to
    \begin{align}
        \left[ T_{mm} \right]_j = \sqrt[n_m]{\left| \det \left( \tilde R_{mm} \right) \right|} \triangleq d_{m} \,, \qquad j=1,2,...,n_m \,.
    \end{align}
    Hence, applying the unitary matrices $U^\dagger$ on the left and $V$ on the right, given by
    \begin{align}
        U &= \left(
              \begin{array}{cccc}
                U_{11} & 0 & \cdots & 0 \\
                0 & U_{22} & \cdots & 0 \\
                \vdots & \vdots & \ddots & \vdots \\
                0 & 0 & \cdots & U_{MM} \\
              \end{array}
            \right) \,,
            \\
        V &= \left(
        \begin{array}{cccc}
            V_{11} & 0 & \cdots & 0 \\
            0 & V_{22} & \cdots & 0 \\
            \vdots & \vdots & \ddots & \vdots \\
            0 & 0 & \cdots & V_{MM} \\
        \end{array}
        \right) \,,
    \end{align}
    gives rise to an upper triangular matrix whose diagonal equals to the concatenation of the diagonals of $\left\{ T_{mm} \right\}$.
    Therefore, the task of constructing the decomposition \eqref{eq:BlockTriangular} is equivalent to decomposing $A$ into triangular form with a diagonal that is equal to the concatenation of the diagonals of
    $\left\{ T_{mm} \right\}$.
    Denote the entries of this diagonal, reordered non-increasingly, by $\br(A)$ and the singular values of $A$ by
    $\bsigma(A)$.
    Then, the aforementioned decomposition is possible if and only if Weyl's condition
    \begin{align}
        \bsigma(A) \succeq \br(A)
    \end{align}
    is satisfied, which in turn is satisfied if and only if \eqref{eq:BlockDiag_WeylConds} and \eqref{eq:BlockDiag_WeylCondsEquality} hold, according to \lemref{lem:GTD_with_multiplicities}.
\end{proof}

\begin{corol}[Joint Block Triangularization]
\label{corol:Block-JET}
    Let $A_1$ and $A_2$ be two full-rank $n \times n$ complex-valued matrices. Then $A_1$ and $A_2$ can be jointly decomposed into block-triangular forms
    \begin{align}
        A_1 &= U_1 \tilde R_1 V^\dagger \\
        A_2 &= U_2 \tilde R_2 V^\dagger \,,
    \end{align}
    where $U_k$ and $V$ are unitary, and $\tilde R_k$ are block-triangular:
    \begin{align}
        \tilde R_k = \left(
              \begin{array}{cccc}
                \tilde R^{(k)}_{11} & \tilde R^{(k)}_{12} & \cdots & \tilde R^{(k)}_{1M} \\
                0 & \tilde R^{(k)}_{22} & \cdots & \tilde R^{(k)}_{2M} \\
                \vdots & \ddots & \ddots & \vdots \\
                0 & \cdots & 0 & \tilde R^{(k)}_{MM} \\
              \end{array}
            \right) \,, \qquad k=1,2 \,,
    \end{align}
    where corresponding blocks $\tilde R^{(1)}_{j\ell}$ and $\tilde R^{(2)}_{j\ell}$ have the same dimensions $n_j \times n_\ell$, such that $\sum_m n_m = n$, and prescribed determinant ratios of the blocks on the main diagonal,
    $\det \left( \tilde R^{(1)}_{mm} \right) \big/ \det \left( \tilde R^{(2)}_{mm} \right)$
    if and only if
    \begin{align}
    \label{eq:BlockDiag_WeylConds1}
        \prod_{m=1}^q \left| \det \left( \tilde R^{(1)}_{p_m p_m} \right) \middle/ \det \left( \tilde R^{(2)}_{p_m p_m} \right) \right|
        &\leq \prod_{j=1}^{\sum_{\ell=1}^q n_{k_\ell}} \mu_j
    \end{align}
    for all $q=1,2,...,M$, and
    \begin{align}
    \label{eq:BlockDiag_WeylCondsEquality1}
        \prod_{m=1}^M \left| \det \left( \tilde R^{(1)}_{p_m p_m} \right) \middle/ \det \left( \tilde R^{(2)}_{p_m p_m} \right) \right|
        &= \prod_{j=1}^{n} \mu_j \,,
    \end{align}
    where $\mu_j$ are the generalized singular values \cite{VanLoan76,GolubVanLoan3rdEd} of $(\tilde R_1,\tilde R_2)$ ordered non-increasingly, $\left\{ p_m \right\}_{m=1}^K$ are the indices satisfying
    \begin{align}
        d_{p_1} \geq d_{p_2} \geq \cdots \geq d_{p_M} \,,
    \end{align}
    and
    \begin{align*}
        d_{m} \triangleq \sqrt[n_m]{\left| \det \left( \tilde R^{(1)}_{p_m p_m} \right) \middle/ \det \left( \tilde R^{(2)}_{p_m p_m} \right) \right|} \,, \: m=1,2,...,M \,.
    \end{align*}
\end{corol}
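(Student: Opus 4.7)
My strategy parallels Lemma~\ref{lem:k_to_k_plus_one_square}, promoting it from the scalar (GMD) setting to the block setting of Theorem~\ref{thm:BlockGTD}: I reduce the joint block-triangularization of $(A_1, A_2)$ to a single block-triangularization of the quotient matrix $B \triangleq A_1 A_2^{-1}$. The key preliminary observation is that, since $A_2$ is invertible, the generalized singular values $\mu_j$ of $(A_1, A_2)$ are by definition the ordinary singular values of $B$; further, for any unitary $U_1, U_2, V$,
\begin{align}
\tilde R_1 \tilde R_2^{-1} \;=\; \bigl( U_1^\dagger A_1 V \bigr)\bigl( U_2^\dagger A_2 V \bigr)^{-1} \;=\; U_1^\dagger B U_2,
\end{align}
so a block-upper-triangular pair $(\tilde R_1,\tilde R_2)$ with matching partition yields a block-upper-triangular unitary conjugate of $B$ whose diagonal-block determinants are precisely the prescribed ratios $\det \tilde R^{(1)}_{mm} / \det \tilde R^{(2)}_{mm}$.

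\textbf{Sufficiency.} Granted \eqref{eq:BlockDiag_WeylConds1}--\eqref{eq:BlockDiag_WeylCondsEquality1}, these are precisely the hypotheses of Theorem~\ref{thm:BlockGTD} applied to $B$ with block partition $(n_1, \ldots, n_M)$ and prescribed diagonal-block determinants equal to the prescribed ratios (any phase mismatch in the complex ratios is absorbed by a block-diagonal unitary scalar, which can be folded into $U_1$). The theorem yields unitary $U_1, U_2$ and a block-upper-triangular $\tilde T$ with $U_1^\dagger B U_2 = \tilde T$ and $\det \tilde T_{mm}$ equal to the prescribed ratio. I then perform an RQ factorization $U_2^\dagger A_2 = \tilde R_2 V^\dagger$ --- which exists because $A_2$ is invertible --- with $V$ unitary and $\tilde R_2$ upper-triangular (hence block-upper-triangular for any partition). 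Setting $\tilde R_1 \triangleq U_1^\dagger A_1 V = U_1^\dagger B (A_2 V) = \tilde T \tilde R_2$ yields, as the product of two block-upper-triangular matrices with the same partition, a block-upper-triangular $\tilde R_1$ whose diagonal blocks equal $\tilde T_{mm} \tilde R^{(2)}_{mm}$, matching the prescribed determinant ratios. This supplies the desired decomposition.

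\textbf{Necessity and main obstacle.} Conversely, any valid joint decomposition produces $U_1^\dagger B U_2 = \tilde R_1 \tilde R_2^{-1}$, a block-upper-triangular decomposition of $B$ with diagonal-block determinants equal to the prescribed ratios, which via Theorem~\ref{thm:BlockGTD} (recalling $\sigma_j(B) = \mu_j$) forces \eqref{eq:BlockDiag_WeylConds1}--\eqref{eq:BlockDiag_WeylCondsEquality1}. The main obstacle --- and most of the conceptual content --- is the reduction itself; once it is in place, everything follows from Theorem~\ref{thm:BlockGTD} and a standard RQ factorization. The supporting delicacy is the block-arithmetic fact that the product of two block-upper-triangular matrices with a common partition is block-upper-triangular with block-wise-multiplied diagonals, together with checking that the sorting $\{p_m\}$ by $d_m$ appearing in the hypothesis agrees with the analogous sorting used when Theorem~\ref{thm:BlockGTD} is applied to $B$ --- both immediate, since both orderings are governed by the same $d_m$ values, but essential for the exact match of inequalities.
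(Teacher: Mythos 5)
Your proposal is correct and follows essentially the same route as the paper: the paper's proof likewise reduces to applying the block-GTD of \thrmref{thm:BlockGTD} to the quotient $A_1A_2^{-1}$ (whose singular values are the generalized singular values $\mu_j$), recovering $V$ via a QR/RQ step exactly as in the two-matrix JET argument of \cite{STUD:SP}, and invoking the same block-triangular facts about inverses and products that you cite. No gaps worth noting; your phase-absorption and ordering remarks are the same small details the paper leaves implicit.
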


\ver{
\begin{proof}
     The proof is similar to the proof of ~\cite[Theorem~1]{STUD:SP}, by replacing the GTD by the block-GTD of \thrmref{thm:BlockGTD} and using the fact that the inverse of a square block-triangular matrix is a matrix of the same block-triangular form with blocks on its main diagonal which are equal to the inverses of the original matrix, and the fact that multiplying two square block-triangular matrices with the same block dimensions results in a matrix of the same block-triangular form with blocks on its main diagonal which are equal to the product of the corresponding blocks of the multiplied matrices.
\end{proof}
}{
Again, the proof of this theorem is provided in \cite[Sec.~VIII-C]{JET:TechReport:SeveralUsers2013}.
}

\section{Discussion and Further Research}
\label{s:discussion}
In this work, we derived new joint triangularizations of several matrices.
Specifically, we were interested in designing triangular matrices having equal or constant diagonals, by applying unitary operations,
for the construction of a practical scheme for the common-message BC problem, that approaches its capacity.
We derived conditions for the existence of such decompositions, for specific cases; conditions for general matrices~---
remain unknown.

For the general case (even when such exact decompositions are not possible),
we introduced a decomposition that nearly achieves this goal for time-extended variants of the channel matrices.
However, the number of time extensions required, for this proposed decomposition,
grows rapidly with the number of jointly-decomposed matrices.
Nonetheless, numerical evidence suggests that this number of required time extensions, can be greatly reduced,
and calls for further research.

\appendices

\section{Proof of \lemref{lem:det_adj_complex}}
\label{app:det_adj_complex_proof}
Before we turn to the proof of the lemma, we
introduce the following lemma, the proof of which is relegated to \appref{app:F_invariant}.

\begin{lemma} \label{lem:F_invariant}
Let $S_1$ and $S_2$ be $n \times n$ complex-valued matrices,
and let $U$ be an $n \times n$ unitary matrix.
Then,
\begin{align}
	F_1 \left(  U^\dagger S_1 U ,  U^\dagger S_2 U  \right)
 = F_1(S_1,S_2) \,.
\end{align}
\end{lemma}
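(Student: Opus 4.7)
The plan is to reduce the invariance of $F_1$ to the behavior of the adjugate under unitary conjugation, which is a clean consequence of two well-known facts: the multiplicativity of the adjugate, $\adj(AB) = \adj(B)\adj(A)$, and the identity $\adj(U) = \det(U)\, U^{-1}$ for any invertible matrix. Together these will let me pull the $U^\dagger$ and $U$ outside the bracket inside $F_1$, after which the invariance falls out by the multiplicativity of the determinant.

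Concretely, I would first compute
\begin{align}
\adj(U^\dagger S_i U) = \adj(U)\,\adj(S_i)\,\adj(U^\dagger) = \det(U)\,U^\dagger \,\adj(S_i)\, \det(U^\dagger)\,U,
\end{align}
using $\adj(U)=\det(U)U^\dagger$ and $\adj(U^\dagger)=\det(U^\dagger)U$ (since $U$ is unitary, $U^{-1}=U^\dagger$ and $(U^\dagger)^{-1}=U$). Because $\det(U)\det(U^\dagger)=|\det(U)|^2=1$, this simplifies to
\begin{align}
\adj(U^\dagger S_i U) = U^\dagger \adj(S_i)\, U, \qquad i=1,2.
\end{align}

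Substituting into the definition of $F_1$, the $U U^\dagger$ factors in the middle cancel and I obtain
\begin{align}
(U^\dagger S_1 U)\adj(U^\dagger S_2 U) - (U^\dagger S_2 U)\adj(U^\dagger S_1 U)
= U^\dagger\bigl(S_1 \adj(S_2) - S_2 \adj(S_1)\bigr) U.
\end{align}
Taking determinants and using $\det(U^\dagger)\det(U)=1$ once more yields the claimed equality $F_1(U^\dagger S_1 U, U^\dagger S_2 U) = F_1(S_1, S_2)$.

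There is no real obstacle here; the only potential subtlety is making sure the adjugate identities are applied in the correct order (note the reversal in $\adj(AB)=\adj(B)\adj(A)$) and keeping track of the scalar factors $\det(U)$, $\det(U^\dagger)$ so that they cancel. The proof works for any $n$, not just $n=2$, although the lemma is only needed in the $n=2$ case where it enters the proof of \thrmref{thm:perfect_two_on_two_complex}.
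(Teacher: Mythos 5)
Your proof is correct and follows essentially the same route as the paper's: multiplicativity of the adjugate together with the unitary determinant identity, then cancellation inside the determinant. The only difference is bookkeeping — you first establish $\adj(U^\dagger S_i U) = U^\dagger \adj(S_i)\, U$ so the scalar factors cancel before any determinant is taken, whereas the paper pulls out $\left[\det(U)\right]^n$ factors via $U\adj(U)=\det(U)I$ and cancels them at the end; the underlying mechanism is identical.
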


Now, let $S_1$ and $S_2$ be two complex-valued $2 \times 2$ Hermitian matrices.
Without loss of generality, we can restrict ourselves to vectors $\bv \in \mathbb{C}^2$ that have
a Euclidean norm of $1$. Namely, we are looking for a necessary and sufficient condition for
the existence of a solution $v \in \mathbb{C}^2$ to the following three equations:
\begin{subequations} \label{eq:lemma_decomposition}
\noeqref{eq:lemma_decomposition_S1_S2_v=1}
\begin{align}
 & \bv^\dagger S_1 \bv = 0   \label{eq:lemma_decomposition_S1}      \\
 & \bv^\dagger S_2 \bv = 0   \label{eq:lemma_decomposition_S2}       \\
 & \left\| \bv \right\| = 1 \,. \label{eq:lemma_decomposition_S1_S2_v=1}
\end{align}
\end{subequations}
First, note that if $\det(S_1)>0$ then $S_1$ is either positive definite or negative definite, and in both cases
there is no non-zero solution $\bv$ to \eqref{eq:lemma_decomposition_S1}. Similarly, if $\det(S_2)>0$ there is no non-zero
solution to \eqref{eq:lemma_decomposition_S2}. Therefore, from now on we can assume that $\det(S_1) \leq 0$ and $\det(S_2) \leq 0$.

Note that for any $2 \times 2$ unitary matrix $U$, the decomposition
\eqref{eq:lemma_decomposition} is equivalent to
\begin{subequations}
\label{eq:the_three_equations}
\noeqref{eq:the_three_equations:1,eq:the_three_equations:2,eq:the_three_equations:3}
\begin{align}
\label{eq:the_three_equations:1}
 \tilde{\bv}^\dagger \tilde{S}_1 \tilde{\bv} & = 0  \\
\label{eq:the_three_equations:2}
 \tilde{\bv}^\dagger \tilde{S}_2 \tilde{\bv} & = 0  \\
\label{eq:the_three_equations:3}
 \left\| \tilde{\bv} \right\| & = 1 \,,
\end{align}
\end{subequations}
where
\begin{align}
  \tilde{\bv} &\triangleq U^\dagger \bv \\
  \tilde{S}_1 &\triangleq U^\dagger S_1 U \\
  \tilde{S}_2 &\triangleq U^\dagger S_2 U \,.
\end{align}
Since $S_k$ are Hermitian, so are $\tilde{S}_k$.

Also, according to \lemref{lem:F_invariant}, \eqref{eq:lemma_tnayal} is equivalent to
\begin{align}
      \det(\tilde{S}_1) & \leq 0 \\*
      \det(\tilde{S}_2) & \leq 0 \\*
      F_1 \left(  \tilde{S}_1 , \tilde{S}_2  \right) & \geq 0 \,.
\end{align}
Thus, by choosing $U$ that diagonalizes $S_1$, we can assume without loss of generality that $S_1$ is  real valued and diagonal matrix:
\begin{align}
 S_1 &= \left( \begin{array}{cc}
		a_1 & 0 \\ 0 & c_1
             \end{array} \right)
\\
 S_2 &= \left( \begin{array}{cc}
		a_2 & b_2+i\beta_2 \\ b_2-i\beta_2 & c_2
             \end{array} \right) \,,
\end{align}
where $a_1,c_1,a_2,c_2,b_2,\beta_2$ are real-valued.
Denoting
\begin{align}
    \bv = \left( \begin{array}{c}
	    x_1 + i x_2 \\ y_1 + i y_2
      \end{array}  \right) \,,
\end{align}
the three equations \eqref{eq:the_three_equations} become:
\begin{align}
\!\!\!\!\!\!\!\!\!\!
    \left( \begin{array}{cccc}
		1 & 0 & 0 & 1 \\
		a_1 & 0 & 0 & c_1 \\
		a_2 & b_2 & \beta_2 & c_2
           \end{array} \right)
    \left( \begin{array}{c}
	    x_1^2 + x_2^2 \\ 2(x_1y_1 + x_2y_2) \\
2(x_2 y_1 - x_1 y_2)
   \\ y_1^2 + y_2^2
           \end{array} \right)
&=
    \left( \begin{array}{c}
	    1 \\ 0 \\ 0
           \end{array} \right) \!\!\!.
 \label{eq:linear_equation_system}
\end{align}
We now consider the following cases.

\paragraph{Case 1} \label{tnayal:case1}
Assume first that $a_1 \neq c_1$ and $b_2 \neq 0$.
Thus, \eqref{eq:linear_equation_system} is equivalent to:
\col{
\begin{align*}
    \left( \begin{array}{c}
	    x_1^2 + x_2^2 \\ 2(x_1y_1 + x_2y_2) \\ 2(x_2 y_1 - x_1 y_2) \\ y_1^2 + y_2^2
           \end{array} \right)
&=
    \overbrace{ \left( \begin{array}{cccc}
		1 & 0 & 0 & 1 \\
		a_1 & 0 & 0 & c_1 \\
		a_2 & b_2 & \beta_2 & c_2 \\
		  0 & 0 & 1 & 0
           \end{array} \right)^{-1}}^{B^{-1}}
    \left( \begin{array}{c}
	    1 \\ 0 \\ 0 \\ t
           \end{array} \right)
\\*
& \triangleq
    \frac{1}{\Delta}
    \left( \begin{array}{c}
	      f_1(t) \\ f_2(t) \\ f_3(t) \\ f_4(t)
           \end{array} \right) \,,
\end{align*}
}{
\begin{align*}
    \left( \begin{array}{c}
	    x_1^2 + x_2^2 \\ 2(x_1y_1 + x_2y_2) \\ 2(x_2 y_1 - x_1 y_2) \\ y_1^2 + y_2^2
           \end{array} \right)
&=
    \overbrace{ \left( \begin{array}{cccc}
		1 & 0 & 0 & 1 \\
		a_1 & 0 & 0 & c_1 \\
		a_2 & b_2 & \beta_2 & c_2 \\
		  0 & 0 & 1 & 0
           \end{array} \right)^{-1}}^{B^{-1}}
    \left( \begin{array}{c}
	    1 \\ 0 \\ 0 \\ t
           \end{array} \right)
\triangleq
    \frac{1}{\Delta}
    \left( \begin{array}{c}
	      f_1(t) \\ f_2(t) \\ f_3(t) \\ f_4(t)
           \end{array} \right) \,,
\end{align*}
}
where $t$ is some real-valued parameter,
$f_1(t)$,$f_2(t)$,$f_3(t)$,$f_4(t)$ are four first-degree polynomials in $t$
(with coefficients that depend on the matrices $S_1,S_2$,
and
where
\begin{align}
	\Delta \triangleq \det B = b_2(c_1 - a_1) \neq 0\,.
\end{align}
Thus, finding a solution $\bv$ to the original problem is \emph{equivalent} to finding a solution
$(x_1,x_2,y_1,y_2,t)$ to the following equations:
\begin{subequations}
\label{eq:appendix_equations1}
\noeqref{eq:appendix_equations1:1,eq:appendix_equations1:2,eq:appendix_equations1:3,eq:appendix_equations1:4}
\begin{align}
\label{eq:appendix_equations1:1}
  x_1^2 + x_2^2    & = \frac{1}{\Delta} f_1(t) \\
\label{eq:appendix_equations1:2}
  2(x_1 y_1 + x_2 y_2)    & = \frac{1}{\Delta} f_2(t) \\
\label{eq:appendix_equations1:3}
  2(x_2 y_1 - x_1 y_2)    & = \frac{1}{\Delta} f_3(t) \\
\label{eq:appendix_equations1:4}
  y_1^2 + y_2^2    & = \frac{1}{\Delta} f_4(t)
\end{align}
\end{subequations}

\begin{assert}
\label{assert:appendix1}
  A solution to \eqref{eq:appendix_equations1} exists
  if and only if the following conditions hold for some $t \in \mathbb{R}$:
  \begin{subequations}
  \label{eq:jcase1}
  \noeqref{eq:jcase1:1,eq:jcase1:2,eq:jcase1:3}
  \begin{align}
\label{eq:jcase1:1}
      \frac{1}{\Delta} f_1(t) & \geq 0 \\*
\label{eq:jcase1:2}
      \frac{1}{\Delta} f_4(t) & \geq 0 \\*
\label{eq:jcase1:3}
      4 f_1(t) f_4(t) & = f_2^2(t) + f_3^2(t)  \,.
  \end{align}
  \end{subequations}
\end{assert}

\begin{proof}[Proof of \assertref{assert:appendix1}]
\label{sss:justification1}
  Construct the following three vectors: $\bp_1=(x_2,-x_1)$, $\bp_2=(y_1,y_2)$, $\bp_3=(x_1,x_2)$.
    Then, 
  \begin{subequations}
  \label{eq:appendix_p_vectors1}
  \noeqref{eq:appendix_p_vectors1:1,eq:appendix_p_vectors1:2,eq:appendix_p_vectors1:3,eq:appendix_p_vectors1:4}
  \begin{align}
\label{eq:appendix_p_vectors1:1}
  ||\bp_1||^2&=||\bp_3||^2=x_1^2+x_2^2\\
\label{eq:appendix_p_vectors1:2}
  ||\bp_2||^2&=y_1^2+y_2^2\\
\label{eq:appendix_p_vectors1:3}
  2\left<\bp_1,\bp_2\right>&= 2(x_2y_1-x_1y_2)\\
\label{eq:appendix_p_vectors1:4}
  2\left<\bp_2,\bp_3\right> &= 2(x_1y_1+x_2y_2) 
  \end{align}
  \end{subequations}
  Note that the l.h.s.\ of \eqref{eq:appendix_equations1} and the r.h.s.\ of \eqref{eq:appendix_p_vectors1} coincide.
  We note that $\bp_3$ and $\bp_1$ are orthogonal.
  Hence, the angles between these vectors satisfy
  \begin{align}
    \cos{\theta_1}&=\frac{\left<\bp_1,\bp_2\right>}{||\bp_1||||\bp_2||}\\
    \cos{\theta_2}&=\frac{\left<\bp_3,\bp_2\right>}{||\bp_3||||\bp_2||}=\frac{\left<\bp_3,\bp_2\right>}{||\bp_1||||\bp_2||} \\
    \cos{\theta_2}&=\cos{(\pm{\frac{\pi}{2}}-\theta_1)}=\pm \sin{\theta_1} \,.
  \end{align}

  Thus, a solution to \eqref{eq:appendix_equations1} exists if and only if
  \begin{subequations}
  \label{eq:pi_conds}
  \noeqref{eq:pi_conds:1,eq:pi_conds:2,eq:pi_conds:3}
  \begin{align}
  \label{eq:pi_conds:1}
  ||\bp_1||^2&\geq0\\
  \label{eq:pi_conds:2}
  ||\bp_2||^2&\geq0\\
  \label{eq:pi_conds:3}
  \!\!\!\!\!\!\!\!\!\!\!\!\!\cos^2\theta_1+\sin^2\theta_1&=\frac{\left<\bp_1,\bp_2\right>^2}{||\bp_1||^2||\bp_2||^2} + \frac{\left<\bp_3,\bp_2\right>^2}{||\bp_1||^2||\bp_2||^2}=1 
  .
  \end{align}
  \end{subequations}
  where \eqref{eq:pi_conds:3} is equivalent to 
  \begin{align}
  0&=4||\bp_1||^2||\bp_2||^2-\left(2\left<\bp_1,\bp_2\right>\right)^2-\left(2\left<\bp_3,\bp_2\right>\right)^2 ,
  \end{align}
  which is equivalent, in turn, to \eqref{eq:jcase1}.
\end{proof}

By definition, and using \eqref{eq:linear_equation_system}, we have $(f_1(t)+f_4(t)) = \Delta$. Therefore, the three conditions of \eqref{eq:jcase1} are equivalent to the single condition
\begin{align}
      4f_1(t) f_4(t) - f_2^2(t) - f_3^2(t) = 0 \,.
\end{align}
This is a quadratic equation in $t$:
\begin{align}
      a t^2 + bt + c = 0 \,,
\end{align}
where the constants $a,b,c$ depend on the matrices $S_1,S_2$ as follows:
\begin{subequations}
\label{eq:parabolic_inequality_constants}
\noeqref{eq:parabolic_inequality_constants:1,eq:parabolic_inequality_constants:2,eq:parabolic_inequality_constants:3}
\begin{align} 
\label{eq:parabolic_inequality_constants:1}
a& \triangleq -(a_1-c_1)^2(b_2^2+\beta_2^2)\\
\label{eq:parabolic_inequality_constants:2}
b& \triangleq 2 \beta_2 (a_2c_1-a_1c_2)(a_1-c_1) \\
\label{eq:parabolic_inequality_constants:3}
c& \triangleq -4 a_1 c_1 b_2^2 - (a_2c_1-a_1c_2)^2 \,.
\end{align}
\end{subequations}
Note that since $a_1 \neq c_1$ and $b_2 \neq 0$, the coefficient $a$ is \emph{strictly} negative.
Therefore, a necessary and sufficient condition for the existence of a solution is for the discriminant to be non-negative:
\begin{align}
      b^2 - 4 a c \geq 0 \,.
\end{align}
A direct calculation shows that
\begin{align}
      b^2 - 4 a c = 4 \Delta^2  F_1(S_1,S_2) \,,
\end{align}
where
\begin{align}
      F_1(S_1,S_2) \triangleq \mathrm{det} \left( S_1 \adj(S_2) - S_2 \adj(S_1)    \right) \,,
\end{align}
which completes the proof for this case.

\paragraph{Case 2} \label{tnayal:case2}
Assume now that $a_1=c_1$. Since we assumed $\det(S_1) \leq 0$, this means
that $a_1=c_1=0$, namely, $S_1=0$.
In this case we have
\begin{align}
      F_1(S_1,S_2) = F_1(0,S_2) = 0 \,.
\end{align}
Thus, condition \eqref{eq:condition3} holds.
Since we assumed that $\det(S_2) \leq 0$, $S_2$ has
one non-negative eigenvalue and one non-positive eigenvalue, therefore there
necessarily exists  $\bv$ with norm $1$ such that $\bv^\dagger S_2 \bv = 0$,
and therefore there exists a solution to the equations in \eqref{eq:lemma_decomposition}.

\paragraph{Case 3} \label{tnayal:case3}
Next, assume that $a_1 \neq c_1$, $b_2=0$, and $\beta_2 \neq 0$.
Thus, \eqref{eq:linear_equation_system} becomes
\begin{align} \label{eq:case_3_equations}
    \left( \begin{array}{ccc}
		1 &  0 & 1 \\
		a_1 &  0 & c_1 \\
		a_2 &  \beta_2 & c_2
           \end{array} \right)
    \left( \begin{array}{c}
	    x_1^2 + x_2^2 \\
2(x_2 y_1 - x_1 y_2)
   \\ y_1^2 + y_2^2
           \end{array} \right)
&=
    \left( \begin{array}{c}
	    1 \\ 0 \\ 0
           \end{array} \right) \!,
\end{align}
which reduces to
\col{
% \begin{subequations}
% \noeqref{eq:appendix_equations2:1,eq:appendix_equations2:2}
\begin{align}
\label{eq:appendix_equations2}
% \label{eq:appendix_equations2:1}
 \left( \begin{array}{c}
	    x_1^2 + x_2^2 \\
2(x_2 y_1 - x_1 y_2)
   \\ y_1^2 + y_2^2
           \end{array} \right)
&=
% \label{eq:appendix_equations2:2}
 \left( \begin{array}{c}
f_5 \\
f_6   \\
f_7           \end{array} \right)
\\ &\triangleq
\frac{1}{(a_1 - c_1)\beta_2}
 \left( \begin{array}{c}
-\beta_2 c_1 \\
a_2 c_1 - a_1 c_2   \\
a_1 \beta_2           \end{array} \right) \,.
\end{align}
% \end{subequations}
}{
\begin{align}
\label{eq:appendix_equations2}
\begin{aligned}
  & \left( \begin{array}{c}
	      x_1^2 + x_2^2 \\
  2(x_2 y_1 - x_1 y_2)
    \\ y_1^2 + y_2^2
	    \end{array} \right)
  =
  \left( \begin{array}{c}
  f_5 \\
  f_6   \\
  f_7           \end{array} \right)
  \triangleq
  \frac{1}{(a_1 - c_1)\beta_2}
  \left( \begin{array}{c}
  -\beta_2 c_1 \\
  a_2 c_1 - a_1 c_2   \\
  a_1 \beta_2           \end{array} \right) \,.
\end{aligned}
\end{align}
}

\begin{assert}
\label{assert:appendix2}
A solution to \eqref{eq:appendix_equations2} exists if and only if the following conditions holds:
\begin{subequations}
\label{eq:jcase2}
\noeqref{eq:jcase2:1,eq:jcase2:2,eq:jcase2:3}
\begin{align} 
\label{eq:jcase2:1}
f_5 & \geq 0 \\
\label{eq:jcase2:2}
f_7 & \geq 0 \\
\label{eq:jcase2:3}
4 f_5 f_7 - f_6^2 & \geq 0 \,.
\end{align}
\end{subequations}
\end{assert}

\begin{proof}[Proof of \assertref{assert:appendix2}]
\label{sss:justification2}
Construct the following two vectors: $\bp_1=(x_2,-x_1)$, $\bp_2=(y_1,y_2)$. Using the inner product definition, we have
\begin{subequations}
\label{eq:appendix_p_vectors2}
\noeqref{eq:appendix_p_vectors2:1,eq:appendix_p_vectors2:2,eq:appendix_p_vectors2:3}
\begin{align}
\label{eq:appendix_p_vectors2:1}
 ||\bp_1||^2&=x_1^2+x_2^2\\
\label{eq:appendix_p_vectors2:2}
 ||\bp_2||^2&=y_1^2+y_2^2\\
 \label{eq:appendix_p_vectors2:1}
2\left<\bp_1,\bp_2\right> &=  2(x_2y_1-x_1y_2) \,,
\end{align}
\end{subequations}
and the angle between the two vectors satisfies
\begin{align}
  \cos{\theta_1}&=\frac{\left<\bp_1,\bp_2\right>}{||\bp_1||||\bp_2||} \,.
\end{align}
Note that the l.h.s.\ of \eqref{eq:appendix_equations2} coincides with the r.h.s.\ of \eqref{eq:appendix_p_vectors2}.
Thus, a solution to \eqref{eq:appendix_equations2} exists if and only if
\begin{subequations}
\label{eq:pi_case2}
\noeqref{eq:pi_case2:1,eq:pi_case2:2,eq:pi_case2:3}
\begin{align}
\label{eq:pi_case2:1}
 ||\bp_1||^2&\geq0\\*
\label{eq:pi_case2:2}
 ||\bp_2||^2&\geq0\\*
\label{eq:pi_case2:3}
 \frac{\left<\bp_1,\bp_2\right>}{||\bp_1||||\bp_2||}&\leq1
\,,
\end{align}
\end{subequations}

where \eqref{eq:pi_case2:3} is equivalent to 
\begin{align}
 4||\bp_1||^2||\bp_2||^2-\left(2\left<\bp_1,\bp_2\right>\right)^2 \geq 0
  ,
 \end{align}
which is equivalent, in turn, to \eqref{eq:jcase2}.
\end{proof}

By definition, and using \eqref{eq:case_3_equations}, we have $f_5+f_7 = 1$.
Thus, these three equations are equivalent to the single equation
\begin{align}
4 f_5 f_7 - f_6^2 &=
\frac{
 -(a_2 c_1-a_1 c_2)^2-4
 a_1 c_1 \beta_2^2
}{\beta_2^2 (a_1-c_1)^2} \geq 0 \,.
\end{align}
Since the denominator is positive, this is equivalent to
\begin{align}
 -(a_2 c_1-a_1 c_2)^2-4
 a_1 c_1 \beta_2^2
 \geq 0 \,.
\end{align}
On the other hand, we have
\begin{align}
F_1(S_1,S_2) & =
 -(a_2 c_1-a_1 c_2)^2-4
 a_1 c_1 \beta_2^2 \,.
\end{align}
Thus, condition \eqref{eq:condition3} holds if and only if there exists a solution
to \eqref{eq:lemma_decomposition}.

\paragraph{Case 4} \label{tnayal:case4}
We are left with the case where $a_1 \neq c_1$, $b_2=0$, and $\beta_2 = 0$.
In this case,
\eqref{eq:linear_equation_system} becomes
\begin{align}
    \left( \begin{array}{cc}
		1 &  1 \\
		a_1 & c_1 \\
		a_2 & c_2
           \end{array} \right)
    \left( \begin{array}{c}
	    x_1^2 + x_2^2 \\
    y_1^2 + y_2^2
           \end{array} \right)
&=
    \left( \begin{array}{c}
	    1 \\ 0 \\ 0
           \end{array} \right) \,.
\end{align}
A necessary condition for the existence of a solution is that
the second and the third rows are linearly dependent
(or in other words, $a_1 c_2 = a_2 c_1$), in which case we have
\begin{align}
  x_1^2 + x_2^2 &= \frac{c_1}{c_1-a_1}
\\
  y_1^2 + y_2^2 &= \frac{-a_1}{c_1-a_1} \,.
\end{align}
Since we assumed $\det(S_1) \leq 0$, $a_1$ and $c_1$ have opposite signs,
and therefore $x_1^2+x_2^2$ and $y_1^2+y_2^2$ are both non-negative.
In conclusion, a necessary and sufficient condition for the existence of a solution
to \eqref{eq:lemma_decomposition} in this case is $a_2 c_1 = a_1 c_2$. 
On the other hand, we have
\begin{align}
    F_1(S_1,S_2) &=  -(a_2 c_1-a_1 c_2)^2 \,,
\end{align}
which is non-negative if and only if $a_2 c_1 = a_1 c_2$. Thus,
\eqref{eq:condition3} is a necessary and sufficient condition for the existence of a solution
to \eqref{eq:lemma_decomposition}.

This concludes the proof of the lemma.
\hfill$\blacksquare$
% \flushright{\blacksquare}

\section{Proof of \lemref{lem:F_invariant}}
\label{app:F_invariant}
Let $S_1$ and $S_2$ be $n \times n$ complex-valued matrices, and let $U$ be an $n \times n$ unitary matrix.
We have:
\col{
\begin{align*}
	& F_1 \left(  U^\dagger  S_1 U ,  U^\dagger S_2 U  \right)
\\
& =
	\det \left[
U^\dagger S_1 U \adj \left( U^\dagger S_2 U \right)
-
U^\dagger S_2 U \adj \left( U^\dagger S_1 U \right)
\right]
\\
& =
	\det \left[
U^\dagger S_1 U \adj \left( U \right) \adj \left( S_2 \right) \adj \left( U^\dagger \right)
\right.
\\
& \qqquad \left.
-
U^\dagger S_2 U \adj \left( U \right) \adj \left( S_1 \right) \adj \left( U^\dagger \right)
\right] \,.
\end{align*}
Since $U \adj \left( U \right) = \det(U)  I$, we have
\begin{align*}
	& F_1 \left(  U^\dagger  S_1 U ,  U^\dagger S_2 U  \right)
\\
& =
\left[ \det (U) \right]^n
	\det \left[
U^\dagger S_1 \adj \left( S_2 \right) \adj \left( U^\dagger \right)
\right.
\\
& \qqquad \left.
-
U^\dagger S_2 \adj \left( S_1 \right) \adj \left( U^\dagger \right)
\right]
\\
& =
\left[ \det (U) \right]^n
	\det \left[ U^\dagger \left(
S_1 \adj \left( S_2 \right) 
-
S_2 \adj \left( S_1 \right) \right) \adj \left( U^\dagger \right)
\right]
\\
& =
\left[ \det  (U) \right]^n \det \left[  U^\dagger \adj \left( U^\dagger \right)      \right]
	\det \left[
S_1 \adj \left( S_2 \right) 
-
S_2 \adj \left( S_1 \right)
\right]
\\
& =
\left( \det U \right)^n \left[ \det \left( U^\dagger \right) \right]^n
	\det \left[
S_1 \adj \left( S_2 \right)
-
S_2 \adj \left( S_1 \right)
\right]
\\
& =
\left[ \det \left( U  U^\dagger \right) \right]^n
	\det \left[
S_1 \adj \left( S_2 \right)
-
S_2 \adj \left( S_1 \right)
\right]
\\
& =
\det \left( I \right)^n
	\det \left[
S_1 \adj \left( S_2 \right)
-
S_2 \adj \left( S_1 \right)
\right]
\\
& =
	\det \left[
S_1 \adj \left( S_2 \right)
-
S_2 \adj \left( S_1 \right)
\right]
\\
& = F_1(S_1,S_2) \,. \qquad\qquad\qquad\qquad\qquad\qquad\qquad\qquad\qquad \blacksquare
\end{align*}
}
{
\begin{align*}
	F_1 \left(  U^\dagger  S_1 U ,  U^\dagger S_2 U  \right)
& = 	\det \left[
U^\dagger S_1 U \adj \left( U^\dagger S_2 U \right)
-
U^\dagger S_2 U \adj \left( U^\dagger S_1 U \right)
\right]
\\
& =
	\det \left[
U^\dagger S_1 U \adj \left( U \right) \adj \left( S_2 \right) \adj \left( U^\dagger \right)
-
U^\dagger S_2 U \adj \left( U \right) \adj \left( S_1 \right) \adj \left( U^\dagger \right)
\right] \,.
\end{align*}
Since $U \adj \left( U \right) = \det(U)  I$, we have
\begin{align*}
	F_1 \left(  U^\dagger  S_1 U ,  U^\dagger S_2 U  \right)
& =
\left[ \det (U) \right]^n
	\det \left[
U^\dagger S_1 \adj \left( S_2 \right) \adj \left( U^\dagger \right)
-
U^\dagger S_2 \adj \left( S_1 \right) \adj \left( U^\dagger \right)
\right]
\\
& =
\left[ \det (U) \right]^n
	\det \left[ U^\dagger \left(
S_1 \adj \left( S_2 \right)
-
S_2 \adj \left( S_1 \right) \right) \adj \left( U^\dagger \right)
\right]
\\
& =
\left[ \det  (U) \right]^n \det \left[  U^\dagger \adj \left( U^\dagger \right)      \right]
	\det \left[
S_1 \adj \left( S_2 \right)
-
S_2 \adj \left( S_1 \right)
\right]
\\
& =
\left( \det U \right)^n \left[ \det \left( U^\dagger \right) \right]^n
	\det \left[
S_1 \adj \left( S_2 \right)
-
S_2 \adj \left( S_1 \right)
\right]
\\
& =
\left[ \det \left( U  U^\dagger \right) \right]^n
	\det \left[
S_1 \adj \left( S_2 \right)
-
S_2 \adj \left( S_1 \right)
\right]
\\
& =
\det \left( I \right)^n
	\det \left[
S_1 \adj \left( S_2 \right)
-
S_2 \adj \left( S_1 \right)
\right]
\\
& =
	\det \left[
S_1 \adj \left( S_2 \right)
-
S_2 \adj \left( S_1 \right)
\right]
\\
& = F_1(S_1,S_2) \,.
\end{align*}
\hfill$\blacksquare$
}

\ver{
\section{Reduction from $3 \times 3$ to $2 \times 2$ in the Rateless Problem}
\label{app:rateless_reduction}
Recall that the original problem was to perform $2$-GMD \eqref{eq:K-GMD} to the following
two $3 \times 3$ matrices, both having a determinant equal to $1$:
\begin{align}
A_1 & =  \left( \begin{array}{ccc}
	  b^4 & 0 & 0 \\
	  0 &  b^{-2}   & 0 \\
	  0 & 0 & b^{-2}
       \end{array}
\right)
\\
A_2 & =  \left( \begin{array}{ccc}
	  b & 0 & 0 \\
	  0 &  b  & 0 \\
	  0 & 0 & b^{-2}
       \end{array}
\right)\,.
\end{align}
Since these two matrices are diagonal, we can assume, without loss of generality, that the elements in the first column of the matrix $V$ in \eqref{eq:K-GMD} are positive real-valued  (since the phase can be canceled by the matrices $U_k$). Also,
the first columns of $A_1 V$ and of $A_2 V$ must have norms equal to $1$, and thus
\begin{align}
 V = \left(
    \begin{array}{ccc}
      v_{11} & * & * \\
    v_{21} & * & * \\
      v_{31} & * & *
\end{array}
\right) \,,
\end{align}
where
\begin{align} \label{eq:rateless_abc}
\begin{aligned}
      v_{11} &= \frac{1}{\sqrt{b^8+b^4+1}}
\\
      v_{21} &= \frac{b^3}{\sqrt{b^8+b^4+1}}
\\
      v_{31} &= \frac{b^2}{\sqrt{b^4+b^2+1}} \,.
\end{aligned}
\end{align}
The remaining two columns must lay in the orthogonal complement to the subspace spanned by this vector,
which is spanned by the two vectors $(v_{12},v_{22},v_{32})^T$ and $(v_{13},v_{23},v_{33})^T$
where
\begin{align}
      v_{12} &= \frac{b^3}{\sqrt{b^6+1}}
\\
      v_{22} &= \frac{-1}{\sqrt{b^6+1}}
\\
v_{32} &= 0
\\
v_{13} & = \frac{b^2}{ \sqrt{(b^2+1)(b^8+b^4+1))} }
\\
v_{23} & = \frac{b^5}{ \sqrt{(b^2+1)(b^8+b^4+1)} }
\\
v_{33} &= - \frac{ \sqrt{1+b^6}}{ \sqrt{ b^8+b^4+1  } } \,.
\end{align}
In other words, we can represent $V$ as
\begin{align}
 V = V_0
\left(
          \begin{array}{ccc}
      1 & 0 & 0 \\
      0 & W_{11} & W_{12} \\
      0 & W_{21} & W_{22}
\end{array}
\right)
\,,
\end{align}
where
\begin{align}
      V_0 =
\left(
    \begin{array}{ccc}
      v_{11} & v_{12} & v_{13} \\
    v_{21} & v_{22} & v_{23} \\
      v_{31} & v_{32} & v_{33}
\end{array}
\right) \,,
\end{align}
and $W$ is a $2 \times 2$ unitary matrix.
Thus, the matrix
 \begin{align}
\left(
          \begin{array}{ccc}
      1 & 0 & 0 \\
      0 & W_{11} & W_{12} \\
      0 & W_{21} & W_{22}
\end{array}
\right)
\,,
\end{align}
performs $2$-GMD on the two matrices
\begin{align}
A_1 V_0 & =  \left( \begin{array}{ccc}
	  b^4 v_{11} & b^4 v_{12} & b^4 v_{13} \\
	  b^{-2} v_{21} & b^{-2} v_{22} & b^{-2} v_{23} \\
	  b^{-2} v_{31} & b^{-2} v_{32} & b^{-2} v_{33}
       \end{array}
\right)
\\
A_2 V_0 & =  \left( \begin{array}{ccc}
	  b v_{11} & b v_{12} & b v_{13} \\
	  b v_{21} & b v_{22}  & b v_{23} \\
	  b^{-2} v_{31} & b^{-2}v_{32} & b^{-2} v_{33}
       \end{array}
\right)\,.
\end{align}
or, equivalently, on the same matrices after Gram-Schmidt orthogonalization (i.e., QR decomposition):
\begin{align}
U_1^\dagger A_1 V_0 &= \left( \begin{array}{ccc}	
	  1 & * & * \\
	  0 & \frac{\sqrt{1-b^2+b^8}}{b^2} & \frac{b^6-1}{b\sqrt{(1-b^2+b^8)(1+b^2+b^4)}} \\
	  0 & 0 & \frac{b^2}{\sqrt{1-b^2+b^8}}
       \end{array}
\right)
\,, \\
U_2^\dagger A_2 V_0 &=
\left( \begin{array}{ccc}
	  1 & * & * \\
	  0 & b & 0 \\
	  0 & 0 & b^{-1}
       \end{array}
\right)\,.
\end{align}
In other words, $W$ performs $2$-GMD on the two following matrices:
\begin{align}
\tilde{A}_1 &= \left( \begin{array}{cc}	
	  \frac{\sqrt{1-b^2+b^8}}{b^2} & \frac{b^6-1}{b\sqrt{(1-b^2+b^8)(1+b^2+b^4)}} \\
	  0 & \frac{b^2}{\sqrt{1-b^2+b^8}}	
       \end{array}
\right)
\\
\tilde{A}_2 &=
\left( \begin{array}{cc}
	   b  & 0 \\
	  0 & b^{-1}
       \end{array}
\right)\,,
\end{align}
which is what we wanted to prove.
\hfill$\blacksquare$

}{}

\section{Proof of \thrmref{thm:extention_does_not_help}}
\label{app:extention_does_not_help}
Let $A_1$ and $A_2$ be two complex-valued $2 \times 2$ matrices with determinants equal to $1$.
Define:
\begin{align}
 S_1 & \triangleq A_1^\dagger A_1 - I \\
 S_2 & \triangleq A_2^\dagger A_2 - I \,.
\end{align}
Let $N \geq 2$, and define the following extended matrices:
\begin{align}
  \begin{array}{c}
    \AAA_k  \triangleq \blkmat{A_k}{N} \\
    \SSS_k  \triangleq \blkmat{S_k}{N}
  \end{array}
\quad k=1,2 \,.
\end{align}
Now, assume that there exist complex-valued unitary matrices $\UUU_1,\UUU_2,\VVV$ such that
\begin{align} \label{eq:ext_does_not_help_decomposition}
      \UUU_k^\dagger \AAA_k \VVV = \TTT_k \,, \quad k=1,2 \,,
\end{align}
where $\TTT_k$ are upper triangular with all the diagonal values equal $1$.
In particular, if we denote the first column of $\VVV$ by $\vvv$, then necessary (although not sufficient) conditions for the existence of the decomposition \eqref{eq:ext_does_not_help_decomposition}
are
\begin{align}
	\left\| \AAA_1 \vvv \right\|^2 &= 1
\\
	\left\| \AAA_2 \vvv \right\|^2 &= 1
\\
	\left\| \vvv \right\|^2 &= 1 \,,
\end{align}
or equivalently,
\begin{subequations}
\label{eq:space_time_cond}
\noeqref{eq:space_time_cond:1,eq:space_time_cond:2,eq:space_time_cond:3}
\begin{align} 
\label{eq:space_time_cond:1}
	\vvv^\dagger \SSS_1 \vvv &= 0
\\
\label{eq:space_time_cond:2}
	\vvv^\dagger \SSS_2 \vvv &= 0
\\
\label{eq:space_time_cond:3}
	\vvv^\dagger \vvv &= 1 \,.
\end{align}
\end{subequations}
As in the proof of \lemref{lem:det_adj_complex}, we can assume, without loss of generality, that
$S_1$ is real-valued and diagonal.
Denoting
\begin{align}
	\vvv &= \left(
    \begin{array}{c}
	x_1 + i x_2 \\ y_1+ i y_2 \\ \vdots \\ x_{2N-1} + ix_{2N} \\ y_{2N-1} + iy_{2N}
    \end{array}
\right) \,,
\\
 S_1 &= \left( \begin{array}{cc}
		a_1 & 0 \\ 0 & c_1
             \end{array} \right)
\\
 S_2 &= \left( \begin{array}{cc}
		a_2 & b_2+i\beta_2 \\ b_2-i\beta_2 & c_2
             \end{array} \right) \,,
\end{align}
the three equations \eqref{eq:space_time_cond} become
\col{
\begin{align}
\!\!\!\!
\!\!\!\!
    \left[ \begin{array}{cccc}
		1 & 0 & 0 & 1 \\
		a_1 & 0 & 0 & c_1 \\
		a_2 & b_2 & \beta_2 & c_2
           \end{array} \right]
    \left[ \begin{array}{c}
	     X_1+\cdots+X_N\\
         2(W_1+\cdots+W_N)\\
         2(Z_1+\cdots+Z_N)\\
         Y_1+\cdots+Y_N\\
           \end{array} \right]
&=
    \left[ \begin{array}{c}
	    1 \\ 0 \\ 0
           \end{array} \right]
\label{eq:extentions_linear_equation_system}
\end{align}
}
{
\begin{align}
    \left( \begin{array}{cccc}
		1 & 0 & 0 & 1 \\
		a_1 & 0 & 0 & c_1 \\
		a_2 & b_2 & \beta_2 & c_2
           \end{array} \right)
    \left( \begin{array}{c}
	     X_1+\cdots+X_N\\
         2(W_1+\cdots+W_N)\\
         2(Z_1+\cdots+Z_N)\\
         Y_1+\cdots+Y_N\\
           \end{array} \right)
&=
    \left( \begin{array}{c}
	    1 \\ 0 \\ 0
           \end{array} \right),
\label{eq:extentions_linear_equation_system}
\end{align}
}
where we define
\begin{align}
    X_j &\triangleq x_{2j-1}^2+x_{2j}^2 \\*
    W_j &\triangleq x_{2j-1}y_{2j-1} + x_{2j}y_{2j} \\*
    Z_j &\triangleq x_{2j}y_{2j-1} - x_{2j-1}y_{2j} \\*
    Y_j &\triangleq y_{2j-1}^2+y_{2j}^2 \,.
\end{align}
We now consider the following cases.

\paragraph{Case 1}
Assume first that $a_1 \neq c_1$ and $b_2 \neq 0$.
Thus, \eqref{eq:extentions_linear_equation_system} is equivalent to:
\col{
\begin{align*}
\left[ \begin{array}{c}
	     X_1+\cdots+X_N\\
         2(W_1+\cdots+W_N)\\
         2(Z_1+\cdots+Z_N)\\
         Y_1+\cdots+Y_N\\
           \end{array} \right]
&=
    \overbrace{ \left[ \begin{array}{cccc}
		1 & 0 & 0 & 1 \\
		a_1 & 0 & 0 & c_1 \\
		a_2 & b_2 & \beta_2 & c_2 \\
		  0 & 0 & 1 & 0
           \end{array} \right]^{-1}}^{B^{-1}}
    \left[ \begin{array}{c}
	    1 \\ 0 \\ 0 \\ t
           \end{array} \right]
\\
& \triangleq
    \frac{1}{\Delta}
    \left[ \begin{array}{c}
	      f_1(t) \\ f_2(t) \\ f_3(t) \\ f_4(t)
           \end{array} \right] \,, \label{eq:does_not_help_f1f2f3f4}
\end{align*}
}
{
\begin{align}
\left( \begin{array}{c}
	     X_1+\cdots+X_N\\
         2(W_1+\cdots+W_N)\\
         2(Z_1+\cdots+Z_N)\\
         Y_1+\cdots+Y_N\\
           \end{array} \right)
&=
    \overbrace{ \left( \begin{array}{cccc}
		1 & 0 & 0 & 1 \\
		a_1 & 0 & 0 & c_1 \\
		a_2 & b_2 & \beta_2 & c_2 \\
		  0 & 0 & 1 & 0
           \end{array} \right)^{-1}}^{B^{-1}}
    \left( \begin{array}{c}
	    1 \\ 0 \\ 0 \\ t
           \end{array} \right)
\triangleq
    \frac{1}{\Delta}
    \left( \begin{array}{c}
	      f_1(t) \\ f_2(t) \\ f_3(t) \\ f_4(t)
           \end{array} \right) \,, \label{eq:does_not_help_f1f2f3f4}
\end{align}
}
where $t$ is some real-valued parameter,
$f_1(t)$,$f_2(t)$,$f_3(t)$,$f_4(t)$ are first-degree polynomials in $t$
(with coefficients that depend on the matrices $S_1,S_2$),
and
\begin{align}
	\Delta \triangleq \det B = b_2(c_1 - a_1) \neq 0\,.
\end{align}
Thus, finding a solution $\vvv$ to the original problem is \emph{equivalent} to finding a solution
$(x_1,\cdots,x_{2N},y_1,\cdots,y_{2N},t)$ to the following equations:
\begin{subequations}
\label{eq:appendix_equations1_spacetime}
\noeqref{eq:appendix_equations1_spacetime:1,eq:appendix_equations1_spacetime:2,eq:appendix_equations1_spacetime:3,eq:appendix_equations1_spacetime:4}
\begin{align}
\label{eq:appendix_equations1_spacetime:1}
X_1+\cdots+X_N   & = \frac{1}{\Delta} f_1(t) \\
\label{eq:appendix_equations1_spacetime:2}
 2(W_1+\cdots+W_N)    & = \frac{1}{\Delta} f_2(t) \\
\label{eq:appendix_equations1_spacetime:3}
 2(Z_1+\cdots+Z_N)    & = \frac{1}{\Delta} f_3(t) \\
\label{eq:appendix_equations1_spacetime:4}
 Y_1+\cdots+Y_N    & = \frac{1}{\Delta} f_4(t) \,.
\end{align}
\end{subequations}

\begin{assert}
\label{assert:appendix1_spacetime}
  A solution to \eqref{eq:appendix_equations1_spacetime} exists
  if and only if the following conditions hold for some $t \in \mathbb{R}$:
  \begin{subequations}
  \label{eq:jcase1_spacetime}
  \noeqref{eq:jcase1_spacetime:1,eq:jcase1_spacetime:2,eq:jcase1_spacetime:3}
  \begin{align}
\label{eq:jcase1_spacetime:1}
     \frac{1}{\Delta} f_1(t) & \geq 0 \\
\label{eq:jcase1_spacetime:2}
    \frac{1}{\Delta} f_4(t) & \geq 0 \\
\label{eq:jcase1_spacetime:3}
    4 f_1(t) f_4(t) & \geq f_2^2(t) + f_3^2(t)  \,.
  \end{align}
  \end{subequations}
\end{assert}

\begin{proof}[Proof of \assertref{assert:appendix1_spacetime}]
\label{sss:justification1_spacetime}
  Construct the following three vectors:
\begin{align}
 \bp_1&=(x_2,-x_1,x_4,-x_3,\cdots,x_{2N},x_{2N-1}) \\*
 \bp_2&=(y_1,y_2,y_3,y_4,\cdots,y_{2N-1},y_{2N})	   \\*
 \bp_3&=(x_1,x_2,x_3,x_4,\cdots,x_{2N-1},x_{2N})
  \,.
\end{align}
  Using the inner product definition, we have
  \begin{align}
  \label{eq:appendix_p_vectors1_spacetime}
  \begin{aligned}
||\bp_1||^2&=||\bp_3||^2 
\\ &= x_1^2+x_2^2+x_3^2+x_4^2+\cdots+x_{2N-1}^2+x_{2N}^2\\*
&= X_1+X_2\cdots+X_N\\
||\bp_2||^2&=y_1^2+y_2^2+y_3^2+y_4^2+\cdots+y_{2N-1}^2+y_{2N}^2\\*
&=Y_1+Y_2+\cdots+Y_N\\
2\left<\bp_1,\bp_2\right> &= 2(x_2y_1-x_1y_2+\cdots+x_{2N}y_{2N-1}-x_{2N-1}y_{2N})\\*
&=2(Z_1+Z_2+\cdots+ Z_N)\\
2\left<\bp_2,\bp_3\right> &= 2(x_1y_1+x_2y_2+\cdots+x_{2N-1}y_{2N-1}+x_{2N}y_{2N})\\*
&=2(W_1+W_2+\cdots+W_N)
\,,
  \end{aligned}
  \end{align}
    and the angles between these vectors satisfy
\begin{align}
 \cos{\theta_1}&=\frac{\left<\bp_1,\bp_2\right>}{||\bp_1||||\bp_2||}\\*
 \cos{\theta_2}&=\frac{\left<\bp_3,\bp_2\right>}{||\bp_3||||\bp_2||}=\frac{\left<\bp_3,\bp_2\right>}{||\bp_1||||\bp_2||}
\,.
\end{align}

Note that the l.h.s.\ of \eqref{eq:appendix_equations1_spacetime} and the r.h.s.\ of \eqref{eq:appendix_p_vectors1_spacetime} coincide, and that $\left<\bp_3,\bp_1\right>=0$. Therefore the angle between them is $\pi/2$. One verifies that the maximum of $\cos^2{\theta_1}+\cos^2{\theta_2}$ is achieved when all three vectors are on the same plane, in which case $\cos{\theta_2}=\cos{(\pm\pi/2-\theta_1)}=\pm \sin{\theta_1}$, which implies that $\cos^2{\theta_1}+\cos^2{\theta_2}=1$.
When the three vectors do not lay on the same plane, $\cos^2{\theta_1}+\cos^2{\theta_2} < 1$.

  Thus, a solution to \eqref{eq:appendix_equations1_spacetime} exists if and only if
\begin{subequations}
\label{eq:ext_no_help:p_i}
\noeqref{eq:ext_no_help:p_i:1,eq:ext_no_help:p_i:2,eq:ext_no_help:p_i:3}
 \begin{align}
\label{eq:ext_no_help:p_i:1}
 ||\bp_1||^2&\geq0\\*
\label{eq:ext_no_help:p_i:2}
 ||\bp_2||^2&\geq0\\*
\label{eq:ext_no_help:p_i:3}
 \!\!\!\!\!\!\!\!\!\!\!\!\cos^2\theta_1+\cos^2\theta_2&=\frac{\left<\bp_1,\bp_2\right>^2}{||\bp_1||^2||\bp_2||^2} + \frac{\left<\bp_3,\bp_2\right>^2}{||\bp_1||^2||\bp_2||^2}\leq1
,
\end{align}
\end{subequations}
where \eqref{eq:ext_no_help:p_i:3} is equivalent to 
\begin{align}
 4||\bp_1||^2||\bp_2||^2-\left(2\left<\bp_1,\bp_2\right>\right)^2-\left(2\left<\bp_3,\bp_2\right>\right)^2 \geq 0 
\,,
\end{align}
  which is equivalent, in turn, to \eqref{eq:jcase1_spacetime}.
\end{proof}

By definition, and using \eqref{eq:does_not_help_f1f2f3f4}, $(f_1(t)+f_4(t)) = \Delta$. Therefore, these three conditions are equivalent to the following single condition:
\begin{align}
      4f_1(t) f_4(t) - f_2^2(t) - f_3^2(t) \geq 0 \,.
\end{align}
This is a quadratic inequality in $t$,
\begin{align} \label{eq:parabolic_inequality}
      a t^2 + bt + c \geq 0 \,,
\end{align}
where the constants $a,b,c$ are as in \eqref{eq:parabolic_inequality_constants}.
Note that since
 $a_1 \neq c_1$ and $b_2 \neq 0$, the coefficient $a$ is \emph{strictly} negative.
Therefore, a necessary and sufficient condition for the existence of a (real-valued) solution $t$ to the inequality in \eqref{eq:parabolic_inequality} is for the discriminant to be non-negative:
\begin{align}
      b^2 - 4 a c \geq 0 \,.
\end{align}
A direct calculation shows that
\begin{align}
      b^2 - 4 a c = 4 \Delta^2 F_1(A_1^\dagger A_1 - I ,  A_2^\dagger A_2 - I)  \,,
\end{align}
where $F_1$ is defined as in \eqref{eq:definition_of_F}.
This condition is the same as the condition in \eqref{eq:condition3} which completes the proof of \thrmref{thm:extention_does_not_help} for this case.

\paragraph{Case 2}
Assume now that $a_1=c_1$. As in case 2 in the proof of \lemref{lem:det_adj_complex},
condition \eqref{eq:condition3} holds, and thus this case is not possible under the assumptions of the theorem.

\paragraph{Case 3}
Next, assume that $a_1 \neq c_1$, $b_2=0$, and $\beta_2 \neq 0$.
Thus, \eqref{eq:extentions_linear_equation_system} becomes
\begin{align}
    \left( \begin{array}{ccc}
		1 &  0 & 1 \\
		a_1 &  0 & c_1 \\
		a_2 &  \beta_2 & c_2
           \end{array} \right)
\left( \begin{array}{c}
	     X_1+\cdots+X_N\\
         2(Z_1+\cdots+Z_N)\\
         Y_1+\cdots+Y_N\\
           \end{array} \right)
&=
    \left( \begin{array}{c}
	    1 \\ 0 \\ 0
           \end{array} \right) \,,
\end{align}
which reduces to 
\col{
\begin{subequations}
\label{eq:appendix_equations2_spacetime}
\noeqref{eq:appendix_equations2_spacetime:1,eq:appendix_equations2_spacetime:2}
\begin{align}
\label{eq:appendix_equations2_spacetime:1}
&\left( \begin{array}{c}
	     X_1+\cdots+X_N\\
         2(Z_1+\cdots+Z_N)\\
         Y_1+\cdots+Y_N\\
           \end{array} \right)
\\*
\label{eq:appendix_equations2_spacetime:2}
&=
 \left( \begin{array}{c}
f_5 \\
f_6   \\
f_7           \end{array} \right)
\triangleq
\frac{1}{(a_1 - c_1)\beta_2}
 \left( \begin{array}{c}
-\beta_2 c_1 \\
a_2 c_1 - a_1 c_2   \\
a_1 \beta_2           \end{array} \right) \,.
\end{align}
\end{subequations}
}{
\begin{align}
 \label{eq:appendix_equations2_spacetime}
\begin{aligned}
&\left( \begin{array}{c}
	     X_1+\cdots+X_N\\
         2(Z_1+\cdots+Z_N)\\
         Y_1+\cdots+Y_N\\
           \end{array} \right)
=
 \left( \begin{array}{c}
f_5 \\
f_6   \\
f_7           \end{array} \right)
\triangleq
\frac{1}{(a_1 - c_1)\beta_2}
 \left( \begin{array}{c}
-\beta_2 c_1 \\
a_2 c_1 - a_1 c_2   \\
a_1 \beta_2           \end{array} \right) \,.
\end{aligned}
\end{align}
}

\begin{assert}
\label{assert:appendix2_spacetime}
A solution to \eqref{eq:appendix_equations2_spacetime} exists if and only if the following conditions holds:
\begin{subequations}
\label{eq:jcase2_spacetime}
\noeqref{eq:jcase2_spacetime:1,eq:jcase2_spacetime:2,eq:jcase2_spacetime:3}
\begin{align} 
\label{eq:jcase2_spacetime:1}
f_5 & \geq 0 \\
\label{eq:jcase2_spacetime:2}
f_7 & \geq 0 \\
\label{eq:jcase2_spacetime:3}
4 f_5 f_7 - f_6^2 & \geq 0 \,.
\end{align}
\end{subequations}
\end{assert}

\begin{proof}[Proof of \assertref{assert:appendix2_spacetime}]
\label{sss:justification2_spacetime}
Construct the following two vectors:
\begin{align}
 \bp_1&=(x_2,-x_1,x_4,-x_3,\cdots,x_{2N},x_{2N-1}) \\*
 \bp_2&=(y_1,y_2,y_3,y_4,\cdots,y_{2N-1},y_{2N})	
  \,.
\end{align}
 Using the inner product definition, we have
\begin{align}
\begin{aligned}
\label{eq:appendix_p_vectors2_spacetime}
||\bp_1||^2 &= x_1^2+x_2^2+x_3^2+x_4^2+\cdots+x_{2N-1}^2+x_{2N}^2\\*
&= X_1+X_2\cdots+X_N\\*
||\bp_2||^2&=y_1^2+y_2^2+y_3^2+y_4^2+\cdots+y_{2N-1}^2+y_{2N}^2\\*
&=Y_1+Y_2+\cdots+Y_N\\*
2\left<\bp_1,\bp_2\right> &= 2(x_2y_1-x_1y_2+\cdots+x_{2N}y_{2N-1}-x_{2N-1}y_{2N})\\*
&=2(Z_1+Z_2+\cdots+ Z_N)
\,.
\end{aligned}
\end{align}
and the angle between the two vectors satisfies
\begin{align}
  \cos{\theta_1}&=\frac{\left<\bp_1,\bp_2\right>}{||\bp_1||||\bp_2||} \,.
\end{align}
Note that the l.h.s.\ of \eqref{eq:appendix_equations2_spacetime} and the r.h.s.\ of \eqref{eq:appendix_p_vectors2_spacetime} coincide.
Thus, a solution to \eqref{eq:appendix_equations2_spacetime} exists if and only if
\begin{subequations}
\label{eq:ext_no_hhelp:pi_case_3}
\noeqref{eq:ext_no_hhelp:pi_case_3:1,eq:ext_no_hhelp:pi_case_3:2,eq:ext_no_hhelp:pi_case_3:3}
\begin{align}
\label{eq:ext_no_hhelp:pi_case_3:1}
 ||\bp_1||^2&\geq0\\*
\label{eq:ext_no_hhelp:pi_case_3:2}
 ||\bp_2||^2&\geq0\\*
\label{eq:ext_no_hhelp:pi_case_3:3}
 \frac{\left<\bp_1,\bp_2\right>}{||\bp_1||||\bp_2||}&\leq1
\end{align}
\end{subequations}
where \eqref{eq:ext_no_hhelp:pi_case_3:3} is equivalent to 
\begin{align}
 4||\bp_1||^2||\bp_2||^2-\left(2\left<\bp_1,\bp_2\right>\right)^2 \geq 0 \,,
 \end{align}
which is equivalent, in turn, to \eqref{eq:jcase2_spacetime}.
\end{proof}

From \assertref{assert:appendix2_spacetime}, a necessary condition for the existence of a solution to  \eqref{eq:appendix_equations2_spacetime} is
 \begin{align}
4 f_5 f_7 - f_6^2 & \geq 0 \,,
\end{align}
which is equivalent, in turn, to
\begin{align}
 -(a_2 c_1-a_1 c_2)^2-4
 a_1 c_1 \beta_2^2
 \geq 0 \,.
\end{align}
On the other hand,
\begin{align}
F_1(S_1,S_2) & =
 -(a_2 c_1-a_1 c_2)^2-4
 a_1 c_1 \beta_2^2 \,.
\end{align}
Thus condition \eqref{eq:condition3} must hold true, since otherwise
no solution to
\eqref{eq:space_time_cond} exists.

\paragraph{Case 4}
We are left with the case where $a_1 \neq c_1$, $b_2=0$, and $\beta_2 = 0$.
In this case,
\eqref{eq:extentions_linear_equation_system} reduces to
\begin{align}
    \left( \begin{array}{cc}
		1 &  1 \\
		a_1 & c_1 \\
		a_2 & c_2
           \end{array} \right)
\left( \begin{array}{c}
	     X_1+\cdots+X_N\\
         Y_1+\cdots+Y_N\\
           \end{array} \right)
&=
    \left( \begin{array}{c}
	    1 \\ 0 \\ 0
           \end{array} \right) \,.
\end{align}
A necessary condition for the existence of a solution in this case, is that
the second and the third rows are linearly dependent, i.e., $a_1 c_2 = a_2 c_1$.
On the other hand,
\begin{align}
    F_1(S_1,S_2) &=  -(a_2 c_1-a_1 c_2)^2 \,.
\end{align}
Thus if condition \eqref{eq:condition3} does not hold, no solution to
\eqref{eq:space_time_cond} exists.

This concludes the proof of the theorem.
\hfill$\blacksquare$

\section{Proof of  \lemref{lem:k_to_k_plus_one}}
\label{app:proof_k_to_k_plus_one}
First, assume that statement 2 holds.
Namely, There exist $K+2$ matrices with orthonormal columns
$U_1,\ldots,U_{K+1},V$, of dimensions $n \times \nnn$, such that
\begin{align}
	U_k^\dagger A_k V = R_k \,,\qquad k=1,\ldots,K+1 \,,
\end{align}
where $\left\{ R_k \right\}$ are $\nnn \times \nnn$ upper triangular with equal diagonals.
Now, arbitrarily extend $V$ to an $n \times n$ unitary matrix:
\begin{align}
      \tilde{V} = \left( \begin{array}{c|c}
 V & V^\perp \end{array} \right) \,.
\end{align}
Then, $U_k$  can also be extended  to $n \times n$ unitary matrices, by performing
Gram-Schmidt process on the columns of $A_k \tilde{V}$:
\begin{align}
      \tilde{U}_k = \left( \begin{array}{c|c} U_k & U_k^\perp \end{array} \right) \,,
\end{align}
such that
\begin{align}
      \tilde{U}_k^\dagger A_k \tilde{V} = \tilde{R}_k = 
      \left(
        \begin{array}{c|c}
                R_k & * \\ \hline
                0   & \hat{R}_k
        \end{array}
      \right)
    \,,
\end{align}
and $\hat{R}_k$ are upper triangular (with diagonal elements that depend on $k$).
Thus, we have:
\col{
\begin{align}
	\tilde{U}_k^\dagger B_k \tilde{U}_{K+1} &=
      \tilde{U}_k^\dagger A_k A_{K+1}^{-1} \tilde{U}_{K+1}
\\
      &=
      \tilde{U}_k^\dagger A_k \tilde{V} \tilde{V}^\dagger A_{K+1}^{-1} \tilde{U}_{K+1}
\\
      &=
      \tilde{R}_k  \tilde{R}_{K+1}^{-1}
\\
      &=
	\tilde{T}_k \,,
\end{align}
}
{
\begin{align}
	\tilde{U}_k^\dagger B_k \tilde{U}_{K+1} &=
      \tilde{U}_k^\dagger A_k A_{K+1}^{-1} \tilde{U}_{K+1}
\\*
      &=
      \tilde{U}_k^\dagger A_k \tilde{V} \tilde{V}^\dagger A_{K+1}^{-1} \tilde{U}_{K+1}
\\*
      &=
      \tilde{R}_k  \tilde{R}_{K+1}^{-1}
\\*
      &=
	\tilde{T}_k \,,
\end{align}
}
where $\tilde{T}_k$ is of the form
\begin{align}
      \tilde{T}_k = \left(
      \begin{array}{c|c}
	    T_k & * \\ \hline
	    0   & \hat{T}_k
      \end{array}
\right) \,,
\end{align}
where $T_k$ is upper triangular with all the diagonal elements equal to $1$,
and $\hat{T}_k$ is upper triangular (with diagonal elements that depend on $k$).
By substitution:
\begin{align}
      \left( \begin{array}{c}
		      U_k^\dagger \\ \hline \left(U_k^\perp\right)^\dagger
             \end{array} \right)
      B_k
      \left( \begin{array}{c|c}
		      U_{K+1} & U_{K+1}^\perp
             \end{array} \right)
=\left(
      \begin{array}{c|c}
	    T_k & * \\ \hline
	    0   & \hat{T}_k
      \end{array}
\right) \,.
\end{align}
By taking only the first $\nnn$ rows and the first $\nnn$ columns of this equality, we obtain
\begin{align}
    U_k^\dagger B_k U_{K+1} = T_k \,,
\end{align}
which results in statement 1.

Now, assume that statement 1 holds.
Perform the QR decomposition on the matrix $A_{K+1}^{-1} U_{K+1}$:
\begin{align}
	A_{K+1}^{-1} U_{K+1} = V R  \,,
\end{align}
where $V$ is of dimensions $n \times \nnn$ with orthonormal columns, and $R$ is an \mbox{$\nnn \times \nnn$}  upper triangular matrix.
Thus, using \eqref{eq:definition_of_a}, we obtain the following equalities:
\begin{align}
  U_k^\dagger A_k V R & = U_k^\dagger A_k A_{K+1}^{-1} U_{K+1}
\\*
 & = U_k^\dagger B_k U_{K+1}  \,, \qquad k=1,\ldots,K \,,
\end{align}
which, according to \eqref{eq:jet_of_a}, suggest
\begin{align} \label{eq:miffy1}
&   U_k^\dagger A_k V R = T_k  \,, \qquad \qquad k=1,\ldots,K \,.
\end{align}
On the other hand, we have
\col{
\begin{subequations}
\label{eq:miffy2}
\noeqref{eq:miffy2:1,eq:miffy2:2}
\begin{align}
 U_{K+1}^\dagger A_{K+1} V R    &= U_{K+1}^\dagger A_{K+1} A_{K+1}^{-1} U_{K+1} 
\label{eq:miffy2:1}
\\*
 &= U_{K+1}^\dagger U_{K+1}  =  I \,. 
\label{eq:miffy2:2}
\end{align}
\end{subequations}
}
{
\begin{align}
 U_{K+1}^\dagger A_{K+1} V R    &= U_{K+1}^\dagger A_{K+1} A_{K+1}^{-1} U_{K+1} \\*
 &= U_{K+1}^\dagger U_{K+1}   \\* 
 &=  I \,. 
\label{eq:miffy2}
\end{align}
}
Multiplying \eqref{eq:miffy1} and \eqref{eq:miffy2} by $R^{-1}$ on the right yields:
\begin{align}
  U_k^\dagger A_k V         &= T_k R^{-1}  \,, \qquad k=1,\ldots,K \\
  U_{K+1}^\dagger A_{K+1} V &= R^{-1} \,.
\end{align}
Since $T_k$ are upper triangular with only $1$s on the diagonal, the matrices
$ R_k  \triangleq T_k R^{-1} $  ($k=1,\ldots,K$) and
$ R_{K+1} \triangleq R^{-1} $
have equal diagonals, thus statement 2 holds.

This completes the proof. \hfill$\blacksquare$

\section{Proof of \thrmref{thm:n_n_asymptotical} for $n=2$, $K=3$, $N=4$}
\label{app:proof_K_3_n_2}
  The proof will be based on $K=3$ steps.

  Denote by $\left\{ \mA_k \right\}$ the extended matrices corresponding to $N = 4$ channel uses.

  \underline{Step 1:}

  Start by applying a 1-GMD for each block (corresponding to a single channel use) of the first matrix $A_1$:
\begin{align}
	\Ul11 A_1 \Vl1 = \left( \begin{array}{cc}
	                     1 & * \\ 0 & 1
	                 \end{array} \right)
	\,,
\end{align}
\noindent which corresponds, in turn, to applying the following extended unitary matrices
(recall the definition of the embedding operation)
\begin{align}
    \UUUU11 &\triangleq \embb{8}{\Ul11}{\subm{1}{2}\subm{3}{4}\subm{5}{6}\subm{7}{8}}\,,\\
    \VVVV1  &\triangleq \embb{8}{\Vl1} {\subm{1}{2}\subm{3}{4}\subm{5}{6}\subm{7}{8}}
    \,,
\end{align}
and results in the following extended triangular matrix
  \begin{align}
      &\TTT_1^{(1)} = \UUUU11\AAA_1 \VVVV1\\
      &\quad = \left(
	\begin{array}{ccccccccc} \cline{1-2}
		\multicolumn{1}{|c}{1} & \multicolumn{1}{c|}{*} & 0 & 0 & 0 & 0 & 0 & 0 \\
		\multicolumn{1}{|c}{0}    &  \multicolumn{1}{c|}{\cellcolor[gray]{0.8} 1} & 0 & 0 & \multicolumn{1}{c}{\cellcolor[gray]{0.8} 0} & 0 & 0 & 0 \\ \cline{1-4}
		  0 & 0 & \multicolumn{1}{|c}{1} & \multicolumn{1}{c|}{*}  & 0 & 0 & 0 & 0  \\
		  0 & 0 &  \multicolumn{1}{|c}{0}    & \multicolumn{1}{c|}{1} & 0 & 0 & 0 & 0 \\ \cline{3-6}
		  0 & \multicolumn{1}{c}{\cellcolor[gray]{0.8} 0} & 0 & 0 & \multicolumn{1}{|c}{\cellcolor[gray]{0.8} 1} & \multicolumn{1}{c|}{*} & 0 & 0 \\
		  0 & 0 & 0 & 0 &  \multicolumn{1}{|c}{0}    & \multicolumn{1}{c|}{1} & 0 & 0 \\ \cline{5-8}
		  0 & 0 & 0 & 0 & 0 & 0 & \multicolumn{1}{|c}{1} & \multicolumn{1}{c|}{*} \\
		  0 & 0 & 0 & 0 & 0 & 0 & \multicolumn{1}{|c}{0} & \multicolumn{1}{c|}{1} \\ \cline{7-8}
	\end{array}
      \right)\,.
  \end{align}

  Note that the same matrix $\VVVV1$ has to be applied to all matrices (since the encoder is shared by all users).
  We decompose the resulting matrices (after multiplying them by $\VVVV1$) according to the QR decomposition,
  resulting in unitary matrices $\UUUU{k}{1}$ such that:
  \begin{align}
      &\TTT_k^{(1)} = \UUUU{k}1 \AAA_k \VVVV1  \\
      &\quad = \left(
	\begin{array}{ccccccccc} \cline{1-2}
		\multicolumn{1}{|c}{r_1^k} & \multicolumn{1}{c|}{*} & 0 & 0 & 0 & 0 & 0 & 0 \\
		\multicolumn{1}{|c}{0}    &  \multicolumn{1}{c|}{\cellcolor[gray]{0.8}r_2^k} & 0 & 0 & \multicolumn{1}{c}{\cellcolor[gray]{0.8} 0} & 0 & 0 & 0 \\ \cline{1-4}
		  0 & 0 & \multicolumn{1}{|c}{r_1^k} & \multicolumn{1}{c|}{*}  & 0 & 0 & 0 & 0  \\
		  0 & 0 &  \multicolumn{1}{|c}{0}    & \multicolumn{1}{c|}{r_2^k} & 0 & 0 & 0 & 0 \\ \cline{3-6}
		  0 & \multicolumn{1}{c}{\cellcolor[gray]{0.8}0} & 0 & 0 & \multicolumn{1}{|c}{\cellcolor[gray]{0.8}r_1^k} & \multicolumn{1}{c|}{*} & 0 & 0 \\
		  0 & 0 & 0 & 0 &  \multicolumn{1}{|c}{0}    & \multicolumn{1}{c|}{r_2^k} & 0 & 0 \\ \cline{5-8}
		  0 & 0 & 0 & 0 & 0 & 0 & \multicolumn{1}{|c}{r_1^k} & \multicolumn{1}{c|}{*} \\
		  0 & 0 & 0 & 0 & 0 & 0 & \multicolumn{1}{|c}{0} & \multicolumn{1}{c|}{r_2^k} \\ \cline{7-8}
	\end{array}
      \right) \,,
  \end{align}
where $r_1^k \, r_2^k = 1$ and $k = 2,3$.

  \underline{Step 2:} \\
  In the second step we apply the 1-GMD to the matrices $\TTT_2^{(1)}\subm{2}{5}$ and $\TTT_2^{(1)}\subm{4}{7}$.
  In both cases the two-by-two matrices are of the same form:
  \begin{align}
    \Ul22\left(
    \begin{array}{cc}
      r_2^2 & 0 \\
      0	    & r_1^2 \\
    \end{array}
    \right)
    \Vl2
    =
    \left(
    \begin{array}{cc}
      1 & * \\
      0	    & 1 \\
    \end{array}
    \right)
   \,.
  \end{align}

  Now note that the matrix corresponding to these elements in $\TTT_1^{(1)}$ have the identity matrix form $I_2$.
Thus, by \propertyref{prop:lenI}, applying $\Vl2$ on the right and $\left( \Vl2 \right)^\dagger$ on the left results in the identity matrix, \ie, 
$\TTT_1^{(1)}\subm{2}{5}$ and $\TTT_1^{(1)}\subm{4}{7}$ remain unchanged.

  For the third matrix, we apply the QR decomposition with $\UUUU32$ (assuming no special structure).\\
  Define
 \begin{align}
    \UUUU22 &\triangleq \embb{8}{\Ul22}{\subm{2}{5}\subm{4}{7}} \,, \\
    \VVVV2  &\triangleq \embb{8}{\Vl2} {\subm{2}{5}\subm{4}{7}}
    \,.
  \end{align}

 Thus, we attain the following matrices after the completion of the second step:
\begin{align*}
      &\TTT_2^{(2)} = \UUUU22 \TTT_2^{(1)} \VVVV2  \\*
      &\quad = \left(
	\begin{array}{ccccccccc}
		  r_1^2 & * & 0 & 0 & * & 0 & 0 & 0 \\
		  0 &  \multicolumn{1}{c}{\cellcolor[gray]{0.8} 1} & 0 & 0 & \multicolumn{1}{c}{\cellcolor[gray]{0.8} *} & * & 0 & 0 \\
		  0 & 0 & r_1^2 & *  & 0 & 0 & * & 0  \\
		  0 & 0 & 0 & \multicolumn{1}{c}{\cellcolor[gray]{0.5} 1} & 0 & 0 & \multicolumn{1}{c}{\cellcolor[gray]{0.5} *} & * \\
		  0 & \multicolumn{1}{c}{\cellcolor[gray]{0.8} 0} & 0 & 0 & \multicolumn{1}{c}{\cellcolor[gray]{0.8 }1} & * & 0 & 0 \\
		  0 & 0 & 0 & 0 & 0 & r_2^2 & 0 & 0 \\
		  0 & 0 & 0 & \multicolumn{1}{c}{\cellcolor[gray]{0.5} 0} & 0 & 0 & \multicolumn{1}{c}{\cellcolor[gray]{0.5 }1} & * \\
		  0 & 0 & 0 & 0 & 0 & 0 & 0 & r_2^2 \\
	\end{array}
      \right) \,,
\\
      &\TTT_1^{(2)} = \left( \VVVV2 \right)^\dagger \TTT_1^{(1)} \VVVV2  \\*
      &\quad = \left(
	\begin{array}{ccccccccc}
		  1 & * & 0 & 0 & * & 0 & 0 & 0 \\
		  0 &  \multicolumn{1}{c}{\cellcolor[gray]{0.8} 1} & 0 & 0 & \multicolumn{1}{c}{\cellcolor[gray]{0.8} 0} & * & 0 & 0 \\
		  0 & 0 & 1 & *  & 0 & 0 & * & 0  \\
		  0 & 0 & 0 & \multicolumn{1}{c}{\cellcolor[gray]{0.5} 1} & 0 & 0 & \multicolumn{1}{c}{\cellcolor[gray]{0.5} 0} & * \\
		  0 & \multicolumn{1}{c}{\cellcolor[gray]{0.8} 0} & 0 & 0 & \multicolumn{1}{c}{\cellcolor[gray]{0.8} 1} & * & 0 & 0 \\
		  0 & 0 & 0 & 0 & 0 & 1 & 0 & 0 \\
		  0 & 0 & 0 & \multicolumn{1}{c}{\cellcolor[gray]{0.5} 0} & 0 & 0 & \multicolumn{1}{c}{\cellcolor[gray]{0.5} 1} & * \\
		  0 & 0 & 0 & 0 & 0 & 0 & 0 & 1 \\
	\end{array}
      \right) \,,
\\
      &\TTT_3^{(2)} = \UUUU32 \TTT_3^{(1)} \VVVV2  \\*
      &\quad = \left(
	\begin{array}{ccccccccc}
		  r_1^3 & * & 0 & 0 & * & 0 & 0 & 0 \\
		  0 &  \multicolumn{1}{c}{\cellcolor[gray]{0.8} d_2} & 0 & 0 & \multicolumn{1}{c}{\cellcolor[gray]{0.8} *} & * & 0 & 0 \\
		  0 & 0 & r_1^3 & *  & 0 & 0 & * & 0  \\
		  0 & 0 & 0 & \multicolumn{1}{c}{\cellcolor[gray]{0.5} d_2} & 0 & 0 & \multicolumn{1}{c}{\cellcolor[gray]{0.5} *} & * \\
		  0 & \multicolumn{1}{c}{\cellcolor[gray]{0.8} 0} & 0 & 0 & \multicolumn{1}{c}{\cellcolor[gray]{0.8} d_1} & * & 0 & 0 \\
		  0 & 0 & 0 & 0 & 0 & r_2^3 & 0 & 0 \\
		  0 & 0 & 0 & \multicolumn{1}{c}{\cellcolor[gray]{0.5} 0} & 0 & 0 & \multicolumn{1}{c}{\cellcolor[gray]{0.5} d_1} & * \\
		  0 & 0 & 0 & 0 & 0 & 0 & 0 & r_2^3 \\
	\end{array}
      \right) \,,
  \end{align*}
where $d_1 \, d_2 = 1$.

  \underline{Step 3:} \\
  Finally, apply the 1-GMD to $\submat{\TTT_3^{(2)}}{4}{5}$:
  \begin{align}
    \Ul33
    \left(
    \begin{array}{cc}
      d_2 & 0 \\
      0	    & d_1 \\
    \end{array}
    \right)
    \Vl3
    =
    \left(
    \begin{array}{cc}
      1 & * \\
      0	    & 1 \\
    \end{array}
    \right)
    \,.
  \end{align}
  Again, note that the corresponding sub-matrices of $\TTT_1^{(3)}$ and $\TTT_2^{(3)}$ are equal to $I_2$.
  Thus by \propertyref{prop:lenI}, multiplying them by $\Vl3$ on the right and $\left( \Vl3 \right)^\dagger$ on the left, gives rise to the identity matrix $I_2$.
  By defining
\begin{align}
&\UUUU33\triangleq\embb{8}{\Ul33}{\subm{4}{5}}\,,\\
&\VVVV3 \triangleq\embb{8}{\Vl3} {\subm{4}{5}} \,,\\
&\UUUU13=\UUUU23\triangleq \left( \VVVV3 \right)^\dagger \,,
\end{align}

  we arrive to the following three triangular matrices:
 \begin{align}
      &\TTT_3^{(3)} = \UUUU33 \TTT_3^{(2)} \VVVV3  \\*
      &\quad = \left(
	\begin{array}{ccccccccc}
		  r_1^3 & * & 0 & * & * & 0 & 0 & 0 \\
		  0 &  d_2 & 0 & * & * & * & 0 & 0 \\
		  0 & 0 & r_1^3 & *  & * & 0 & * & 0  \\ \cline{4-5}
		  0 & 0 & 0 & \multicolumn{1}{|c}{1} & \multicolumn{1}{c|}{*} & * & * & * \\
		  0 & 0 & 0 & \multicolumn{1}{|c}{0} & \multicolumn{1}{c|}{1} & * & * & * \\ \cline{4-5}
		  0 & 0 & 0 & 0 & 0 & r_2^3 & 0 & 0 \\
		  0 & 0 & 0 & 0  & 0 & 0 & d_1 & * \\
		  0 & 0 & 0 & 0 & 0 & 0 & 0 & r_2^3
	\end{array}
      \right) \,, \\
      &\TTT_2^{(3)} = \left( \VVVV3 \right)^\dagger \TTT_2^{(2)} \VVVV3  \\*
      &\quad = \left(
	\begin{array}{ccccccccc}
		  r_1^2 & * & 0 & * & * & 0 & 0 & 0 \\
		  0 &  1 & 0 & * & * & * & 0 & 0 \\
		  0 & 0 & r_1^2 & *  & * & 0 & * & 0  \\ \cline{4-5}
		  0 & 0 & 0 & \multicolumn{1}{|c}{1} & \multicolumn{1}{c|}{0} & * & * & * \\
		  0 & 0 & 0 & \multicolumn{1}{|c}{0} & \multicolumn{1}{c|}{1} & * & * & * \\ \cline{4-5}
		  0 & 0 & 0 & 0 & 0 & r_2^2 & 0 & 0 \\
		  0 & 0 & 0 & 0  & 0 & 0 & 1 & * \\
		  0 & 0 & 0 & 0 & 0 & 0 & 0 & r_2^2
	\end{array}
      \right) \,,  \\*
  &\TTT_1^{(3)} = \left( \VVVV3 \right)^\dagger \TTT_1^{(2)} \VVVV3  \\
  &\quad = \left(
\begin{array}{ccccccccc}
	  1 & * & 0 & * & * & 0 & 0 & 0 \\
	  0 &  1 & 0 & 0 & 0 & * & 0 & 0 \\
	  0 & 0 & 1 & *  & * & 0 & * & 0  \\ \cline{4-5}
	  0 & 0 & 0 & \multicolumn{1}{|c}{1} & \multicolumn{1}{c|}{0} & * & 0 & * \\
	  0 & 0 & 0 & \multicolumn{1}{|c}{0} & \multicolumn{1}{c|}{1} & * & 0 & * \\ \cline{4-5}
	  0 & 0 & 0 & 0 & 0 & 1 & 0 & 0 \\
	  0 & 0 & 0 & 0  & 0 & 0 & 1 & * \\
	  0 & 0 & 0 & 0 & 0 & 0 & 0 & 1
\end{array}
  \right) \,.
\end{align}

   By taking the middle rows and columns (rows and columns 4 and 5) we achieve the desired decomposition with diagonal elements equaling to 1 in all three triangular matrices,  simultaneously. Formally, we do so by multiplying $\left(\extt8{4,5}\right)^\dagger$ on the left and by $\extt8{4,5}$ on the right (see \remref{remark:extract}) to achieve:

  \begin{align}
  \nonumber
      \left(\extt8{4,5}\right)^\dagger \TTT_{k}^{(3)} \extt8{4,5}
                = \left(
                  \begin{array}{cc}
                    1 & * \\
                    0 & 1 \\
                  \end{array}
                \right)
       \,.
  \end{align}

  Thus, by defining 
  \begin{align}
    \mV &= \mV^{(1)} \mV^{(2)} \mV^{(3)}\extt8{4,5} \\
    \left(\UUU_{k}\right)^\dagger &= \left(\extt8{4,5}\right)^\dagger\UUUU{k}3\UUUU{k}2\UUUU{k} \,,
    \quad k = 1,2,3
    \,,
  \end{align}
we arrive at the desired result.
\hfill$\blacksquare$

\section{Proof of \thrmref{thm:n_n_asymptotical} for $n=2$ and General $K,N$}
\label{app:proof_n_2}
For $K$ users, we use the same idea, i.e., applying two-by-two 1-GMD operations sequentially on the different channel matrices.
Thus, stating the indices of the four-tuples for which 1-GMD is applied at each step (for each matrix), suffices to establish the desired construction. \\
The proof will be based on $K$ steps.

Denote by $\left\{ \mA_k \right\}$ the extended matrices corresponding to $N$ channel uses.\\
  \underline{Step 1:} \\
  Perform 1-GMD (corresponding to a single channel use) on the matrix $A_1$: $\Ul11 A_1 \Vl1$.\\
  Then, we apply this decomposition to each block separately, using:
\begin{align*}
\UUUU11 &\triangleq \embb{2N}{\Ul11}{\subm{1}{2}\subm{3}{4}\cdots \subm{2N-1}{2N}} \,,\\
\VVVV1  &\triangleq \embb{2N}{\Vl1} {\subm{1}{2}\subm{3}{4}\cdots \subm{2N-1}{2N}}
\,.
\end{align*}
Then, we need to apply the same matrix $\VVVV1$  to all matrices (since the encoder is shared by all users).
  We decompose the resulting matrices (after multiplying them by $\VVVV1$) according to the QR decomposition,
  resulting in unitary matrices $\UUUU{k}{1}$. We denote the resulting extended triangular matrices by 
   $\TTT_k^{(1)} = \UUUU{k}1\mA_k\VVVV1$.

    \underline{Step 2:} \\
  Perform 1-GMD on the matrix $\submat{\TTT_2^{(1)}}{2}{2^{K-1}+1}$:
\begin{align}
  \Ul22 \left(\submat{\TTT_2^{(1)}}{2}{2^{K-1}+1}\right) \Vl2 \,.
\end{align}
  Then, apply this decomposition to each of the matrices, using:
\begin{align*}
\UUUU22 &\triangleq \embb{2N}{\Ul22} {\subunion{q}{2q}{2^{K-1}+2q-1}} \,, \\
\VVVV2  &\triangleq \embb{2N}{\Vl2}  {\subunion{q}{2q}{2^{K-1}+2q-1}} \,,
\end{align*}
for all $q \in \{1\,,2\,,\ldots\,, N-2^{K-2}\}$.

Note that the submatrices of $\TTT_1^{(1)}$ in these indices, 
$\submat {\TTT_1^{(1)}}{2}{2^{K-1}+1} \cdots \submat {\TTT_1^{(1)}}{2N-2^{K-1}}{2N-1}$ are equal to $I_2$; 
by \propertyref{prop:lenI}, multiplying them by $\Vl2$ on the right and $\left( \Vl2 \right)^\dagger$ on the left,
leaves them unchanged.

Then, we need to apply the same matrix $\VVVV2$  to all matrices (since the encoder is shared by all users).
  We decompose the resulting matrices (after multiplying them by $\VVVV2$) according to the QR decomposition,
  resulting in unitary matrices $\UUUU{k}{2}$. We denote the resulting extended triangular matrices by 
  $\TTT_k^{(2)} = \UUUU{k}2 \TTT_k^{(1)}\VVVV2$. 

\vspace{.5\baselineskip}
    \underline{Step $3 \leq l \leq K$:}

  Perform 1-GMD on the matrix \\
  $\submat{\TTT_l^{(l-1)}}{2^{K-1} - 2^{K-(l-1)} + 2}{2^{K-1}+1}$:
\begin{align}
  \Ul{l}{l} \left(\submat{\TTT_l^{(l-1)}}{2^{K-1} - 2^{K-(l-1)} + 2}{2^{K-1}+1}\right) \Vl{l} \,.
\end{align}
  Then, apply this decomposition to each of the extended matrices, using:
\col{
\begin{align*}
& \UUUU{l}{l} \triangleq \\*
& \embb{2N}{\Ul{l}{l}}{\subunion{q}{2^{K-1} - 2^{K-(l-1)} + 2q}{2^{K-1}+2q-1}}\\*
& \VVVV{l} \triangleq \\*
& \embb{2N}{\Vl{l}}{\subunion{q}{2^{K-1} - 2^{K-(l-1)} + 2q}{2^{K-1}+2q-1}}
\end{align*}
}{
\begin{align}
 \UUUU{l}{l} &\triangleq 
 \embb{2N}{\Ul{l}{l}}{\subunion{q}{2^{K-1} - 2^{K-(l-1)} + 2q}{2^{K-1}+2q-1}}\,, \\*
 \VVVV{l} &\triangleq 
 \embb{2N}{\Vl{l}}{\subunion{q}{2^{K-1} - 2^{K-(l-1)} + 2q}{2^{K-1}+2q-1}}\,,
\end{align}
}
for all $q \in \{1\,,2\,,\ldots\,,N-2^{K-2}\}$.

Note that the submatrices of the matrices $\TTT_j^{(l-1)}$ ($j=1\,,...\,,l-1$) in the same indices are all equal to $I_2$;
by \propertyref{prop:lenI}, multiplying them by $\Vl{l}$ on the right and $\left( \Vl{l} \right)^\dagger$ on the left, leaves them unchanged.

Then, we need to apply the same matrix $\VVVV{l}$  to all matrices (since the encoder is shared by all users).
  We decompose the resulting matrices (after multiplying them by $\VVVV{l}$) according to the QR decomposition,
  resulting in unitary matrices $\UUUU{k}{l}$. We denote the resulting extended triangular matrices by 
  $\TTT_k^{(l)} = \UUUU{k}{l} \TTT_k^{(l-1)}\VVVV{l}$.

 \underline{Step $K$:} \\
 After performing the last step (step $l=K$), we are left with $K$ matrices, $\TTT_k^{(K)}$, the central submatrices of which, $\submatt{\TTT_k^{(K)}}{2^{K-1}}{2N-2^{K-1}+1}$, 
 have diagonals equal to $1$. We extract these matrices using the following matrix (see \remref{remark:extract}):
\begin{align}
  \OOO \triangleq \extt{2N}{2^{K-1}:2N-2^{K-1}+1} \,.
\end{align}

  Thus, by defining 
  \begin{align*}
    \left(\UUU_1\right)^\dagger & \triangleq \OOO^\dagger
\left( \VVVV{K} \right)^\dagger\cdots\left( \VVVV2 \right)^\dagger\UUUU11 &\\
    \left(\UUU_k\right)^\dagger &\triangleq \OOO^\dagger \left( \VVVV{K} \right)^\dagger\cdots\left( \VVVV{k+1} \right)^\dagger \UUUU{k}{k} \cdots \UUUU{k}1 &\\
    \left(\UUU_K\right)^\dagger &\triangleq \OOO^\dagger \UUUU{K}{K} \cdots \UUUU{K}1\\
    \mV &\triangleq \mV^{(1)} \mV^{(2)} \cdots \mV^{(K)} \OOO \,, 
  \end{align*}
we arrive at the desired result.
\hfill$\blacksquare$

\section{Proof of \thrmref{thm:n_n_asymptotical} for $K=2$ and general $n,N$}
\label{app:proof_K_2}
The proof is composed of $K=2$ steps, where, in the case of general $n$, the second step consists of two stages.

\underline{Step 1:} \\
We start by performing 1-GMD (corresponding to a single channel use) on the first matrix $A_1$:
  \begin{align}
    \Ul11 A_1 \Vl1 = \left( \begin{array}{ccccc}
		    1 & * & \cdots & * & * \\
		    0 & 1 & \cdots & * & *\\
		    \vdots & \vdots & \ddots & \vdots & \vdots \\
		    0 & 0 & \cdots & 1 & * \\
		    0 & 0 & \cdots & 0 & 1
		\end{array} \right) \,.
  \end{align}
Apply this decomposition to each block separately, on the first extended matrix, $\AAA_1$, using:
\col{
\begin{align*}
&\UUUU11 \triangleq \\
& \embb{2N}{\Ul11}{\submm{1}{n}\submm{n+1}{2n}\ldots\submm{(N-1)n+1}{Nn}} \\
&\VVVV1  \triangleq \\
& \embb{2N}{\Vl1} {\submm{1}{n}\submm{n+1}{2n}\ldots\submm{(N-1)n+1}{Nn}} \,.
\end{align*}
}{
\begin{align}
\UUUU11 &\triangleq 
 \embb{2N}{\Ul11}{\submm{1}{n}\submm{n+1}{2n}\ldots\submm{(N-1)n+1}{Nn}} \,, \\
\VVVV1  &\triangleq 
 \embb{2N}{\Vl1} {\submm{1}{n}\submm{n+1}{2n}\ldots\submm{(N-1)n+1}{Nn}} \,.
\end{align}
}
Note that the same matrix $\mV^{(1)}$ has to be applied to all matrices (since the encoder is shared by all users).
  We decompose the resulting matrices (after multiplying them by $\mV^{(1)}$) according to the QR decomposition,
  resulting in unitary matrices~$\UUUU21$:
\begin{align*}
      \mT_k^{(1)} & \triangleq   \UUUU{k}1 \AAA_k \VVVV1
\\*
& = \left(
\begin{array}{c:c:c:c:c}
T_{k}^{(1)} & 0 & \cdots & 0 & 0 \\ \hdashline
	      0 & T_{k}^{(1)} & \cdots & 0 & 0 \\ \hdashline	
              \vdots &  \vdots &   \vdots        &    \ddots    &  \vdots \\ \hdashline
	      0 & 0        & \cdots      & T_{k}^{(1)} & 0 \\              \hdashline
		0 & 0 & \cdots         &0     & T_{k}^{(1)}
	\end{array} \right)\,,\qquad k=1,2\,,
\end{align*}
where,
  \begin{align}
    T_{1}^{(1)} &\triangleq \left( \begin{array}{ccccc}
		    1 & * & \cdots & * & * \\
		    0 & 1 & \cdots & * & *\\
		    \vdots & \vdots & \ddots & \vdots & \vdots \\
		    0 & 0 & \cdots & 1 & * \\
		    0 & 0 & \cdots & 0 & 1
		\end{array} \right) \,, \\
    T_{2}^{(1)} &\triangleq \left( \begin{array}{ccccc}
		    r_1 & * & \cdots & * & * \\
		    0 & r_2 & \cdots & * & *\\
		    \vdots & \vdots & \ddots & \vdots & \vdots \\
		    0 & 0 & \cdots & r_{n-1} & * \\
		    0 & 0 & \cdots & 0 & r_n
		\end{array} \right) \,.
  \end{align}

\noindent
\underline{Step 2:}\\
This step consists of 2 stages: the first is the reordering stage and the second is application of 1 1-GMD to each block.\\
\qquad \underline{Stage 1: Reordering}

It is convenient to reorder the columns of $\mT_{k}^{(1)}$ such that the columns 
\begin{align}
kn,kn+(n-1),kn+2(n-1), \cdots, kn+(n-1)^2
\end{align}
are ``grouped together'' for every $k$.\footnote{Note that this set includes exactly one symbol from each of $n$ consecutive channel uses.} 
Formally, we do so by applying the $nN \times n(N-n+1)$ reordering matrix 
\begin{align}
~\OOO = \extt{nN}{kn,kn+(n-1),kn+2(n-1), \cdots, kn+(n-1)^2}.
\end{align}

 The reordering stage gives rise to the following matrices of dimensions \mbox{$n(N-n+1) \times n(N-n+1)$}:
 \begin{align*}
 \mT_k^{(2)(1)} &\triangleq   \left(\OOO\right)^{\dagger} \mT_k^{(1)} \OOO \\
  &=\left(
  \begin{array}{c:c:c:c:c}
  T_{k}^{(2)(1)} & * & \cdots & * & * \\ \hdashline
	      0 & T_{k}^{(2)(1)} & \cdots & * & * \\ \hdashline	
              \vdots &  \vdots &   \vdots        &    \ddots    &  \vdots \\ \hdashline
	      0 & 0        & \cdots      & T_{k}^{(2)(1)} & * \\              \hdashline
		0 & 0 & \cdots         &0     & T_{k}^{(2)(1)}
	\end{array} \right) \,,\\
    & \qqqquad \qqqquad \qqqquad \qqqquad \qquad k=1\,,2\,,
\end{align*}
where,
  \begin{align}
    T_{1}^{(2)(1)} &\triangleq \left( \begin{array}{ccccc}
		    1 & 0 & \cdots & 0 & 0 \\
		    0 & 1 & \cdots & 0 & 0\\
		    \vdots & \vdots & \ddots & \vdots & \vdots \\
		    0 & 0 & \cdots & 1 & 0 \\
		    0 & 0 & \cdots & 0 & 1
		\end{array} \right) \,, \\
    T_{2}^{(2)(1)} &\triangleq \left( \begin{array}{ccccc}
		    r_n & 0 & \cdots & 0 & 0 \\
		    0 & r_{n-1} & \cdots & 0 & 0\\
		    \vdots & \vdots & \ddots & \vdots & \vdots \\
		    0 & 0 & \cdots & r_{2} & 0 \\
		    0 & 0 & \cdots & 0 & r_1
		\end{array} \right) \,,
  \end{align}
  the superscripts denote the step and stage number, and the subscripts denote the user number.

\qquad \underline{Stage 2: 1-GMD} \\
Perform 1-GMD on the matrix $T_{2}^{(2)(1)}$:
  \begin{align}
    \Ul22 T_{2}^{(2)(1)} \Vl2 = \left( \begin{array}{ccccc}
		    1 & * & \cdots & * & * \\
		    0 & 1 & \cdots & * & *\\
		    \vdots & \vdots & \ddots & \vdots & \vdots \\
		    0 & 0 & \cdots & 1 & * \\
		    0 & 0 & \cdots & 0 & 1
		\end{array} \right) \,.
  \end{align}

Note that the matrix $T_{1}^{(2)(1)}$ is equal to $I_n$; by \propertyref{prop:lenI}, multiplying it by $\Vl2$ on the right and $\left( \Vl2 \right)^\dagger$ on the left,
leaves it unchanged.

We now apply this decomposition to each block separately, using 
\begin{align*}
\UUUU22 &\triangleq \embb{n(N-n+1)}{\Ul22}{\subunionn{q}{1+n(q-1)}{qn}}\,,\\
\VVVV2  &\triangleq \embb{n(N-n+1)}{\Vl2} {\subunionn{q]}{1+n(q-1)}{qn}}\,,
\end{align*}
for all $q \in \{1\,,2\,,\ldots\,,N-n+1\}$, 
which results in the extended triangular matrices 
 \begin{align}
      \mT_k^{(2)} &\triangleq   \UUUU{k}2 \mT_k^{(2)(1)} \VVVV2 \\
&=  \left(
\begin{array}{c:c:c:c:c}
 T_{k}^{(2)} & * & \cdots & * & * \\ \hdashline
	      0 & T_{k}^{(2)} & \cdots & * & * \\ \hdashline	
              \vdots &  \vdots &   \vdots        &    \ddots    &  \vdots \\ \hdashline
	      0 & 0        & \cdots      & T_{k}^{(2)} & * \\              \hdashline
		0 & 0 & \cdots         &0     & T_{k}^{(2)}
	\end{array} \right)\,, k=1\,,2\,,
\end{align}
where,
  \begin{align}
    T_{1}^{(2)} &= \left( \begin{array}{ccccc}
		    1 & 0 & \cdots & 0 & 0 \\
		    0 & 1 & \cdots & 0 & 0\\
		    \vdots & \vdots & \ddots & \vdots & \vdots \\
		    0 & 0 & \cdots & 1 & 0 \\
		    0 & 0 & \cdots & 0 & 1
		\end{array} \right) \,,
\\
    T_{2}^{(2)} &= \left( \begin{array}{ccccc}
		    1 & * & \cdots & * & * \\
		    0 & 1 & \cdots & * & *\\
		    \vdots & \vdots & \ddots & \vdots & \vdots \\
		    0 & 0 & \cdots & 1 & * \\
		    0 & 0 & \cdots & 0 & 1
		\end{array} \right) \,.
  \end{align}
  Thus, by defining 
\begin{align}
\qqquad\VVV &\triangleq \VVVV1\OOO\VVVV2 \\
\left(\UUU_1\right)^\dagger &\triangleq \left(\VVVV2\right)^\dagger  \left(\OOO\right)^{\dagger} \UUUU11 \\
\left(\UUU_2\right)^\dagger &\triangleq \UUUU22  \left(\OOO\right)^{\dagger} \UUUU21
\end{align}
we arrive at the desired result.
\hfill$\blacksquare$

\section{Proof of \thrmref{thm:n_n_asymptotical} for general $n,N,K$}
\label{app:proof_general}
  \noindent
   The proof for the case of $K$ users, follows the same principles of the special cases presented in \secref{ss:space_time_proof_K_2_n_2} and Appendices \ref{app:proof_K_3_n_2}, \ref{app:proof_n_2}, \ref{app:proof_K_2}. The proof is composed of $K$ steps, each of which consists of 2 stages (except for the first step): a reordering stage and a 1-GMD stage.

  Denote by $\left\{ \mA_k \right\}$ the extended matrices corresponding to $N$ channel uses.

  \underline{Step 1:} \\
  Perform 1-GMD on the first matrix matrix $A_1$ (corresponding to to a single channel use): $\Ul11 A_1 \Vl1$.
  Apply this decomposition to each block separately, on the first extended matrix $\AAA_1$, using:
\col{
    {{\small
    \begin{align}
     &\UUUU11 \triangleq  
     \\& \embb{nN}{\Ul11}{\submm{1}{n}\submm{n+1}{2n}\ldots\submm{(N-1)n+1}{Nn}} , 
     \\& \VVVV1  \triangleq  
     \\& \embb{nN}{\Vl1}{\submm{1}{n}\submm{n+1}{2n}\ldots\submm{(N-1)n+1}{Nn}} .
    \end{align}
    }}
}{
\begin{align}
 \UUUU11 &\triangleq  \embb{nN}{\Ul11}{\submm{1}{n}\submm{n+1}{2n}\ldots\submm{(N-1)n+1}{Nn}} \,, \\
 \VVVV1  &\triangleq  \embb{nN}{\Vl1}{\submm{1}{n}\submm{n+1}{2n}\ldots\submm{(N-1)n+1}{Nn}} \,.
\end{align}
}
Note that the same matrix $\VVVV1$ has to be applied to all matrices (since the encoder is shared by all users).
  We decompose the resulting matrices (after multiplying them by $\VVVV1$) according to the QR decomposition,
  resulting in unitary matrices $\UUUU{k}{1}$. The resulting extended triangular matrices are denoted by
   $\mT_k^{(1)} \triangleq \UUUU{k}{1}\mA_k\VVVV1$.

\vspace{.3\baselineskip}
    \underline{Step $2\leq{l}\leq K$:}

\vspace{.3\baselineskip}
\underline{Stage 1: Reordering} \\
We perform the ordering stage using the following ordering matrix, for all $q_1 \in \{1\,,2\,,\ldots\,,N-n^{K-1}+n^{K-{l}}\}$ and \mbox{$q_2 \in \{1\,,2\,,\ldots\,, n\}$}:
\begin{align}
&~\OOO^{l} \triangleq \extt{nN-n^{(K-{l}+1)}\left(n^{({l}-1)}-1\right)}{  \left\{ \left\{n+(q_1-1)n+(q_2-1)\Delta\right\}_{q_2}\right\}_{q_1}}
\,,
\end{align}
where $\Delta=n^{K-{l}+1}-1$. Note that the range of $q_2$ is equal to the dimension $n$ of each block, whereas the range of $q_2$ is determined by the number of blocks, which depends on ${l}$.

Thus, at the end of the first stage, we are left with
$\mT_k^{({l})(1)}=\left(\OOO^{l}\right)^\dagger\mT_k^{({l}-1)}\OOO^{l} $.
Note that in each step the size of $\mT_k^{({l})(1)}$ is decreasing.\\

\qquad \underline{Stage 2: 1-GMD} \\
  Perform 1-GMD on the matrix $\submatt{\mT_{l}^{({l})(1)}}{1}{n}$ using:
\begin{align}
  \Ul{{l}}{{l}} \left(\submatt{\mT_{l}^{({l})(1)}}{1}{n}\right) \Vl{{l}} \,.
\end{align}
  Then, apply this decomposition to each of the extended matrices, using:
\begin{align*}
\UUUU{{l}}{{l}}&\triangleq \embb{nN-n^{(K-{l}+2)}}{\Ul{{l}}{{l}}}{\subunion{q}{1+n(q-1)}{nq}}\\*
\VVVV{{l}}      &\triangleq \embb{nN-n^{(K-{l}+2)}}{\Vl{{l}}}      {\subunion{q}{1+n(q-1)}{nq}}
\,,
\end{align*}
for all $q \in \{1\,,2\,,\ldots\,,(N-n^{K-1}+n^{K-{l}})\}$.

Note that the submatrices of $\mT_k^{({l})(1)}$ ($k=1\,,...\,,{l}-1$) in the same indices are all equal $I_n$;
by \propertyref{prop:lenI}, multiplying them by $\Vl{{l}}$ on the right and $\left( \Vl{{l}} \right)^\dagger$ on the left,
leave them unchanged.

The same matrix $\VVVV{{l}}$  has to be applied to all matrices (since the encoder is shared by all users).
  We decompose the resulting matrices (after multiplying them by $\VVVV{l}$) according to the QR decomposition,
  resulting in unitary matrices $\UUUU{k}{{l}}$. The resulting extended triangular matrices will be denoted as
   $\mT_k^{({l})} \triangleq \UUUU{k}{{l}} \mT_k^{({l})(1)}\VVVV{l}$.

\vspace{.5\baselineskip}
 \underline{Step $K$:} \\
 After performing the last step (step ${l}=K$) we attain $K$ matrices $\mT_k^{(K)}$
 which all have 1s on theirs diagonals.

  Thus, by defining 
\col{  
\begin{align}
    \left(\UUU_1\right)^\dagger &\triangleq \left( \VVVV{K} \right)^\dagger\left( \OOO^K \right)^\dagger\cdots\left( \VVVV2 \right)^\dagger\left( \OOO^2 \right)^\dagger\UUUU11 &\\
    \left(\UUU_k\right)^\dagger &\triangleq \left( \VVVV{K} \right)^\dagger\left( \OOO^K \right)^\dagger\cdots\left( \VVVV{k+1} \right)^\dagger\left( \OOO^{k+1} \right)^\dagger \cdot \\
& \phantom{\triangleq} \cdot \UUUU{k}{k} \left( \OOO^{k} \right)^\dagger\cdots \UUUU{k}1 &\\
    \left(\UUU_K\right)^\dagger &\triangleq \UUUU{K}{K} \left( \OOO^K \right)^\dagger\cdots \UUUU{K}1\\
    \mV &\triangleq \mV^{(1)} \OOO^2 \mV^{(2)} \cdots \OOO^K\mV^{(K)} \,,
  \end{align}
}
{
\begin{align}
    \left(\UUU_1\right)^\dagger &\triangleq \left( \VVVV{K} \right)^\dagger\left( \OOO^K \right)^\dagger\cdots\left( \VVVV2 \right)^\dagger\left( \OOO^2 \right)^\dagger\UUUU11 &\\
    \left(\UUU_k\right)^\dagger &\triangleq \left( \VVVV{K} \right)^\dagger\left( \OOO^K \right)^\dagger\cdots\left( \VVVV{k+1} \right)^\dagger\left( \OOO^{k+1} \right)^\dagger \cdot 
\UUUU{k}{k} \left( \OOO^{k} \right)^\dagger\cdots \UUUU{k}1 &\\
    \left(\UUU_K\right)^\dagger &\triangleq \UUUU{K}{K} \left( \OOO^K \right)^\dagger\cdots \UUUU{K}1\\
    \mV &\triangleq \mV^{(1)} \OOO^2 \mV^{(2)} \cdots \OOO^K\mV^{(K)} \,,
  \end{align}
}
we arrive at the desired result.
\hfill$\blacksquare$

\section{Proof of \thrmref{thm:upper_lower_complex}}
\label{app:upper_lower_proof}
We can assume without loss of generality that the matrix $V$ is of the following form:
\begin{align}
    V = \left( \begin{array}{cc}
                  x_1 + ix_2 & y_1 - iy_2 \\
		  y_1 + iy_2 & -x_1+ix_2
               \end{array}
\right) \,,
\end{align}
where $x_1,x_2,y_1,y_2$ are real numbers satisfying
\begin{align}
 x_1^2+x_2^2 + y_1^2 + y_2^2 = 1 \,. 
\end{align}
Denote the first column of $V$ by $\bv_1$ and the second column by $\bv_2$.
Then, there exist unitary matrices $U_1,U_2$ such that
\begin{align}
      {{  \left( U_1  \right)^\dagger    }} A_1 V =  \left( \begin{array}{cc}
		      1 & * \\ 0 & 1
                   \end{array} \right)
\end{align}
and
\begin{align}
      {{  \left( U_2  \right)^\dagger    }} A_2 V =  \left( \begin{array}{cc}
		      1 & 0 \\ * & 1
                   \end{array} \right) \,,
\end{align}
if and only if the following two vectors have an Euclidean norm of $1$:
\begin{align}
   A_1 \bv_1 &= A_1 \left( \begin{array}{c}
x_1 + ix_2 
\\ y_1 + iy_2
\end{array}     \right)
\\
A_2 \bv_2 &= A_2 \left( \begin{array}{c}
y_1 - iy_2
\\ -x_1 + ix_2
\end{array}     \right) \,,
\end{align}
or equivalently,
\col{
\begin{subequations}
\label{eq:upper_lower_three_equations}
\noeqref{eq:upper_lower_three_equations:1,eq:upper_lower_three_equations:2,eq:upper_lower_three_equations:3,eq:upper_lower_three_equations:4}
\begin{align} 
\label{eq:upper_lower_three_equations:1}
  \left\| \bv_1 \right\|^2 &= 1 \\
\label{eq:upper_lower_three_equations:2}
  \left\| \bv_2 \right\|^2 &= 1 \\
\label{eq:upper_lower_three_equations:3}
  \bv_1^\dagger  \left( A_1^\dagger A_1 - I \right) \bv_1 &= 0 \\
\label{eq:upper_lower_three_equations:4}
  \bv_2^\dagger  \left( A_2^\dagger A_2 - I \right) \bv_2 &= 0 \,.
\end{align}
\end{subequations}
}{
\begin{align} \label{eq:upper_lower_three_equations}
\begin{aligned}
  \left\| \bv_1 \right\|^2 &= \left\| \bv_2 \right\|^2 = 1 \\
  \bv_1^\dagger  \left( A_1^\dagger A_1 - I \right) \bv_1 &= \bv_2^\dagger  \left( A_2^\dagger A_2 - I \right) \bv_2 = 0 \\
\end{aligned}
\end{align}}
By definition, $\left\| \bv_1 \right\|^2 = \left\| \bv_2 \right\|^2$,
and also that for any Hermitian $2 \times 2$ matrix $S$:
\begin{align}
     \bv_2^\dagger  S \bv_2  =  \bv_1^\dagger \adj \left( S \right) \bv_1 \,.
\end{align}
Thus, \eqref{eq:upper_lower_three_equations} is equivalent to
\col{
\begin{align}
\begin{aligned}
  \left\| \bv_1 \right\|^2 &= 1 \\
  \bv_1^\dagger S_1 \bv_1 &= 0 \\
  \bv_1^\dagger S_2 \bv_1 &= 0 \,,
\end{aligned}
\end{align}
}{
\begin{align}
\begin{aligned}
  \left\| \bv_1 \right\|^2 &= 1 \\
  \bv_1^\dagger S_1 \bv_1 &=  \bv_1^\dagger S_2 \bv_1 = 0 \\
\end{aligned}
\end{align}}
where
\begin{align}
  S_1 &\triangleq A_1^\dagger A_1 - I \\
  S_2 &\triangleq \adj \left(  A_2^\dagger A_2 - I \right) \,.
\end{align}
Since $\det(A_1)=\det(A_2)=1$, we have
\begin{align}
  \det(S_1) & \leq 0 \\
  \det(S_2) & \leq 0 \,.
\end{align}
Thus, from \lemref{lem:det_adj_complex} it follows that a solution exists if and only if
\begin{align} \label{eq:qqq124}
    \det \left( S_1 \mathrm{adj} (S_2) - S_2 \mathrm{adj} (S_1) \right) \geq 0 \,.
\end{align}
Note that for any $2 \times 2$ matrix $A$,
\begin{align}
      \mathrm{adj} (\adj (A)) =  A \,.
\end{align}
Hence, the left hand side of condition \eqref{eq:qqq124} can be written as
\col{
\begin{align}
 & \det \left(   S_1 \adj (S_2) - S_2 \adj (S_1) \right)
\\
& =
\det \left(
(A_1^\dagger A_1 - I)(A_2^\dagger A_2 - I) \right.
\\
& \qqquad \left. - \adj(A_2^\dagger A_2 - I) \adj (A_1^\dagger A_1 - I)
\right)
\\
&= F_2 \left( A_1^\dagger A_1-I , A_2^\dagger A_2 - I \right)\,,
\end{align}
}
{
\begin{align}
  \det \left(   S_1 \adj (S_2) - S_2 \adj (S_1) \right)
&=
\det \left(
(A_1^\dagger A_1 - I)(A_2^\dagger A_2 - I) \right.
\left. - \adj(A_2^\dagger A_2 - I) \adj (A_1^\dagger A_1 - I)
\right)
\\
&= F_2 \left( A_1^\dagger A_1-I , A_2^\dagger A_2 - I \right)\,,
\end{align}
}
which completes the proof of the theorem.
\hfill$\blacksquare$

\ver{
\section{Proof of \lemref{lem:GTD_with_multiplicities}}
\label{app:Proof_GTD_with_multiplicities}
Denote the vector consisting of $\left\{ r_{m} \right\}$ with their multiplicities, ordered non-increasingly, by $\br$ and the vector whose entries are the singular values of $A$, $\left\{ \sigma_j \right\}$, ordered non-increasingly, by~$\bsigma$.
According to the GTD \cite{GTD}, the decomposition \eqref{eq:append:A=URV'} is possible if and only if Weyl's condition \cite{WeylCondition,WeylConditionInverse_ByHorn},
\begin{align}
\label{eq:Major_Append}
    \bsigma \succeq \br \,,
\end{align}
holds true. Namely, $n$ conditions need to be evaluated.
We shall show next that when at least some of the absolute values of the desired diagonal $R_{jj}$ are of multiplicity greater than 1, such that there are $M<n$ distinct such (absolute) values, only $M$ of these conditions, \eqref{eq:WeylMultiplicityCond}-\eqref{eq:WeylMultiplicityCondEqual}, need to be evaluated.
The necessity of \eqref{eq:WeylMultiplicityCond}-\eqref{eq:WeylMultiplicityCondEqual} is apparent since they constitute the $n_1,n_1+n_2,...,n$ conditions in \eqref{eq:Major_Append}.

We shall prove the sufficiency of these conditions by induction.

\textbf{Basis:}
We shall show first that the $n_1$ condition in \eqref{eq:Major_Append} is sufficient for all the first $n_1$ conditions in \eqref{eq:Major_Append} to hold:
Assume that
\begin{align}
    r_1^{n_1} &\leq \prod_{j=1}^{n_1} \sigma_j \,,
\end{align}
holds true. This condition can be rewritten as
\begin{align}
    r_1 \leq \sqrt[n_1]{\prod_{j=1}^{n_1} \sigma_j} \,.
\end{align}
Using the fact that the geometric-mean of a set of size $n_1$ cannot be larger than the geometric-mean of its largest $q$ values ($q=1,...,n_1-1$), we have
\begin{align}
    r_1 \leq \sqrt[n_1]{\prod_{j=1}^{n_1} \sigma_j} \leq \sqrt[q]{\prod_{j=1}^{q} \sigma_j} \,, \quad q=1,...,n_1-1 \,,
\end{align}
or equivalently,
\begin{align}
    r_1^q \leq \prod_{j=1}^{q} \sigma_j \,, \quad q=1,...,n_1-1 \,,
\end{align}
which are exactly equivalent to the first $n_1$ conditions of \eqref{eq:Major_Append}.

\textbf{Inductive step:} Assume that the conditions \eqref{eq:WeylMultiplicityCond}-\eqref{eq:WeylMultiplicityCondEqual} guarantee that the first $\sum_{m=1}^{k-1} n_m$ conditions in \eqref{eq:Major_Append} are satisfied.
We shall prove that all the first $\sum_{m=1}^k n_m$ conditions in \eqref{eq:Major_Append} hold true.
We shall now show that if the $\sum_{m=1}^{k} n_m$ condition in \eqref{eq:Major_Append} holds true (which is the $k$-th condition in \eqref{eq:WeylMultiplicityCond}),
then so do the $n_k-1$ conditions that precede it.
Let $q$ be some integer between $1$ and $M$, and assume that
\begin{align}
    \prod_{m=1}^q r_m^{n_m} &\leq \prod_{j=1}^{\sum_{m=1}^q {n_m}} \sigma_j \,,
\end{align}
which can be equivalently written as
\begin{align}
\label{eq:r^n<=gamma*Pi_lambda_i}
    r_q^{n_q} \leq \gamma \prod_{j = \left( \sum_{m=1}^{q-1} {n_m} \right) + 1}^{\sum_{m=1}^q {n_m}} \sigma_j \,,
\end{align}
where $\gamma$ is defined as
\begin{align}
    \gamma \triangleq \prod_{m=1}^{q-1} r_m^{-n_m} \prod_{j=1}^{\sum_{m=1}^{q-1} {n_m}} \sigma_j
\end{align}
and is equal or larger than $1$.

Let $l$ be some integer between $1$ and $n_q-1$, and assume, to contradict, that
\begin{align}
    \left( \prod_{m=1}^{q-1} r_m^{n_m} \right) r_q^l &> \prod_{j=1}^{\left( \sum_{m=1}^{q-1} n_m \right) + l} \sigma_j \,,
\end{align}
or equivalently,
\begin{align}
\label{eq:r^q>gamma*Pi_lambda_i}
    r_q^l > \gamma \prod_{j = \left( \sum_{m=1}^{q-1} {n_m} \right) + 1}^{\left( \sum_{m=1}^{q-1} n_m \right) + l} \sigma_j \,.
\end{align}
Dividing \eqref{eq:r^n<=gamma*Pi_lambda_i} by \eqref{eq:r^q>gamma*Pi_lambda_i} gives rise to
\begin{align}
    r_q^{n_q-l} < \prod_{j = \left( \sum_{m=1}^{q-1} n_m \right) + l + 1}^{\sum_{m=1}^q n_m} \sigma_j \,,
\end{align}
which can be written as
\begin{align}
    r_q < \sqrt[n_q-l]{\prod_{j = \left( \sum_{m=1}^{q-1} n_m \right) + l + 1}^{\sum_{m=1}^q n_m} \sigma_j} \,.
\end{align}
Using the fact that the geometric-mean of the smallest $n_q-l$ values of a set of positive numbers is equal or smaller than the geometric mean of its $l$ largest values,
and the fact that $\gamma \geq 1$, we have
\begin{align}
    r_q < \sqrt[n_q-l]{\prod_{j = \left( \sum_{m=1}^{q-1} n_m \right) + l + 1}^{\sum_{m=1}^q n_m} \sigma_j}
    \leq \sqrt[l]{\gamma \prod_{j = \left( \sum_{m=1}^{q-1} n_m \right) + 1}^{\left( \sum_{m=1}^{q-1} n_m \right) + l} \sigma_j} \,.
\end{align}
i.e.,
\begin{align}
    r_q^l < \gamma \prod_{j = \left( \sum_{m=1}^{q-1} n_m \right) + 1}^{\left( \sum_{m=1}^{q-1} n_m \right) + l} \sigma_j \,,
\end{align}
in contradiction to \eqref{eq:r^q>gamma*Pi_lambda_i}.
\hfill $\blacksquare$

}{}

\bibliographystyle{IEEEtran}

% Generated by IEEEtran.bst, version: 1.12 (2007/01/11)

\begin{IEEEbiographynophoto}{Anatoly Khina}
    was born in Moscow, USSR, on September 10, 1984.
    He received the B.Sc.\ and M.Sc.\ degrees in electrical engineering (both \emph{summa cum laude}) from Tel Aviv University in 2006 and 2010, respectively, where he is currently working towards completing his  Ph.D.\ degree. 
    His research interests include information theory, signal processing, digital communications and matrix analysis.

    In parallel to his studies, Anatoly has been working as an engineer in various algorithms, software and hardware R\&D positions. 
    He is a recipient of the Rothschild fellowship, Clore scholarship, Trotsky Award, Weinstein Prize for research in signal processing, and the first prize for outstanding research work of the Advanced Communication Center, Israel.
\end{IEEEbiographynophoto}

\begin{IEEEbiographynophoto}{Idan Livni}
    was born in Tel-Aviv, Israel, on July 30, 1984. He received the B.Sc.\ and M.Sc.\ degrees (both \emph{cum laude}) in electrical engineering from Tel Aviv University in 2006 and 2013, respectively. 
    His research interests are in digital communications, signal processing and information theory.
\end{IEEEbiographynophoto}

\begin{IEEEbiographynophoto}{Ayal Hitron}
    received a B.Sc.\ (\emph{summa cum laude}) in electrical engineering and a B.A.\ in Physics (\emph{summa cum laude}), both from the Technion~--- Israel Institute of Technology in 2003, and an M.Sc.\ in electrical engineering (\emph{summa cum laude}), from Tel Aviv University in 2012.

    Ayal is the recipient of a bronze medal in the International Physics Olympiad (IPhO), the Knesset award for outstanding undergraduate student achievements, and the Weinstein Prize for research in signal processing.
\end{IEEEbiographynophoto}

\begin{IEEEbiographynophoto}{Uri Erez}
    (M'09) was born in Tel-Aviv, Israel, on October 27, 1971.
    He received the B.Sc.\ degree in mathematics and physics and the M.Sc.\ and
    Ph.D.\ degrees in electrical engineering from Tel-Aviv University in 1996,
    1999, and 2003, respectively. 
    During 2003--2004, he was a Postdoctoral Associate at the Signals, Information and Algorithms Laboratory at the
    Massachusetts Institute of Technology (MIT), Cambridge. 
    Since 2005, he has been with the Department of Electrical Engineering--Systems at Tel-Aviv
    University. 
    His research interests are in the general areas of information theory and digital communications. 
    He served in the years 2009--2011 as Associate Editor for Coding Techniques for the 
    {\sc IEEE Transactions on Information Theory}.
\end{IEEEbiographynophoto}

\end{document}